\documentclass[a4paper,UKenglish,cleveref, autoref, thm-restate, pdfa]{lipics-v2021}

\usepackage{enumitem}
\usepackage[numbers,sort]{natbib}
\usepackage{apptools}
\usepackage{longtable}
\AtAppendix{\counterwithin{lemma}{section}}
\AtAppendix{\counterwithin{definition}{section}}

\bibliographystyle{plainurl}

\title{Tuple Interpretations for Higher-Order Complexity}

\author{Cynthia Kop}
{Department of Software Science, Radboud University Nijmegen, The Netherlands \and \url{https://www.cs.ru.nl/~cynthiakop}}
{c.kop@cs.ru.nl}
{https://orcid.org/0000-0002-6337-2544}{}

\author{Deivid Vale}
{Department of Software Science, Radboud University Nijmegen, The Netherlands \and \url{https://www.cs.ru.nl/~deividvale}}
{deividvale@cs.ru.nl}
{https://orcid.org/0000-0003-1350-3478}{}

\authorrunning{C. Kop and D. Vale} 

\Copyright{Cynthia Kop and Deivid Vale} 

\ccsdesc{Theory of computation~Equational logic and rewriting}

\keywords{Complexity, higher-order term rewriting, many-sorted term rewriting, polynomial interpretations, weakly monotonic algebras}


\funding{The authors are supported by the NWO TOP project ``ICHOR'', NWO 612.001.803/7571 and the NWO VIDI project ``CHORPE'', NWO VI.Vidi.193.075.}

\acknowledgements{}

\nolinenumbers 

\hideLIPIcs


\usepackage{amsmath,amssymb,amsfonts,amscd,amsbsy,mathtools}
\usepackage{microtype}

\usepackage{quiver}

\usepackage{xcolor-material}
\usepackage{stmaryrd}
\SetSymbolFont{stmry}{bold}{U}{stmry}{m}{n}


\newcommand{\signature}{\mathcal{F}}             
\newcommand{\F}{\mathcal{F}}                     
\newcommand{\var}{\mathcal{X}}
\newcommand{\terms}{T(\signature,\var)}
\newcommand{\termsfo}{T_{fo}(\signature,\var)}

\newcommand{\fvars}[1]{\mathtt{fv}(#1)}    

\newcommand{\dom}[1]{\mathtt{dom}(#1)}      
\newcommand{\sortset}{\mathcal{S}}
\newcommand{\simpletypeset}{\mathcal{S}\!\mathcal{T}}
\newcommand{\rules}{\mathcal{R}}

\newcommand{\Nat}{\mathbb{N}}

\newcommand{\trs}{{(\signature,\rules)}}

\newcommand{\sAr}[1]{\mathsf{#1}} 
\newcommand{\sTp}[1]{\mathsf{#1}} 

\newcommand{\pair}[1]{\langle #1 \rangle}

\newcommand{\interpret}[1]{\llbracket #1\rrbracket}

\newcommand{\arrz}{\to_\rules}

\newcommand{\dht}[1]{\mathtt{dh}_\rules(#1)}

\newcommand{\arrtype}{\Rightarrow}
\newcommand{\arrfunc}{\Longrightarrow}

\newcommand{\nat}{\sTp{nat}}
\newcommand{\lst}{\sTp{list}}
\newcommand{\real}{\sTp{real}}

\newcommand{\zero}{\sAr{0}}
\newcommand{\nil}{\sAr{nil}}
\newcommand{\suc}{\sAr{s}}
\newcommand{\cons}{\sAr{cons}}
\newcommand{\minus}{\sAr{minus}}
\newcommand{\quot}{\sAr{quot}}

\newcommand{\add}{\mathrel{\sAr{\oplus}}}

\newcommand{\append}{\sAr{append}}
\newcommand{\rev}{\sAr{rev}}
\newcommand{\sumList}{\sAr{sum}}

\newcommand{\map}{\sAr{map}}
\newcommand{\foldl}{\sAr{foldl}}

\newcommand{\listvar}{q}

\newcommand{\sortinterpret}[1]{A_{#1}}
\newcommand{\typeinterpret}[1]{\mathcal{M}_{#1}}
\newcommand{\wmatypes}{\overrightarrow{\typeinterpret{\atype}}}
\newcommand{\weakset}{\text{wm}}
\newcommand{\SM}{\mathcal{M}}
\newcommand{\SMA}{\mathcal{A}_{\SM}}
\newcommand{\ainterpret}[1]{\llbracket#1\rrbracket_\alpha}
\newcommand{\funcinterpret}[1]{\mathcal{J}_{#1}}
\newcommand{\varinterpret}{\alpha}

\newcommand{\sortgr}[1]{>_{#1}}
\newcommand{\sortgeq}[1]{\geq_{#1}}
\newcommand{\typegr}[1]{\sqsupset_{#1}}
\newcommand{\typegeq}[1]{\sqsupseteq_{#1}}
\newcommand{\makesm}[1]{\mathit{MakeSM}_{#1}}

\newcommand{\nul}[1]{\mathtt{0}_{#1}}
\newcommand{\costof}[1]{\mathtt{costof}_{#1}}
\newcommand{\addcost}[1]{\mathtt{addc}_{#1}}

\newcommand{\abs}[2]{\lambda #1.#2}
\newcommand{\app}[2]{#1 \cdot #2}

\newcommand{\asort}{\iota}
\newcommand{\bsort}{\kappa}
\newcommand{\atype}{\sigma}
\newcommand{\btype}{\tau}
\newcommand{\ctype}{\rho}
\newcommand{\afun}{\mathsf{f}}

\newcommand{\avar}{x}
\newcommand{\bvar}{y}
\newcommand{\cvar}{z}
\newcommand{\aListVar}{xs}
\newcommand{\bListVar}{ys}

\newcommand{\aFuncVar}{F}
\newcommand{\bFuncVar}{G}
\newcommand{\cFuncVar}{H}
\newcommand{\aterm}{s}
\newcommand{\bterm}{t}
\newcommand{\cterm}{u}

\newcommand{\typecount}[1]{K[#1]}

\newcommand{\cost}{{\sAr{c}}}
\newcommand{\size}{{\sAr{s}}}
\newcommand{\leng}{{\sAr{l}}}
\newcommand{\mmax}{{\sAr{m}}}

\newcommand{\N}{\mathbb{N}}

\newcommand{\secshort}{\S}
\newcommand{\seclong}{Section~}



\begin{document}

\maketitle

\begin{abstract}
    We develop a class of algebraic interpretations for many-sorted and higher-order term rewriting systems
    that takes type information into account.
    Specifically, base-type terms are mapped to \emph{tuples} of natural numbers
    and higher-order terms to functions between those tuples.
    Tuples may carry information relevant to the type;
    for instance, a term of type $\nat$ may be associated to a pair
    $\pair{\mathsf{cost}, \mathsf{size}}$ representing its evaluation cost and size.
    This class of interpretations results in a more fine-grained notion of complexity than runtime or derivational complexity,
    which makes it particularly useful to obtain complexity bounds for higher-order rewriting systems.

    We show that rewriting systems compatible with tuple interpretations admit finite bounds on derivation height.
    Furthermore, we demonstrate how to mechanically construct tuple interpretations and how to orient $\beta$ and $\eta$ reductions within our technique.
    Finally, we relate our method to runtime complexity and prove that specific interpretation shapes imply certain runtime complexity bounds.
\end{abstract}

\section{Introduction}

Term rewriting systems (TRSs) are a conceptually simple but powerful computational model.
It is simple because computation is modelled straightforwardly by step-by-step applications of transformation rules.
It is powerful in the sense that any algorithm can be expressed in it (Turing Completeness).
These characteristics make TRSs a formalism well-suited as an abstract analysis language,
for instance to study properties of functional programs.
We can then define specific analysis techniques for each property of interest.

One such property is \emph{complexity}.  The study of complexity has long been a topic of interest
in term rewriting
\cite{bon:cic:mar:tou:98,hof:lau:89,hof:92,ava:mos:08,nao:moser:08,mos:sch:wal:08},
as it both holds relations to computational complexity \cite{arai:moser:05,bon:cic:mar:tou:98,bonfante:et-al:01}
and resource analysis \cite{ava:lag:mos:15,bro:emm:fal:fuh:gie:14}
and is highly challenging.
Most commonly studied are the notions of runtime and derivational complexity,
which capture the number of steps that may be taken when starting with terms of a given size and shape.
In essence, this is a form of resource analysis which abstracts away from the true machine cost of reduction in a rewriting engine but still has a close relation to it
\cite{avanzini:moser:10,dal-lago:martini:10,accattoli:dal-lago:16,bonfante:et-al:01}.

These notions do not obviously extend to the \emph{higher-order} setting, however.
In higher-order term rewriting, 
a term may represent a function;
yet, the size of a function does not 
tell us much
about its behaviour.
Rather, properties such as ``the function is size-increasing'' may be more relevant.
Clearly a more sophisticated complexity notion is needed.

In this paper we will propose a new method to analyse many-sorted and higher-order term rewriting
systems, which can be used as a foundation to obtain a variety of complexity results.
This method is based on \emph{interpretations} in a monotonic algebra as also used for
termination analysis \cite{pol:96,fuh:kop:12}, where a term of function type is mapped to a
monotonic function.
Unlike \cite{pol:96,fuh:kop:12}, we map a term of base type not to an integer,
but rather to a vector of integers describing different values of interest in the term.
This will allow us to reason separately about---for instance---the length of a list and the size
of its greatest element, and to describe the behaviour of a term of function type in a fine-grained
way.

This method is also relevant for termination analysis, since we essentially generalise and extend
\emph{matrix interpretations} \cite{mos:sch:wal:08} to higher-order rewriting.
In addition, the technique may add some power to the arsenal
of a complexity or termination analysis tool for first-order term rewriting; in particular
\emph{many-sorted} term rewriting due to the way we use type information.

\subparagraph*{A note on terminology.} We use the word ``complexity'' as it is commonly used in
term rewriting: a worst-case measure of the number of steps in a reduction.  In this paper we do
not address the question of true resource use or connections to computational complexity.  In
particular, we do not address the true cost of beta-reduction.  This is left to future work.

\subparagraph*{Outline of the paper.}
We will start by recalling the definition of and fixing notation for many-sorted and higher-order
term rewriting (\secshort\ref{sec:preliminaries}).
Then, we will define tuple interpretations for many-sorted first-order rewriting to explore the
idea (\secshort\ref{sec:fo-tp-int}), discuss our primary objective of \emph{higher-order} tuple
interpretations (\secshort\ref{sec:ho-tp-int}), and relate our method to runtime complexity
(\secshort\ref{sec:bounds}).  Finally, we will discuss related work (\secshort\ref{sec:related})
and end with conclusions and future work (\secshort\ref{sec:disc}).

\section{Preliminaries}\label{sec:preliminaries}

We assume the reader is familiar with first-order term rewriting and $\lambda$-calculus.
In this section, we fix notation and discuss the higher-order rewriting format used in the paper.

\subsection{First-Order Many-Sorted Rewriting}\label{sec:prelim:fo}
Many-sorted term rewriting \cite{ohl:02} is in principle the same as first-order term rewriting.
The only difference is that we impose a sort system and limit interest to well-sorted terms.

Formally,
we assume given a non-empty set of \emph{sorts} $\sortset$.
A \emph{many-sorted signature} consists of a set $\signature$ of function symbols together with
two functions that map each symbol to a finite sequence of \emph{input sorts} and an \emph{output sort}.
Fixing a many-sorted signature, we will denote $\afun :: [\asort_1 \times \dots \times \asort_k] \arrtype \bsort$
if $\afun \in \signature$ and $\afun$ has input sorts $\asort_1, \dots, \asort_k$ and output sort $\kappa$.
We also assume given a set $\var = \bigcup_{\asort \in \sortset} \var_\asort$ of variables disjoint from $\signature$,
such that all $\var_\asort$ are pairwise disjoint.
The set $\termsfo$ of \emph{many-sorted terms} is inductively defined as the
set of expressions $\aterm$
such that $\aterm :: \bsort$ can be derived for some sort $\bsort$ using the clauses:
\[
    \begin{array}{lcl}
        \avar :: \kappa\ \text{if}\ \avar \in \var_\kappa & \quad\quad &
        \afun(\aterm_1,\dots,\aterm_k) :: \bsort\ \text{if}\ \afun :: [\asort_1 \times \dots \times \asort_k] \arrtype \bsort\ \text{and each}\ \aterm_i :: \asort_i \\
    \end{array}
\]
If $\aterm :: \bsort$, we call $\bsort$ the sort of $\aterm$.
Substitutions, rewrite rules and 
reduction
are defined as usual in first-order term rewriting,
except that substitutions are sort-preserving (each variable is mapped to a term of the
same sort) and both sides of a rule have the same sort.
We omit these definitions, since they are
a special case of the 
higher-order definitions
in \seclong\ref{sec:prelim:ho}.

\begin{example}\label{ex:append-reverse}
    We fix $\nat$ and $\lst$ for the sorts of natural numbers and lists of natural numbers,
    respectively; and a signature with the symbols:
    $\zero :: \nat$ (this is shorthand notation for $[] \arrtype \nat$),
    $\suc :: [\nat] \arrtype \nat$,
    $\nil :: \lst$,
    $\cons :: [\nat \times \lst] \arrtype \lst$,
    $\rev :: [\lst] \arrtype \lst$,
    $\sumList :: [\lst] \arrtype \nat$,
    $\append :: [\lst \times \lst] \arrtype \lst$,
    and $\add :: [\nat \times \nat] \arrtype \nat$.
    The rules below compute well-known functions over lists and numbers.
    We follow the convention of using infix notation for $\cons$ and $\add$, i.e., $\cons(\avar,
        \aListVar)$ is written $\avar : \aListVar$ and $\sAr{\oplus}(\avar, \bvar)$ is written
    $\avar \add \bvar$.
    \begin{align*}
        \avar \add \zero                      & \to \avar                                 & \sumList(\nil)              & \to \zero                                  \\
        \avar \add \suc(\bvar)                & \to \suc(\avar \add \bvar)                & \sumList(\avar : \aListVar) & \to \sumList(\aListVar) \add \avar         \\
        \append(\nil, \aListVar)              & \to \aListVar                             & \rev(\nil)                  & \to \nil                                   \\
        \append(\avar : \aListVar, \bListVar) & \to \avar : \append(\aListVar, \bListVar) & \rev( x : \aListVar)        & \to \append(\rev(\aListVar), \avar : \nil)
    \end{align*}
\end{example}

\subsection{Higher-Order Rewriting}\label{sec:prelim:ho}

For higher-order rewriting, we will use \textit{algebraic functional systems} (AFS), a slightly simplified form of a
higher-order language introduced by Jouannaud and Okada~\cite{jou:oka:91}.
This choice gives an easy presentation, as it combines algebraic definitions in a first-order style with a function mechanism using $\lambda$-abstractions and term applications.

Given a non-empty set of \emph{sorts} $\sortset$, the set $\simpletypeset$ of \emph{simple types} (or just \emph{types}) is given by: (a) $\sortset \subseteq \simpletypeset$; (b) if $\sigma,\tau \in \simpletypeset$ then $\sigma \arrtype \tau \in \simpletypeset$.
Types are denoted by $\sigma, \tau$ and sorts by $\iota, \kappa$.
A \emph{higher-order signature} consists of a set $\F$ of function symbols together with two functions
that map each symbol to a finite sequence of \emph{input types} and an \emph{output type};
fixing a signature,
we denote this type information $\sAr{f} :: [\sigma_1 \times \dots \times \sigma_k] \arrtype \tau$.
A function symbol is said to be higher-order if at least one of its input types or its output type is an
arrow type.

We also assume given a set $\var = \bigcup_{\atype \in \simpletypeset} \var_\atype$ of variables disjoint from $\F$ (and pairwise disjoint) so that each $\var_\atype$ is countably infinite.
The set $\terms$ of terms is inductively defined as the set of expressions whose type can be derived using the following clauses:
\[
    \begin{array}{llcll}
        \avar :: \sigma                                & \text{if}\ x \in \var_\sigma                                                  & \quad\quad &
        (\abs{\avar}{\aterm}):: \atype \arrtype \btype & \text{if}\ x \in \var_\atype\ \text{and}\ s :: \btype                                                                             \\
        (\aterm \, \bterm) :: \tau                     & \text{if}\ s :: \sigma \arrtype \tau\ \text{and}\  :: \sigma                  & \quad\quad &
        \sAr{f}(s_1, \dots, s_k) :: \btype             & \text{if}\ \sAr{f} :: [\atype_1 \times \dots \times \atype_k] \arrtype \btype                                                     \\
                                                       &                                                                               &            &   & \text{and each}\ s_i :: \atype_i \\
    \end{array}
\]
If $s :: \sigma$, we say that $\sigma$ is the type of $s$.
It is easy to see that each term has a unique type.

As in the $\lambda$-calculus, a variable $\avar$ is \emph{bound} in a term if it occurs in the scope of an abstractor $\abs{x}{}$;
it is \emph{free} otherwise.
A term is called \emph{closed} if it has no free variables and \emph{ground} if it also has no bound variables.
Term equality is modulo $\alpha$-conversion and bound variables are renamed if necessary.
Application is left-associative and has precedence over abstractions;
for example, $\lambda x.s \, t \, u$ reads $\lambda x.((s \, t) \, u)$.
A substitution is a finite, type-preserving mapping $\gamma : \var \to \terms$, typically denoted
$[x_1:=s_1,\dots,x_n:=t_n]$.  Its \emph{domain} $\{x_1,\dots,x_n\}$ is denoted $\dom{\gamma}$.
A substitution $\gamma$ is applied to a term $s$, notation $s\gamma$, by renaming all bound
variables in $s$ to fresh variables and then replacing each $x \in \dom{\gamma}$ by $\gamma(x)$.
Formally:
\[
    \begin{array}{rcllcrcll}
        \avar\gamma                & = & \gamma(\avar)                    & \text{if}\ \avar \in \dom{\gamma}    & \quad &
        (s\, t)\gamma              & = & (s\gamma)\, (t\gamma)                                                             \\
        \avar\gamma                & = & \avar                            & \text{if}\ \avar \notin \dom{\gamma} &       &
        \afun(s_1,\dots,s_k)\gamma & = & \afun(s_1\gamma,\dots,s_k\gamma)                                                  \\
                                   &   &                                  &                                      &       &
        (\abs{x}{s})\gamma         & = & \abs{y}{(s([x:=y]\gamma))}       & \text{for}\ y\ \text{fresh}                    \\
    \end{array}
\]
Here, $[x:=y]\gamma$ is the substitution that maps $x$ to $y$ and all variables in $\dom{\gamma}$
other than $x$ to $\gamma(x)$.  The result of $s\gamma$ is unique modulo $\alpha$-renaming.

A \textit{rewriting rule} is a pair of terms $\ell \to r$ of the same type such that all free variables of $r$ also occur in $\ell$.
Given a set of rewriting rules $\rules$, the rewrite relation induced by $\rules$ on the set $\terms$ is the smallest monotonic relation
that is stable under substitution and contains both all elements of $\rules$ and $\beta$-reduction.
That is, it is inductively generated by:
\[
    \begin{array}{rcllcrcll}
        (\abs{x}{s})\ t      & \arrz & s[x:=t]              &                                  & \quad &
        \abs{x}{s}           & \arrz & \abs{x}{t}           & \text{if}\ s \arrz t                       \\
        \ell\gamma           & \arrz & r\gamma              & \text{if}\ \ell \to r \in \rules &       &
        s\ u                 & \arrz & t\ u                 & \text{if}\ s \arrz t                       \\
        \afun(\dots,s,\dots) & \arrz & \afun(\dots,t,\dots) & \text{if}\ s \arrz t             &       &
        u\ s                 & \arrz & u\ t                 & \text{if}\ s \arrz t                       \\
    \end{array}
\]
Note that we do not, by default, include the common $\eta$-reduction rule scheme (``$\abs{x}{s\ x}
\arrz s$ if $x$ is not a free variable in $s$'').  We avoid this because not all sources consider it,
and it is easy to add by including, for all types $\sigma,\tau$, a rule $\abs{x}{F\ x} \to F$ with
$F \in \var_{\sigma \arrtype \tau}$ in $\rules$.

An \emph{algebraic functional system} (AFS) is the combination of a set of terms $\terms$ and a rewrite relation
$\arrz$ over $\terms$.  An AFS is typically given by supplying
$\signature$ and $\rules$.

A \emph{many-sorted term rewriting system} (TRS), as discussed in \seclong\ref{sec:prelim:fo}, is a pair
$(\termsfo,\arrz)$ where $\F$ is a many-sorted signature and $\arrz$ a rewrite relation over $\termsfo$.
That is, it is essentially an AFS where we only consider first-order terms.

\begin{example}\label{ex:ho_trs}
    Following common examples in higher-order rewriting, we will use (as a running example) the AFS
    $\trs_{\mathtt{fold}}$, with symbols
    $\nil :: \lst$,
    $\cons :: [\nat \times \lst] \arrtype \lst$,
    $\map :: [ (\nat \arrtype \nat) \times \lst] \arrtype \lst$,
    $\foldl :: [(\nat \arrtype \nat \arrtype \nat) \times \nat \times \lst] \arrtype \nat$,
    and rules:
    \begin{align*}
        \foldl(\aFuncVar, \cvar, \nil)              & \to \cvar                                                     & \map(\aFuncVar,\nil)               & \to \nil                                              \\
        \foldl(\aFuncVar, \cvar, \avar : \aListVar) & \to \foldl(\aFuncVar, (\aFuncVar \ \cvar \ \avar), \aListVar) & \map(\aFuncVar, \avar : \aListVar) & \to (\aFuncVar \, \avar) : \map(\aFuncVar, \aListVar)
    \end{align*}
\end{example}

\subsection{Functions and orderings}

An extended well-founded set is a tuple $(A,>,\geq)$ such that
$>$ is a well-founded ordering on $A$;
$\geq$ is a quasi-ordering on $A$;
$x > y$ implies $x \geq y$; and
$x > y \geq z$ implies $x > z$.
Hence, it is permitted, but not required, that $\geq$ is the reflexive closure of $>$.

For sets $A,B$, the notation $A \arrfunc B$ denotes the set of functions from $A$ to $B$.
Function equality is extensional: for $f,g \in A \arrfunc B$ we say $f = g$ iff $f(x) =
    g(x)$ for all $x \in A$.

If $(A,>,\geq)$ and $(B,\succ,\succeq)$ are extended well-founded sets,
we say that $f \in A \arrfunc B$ is \emph{weakly monotonic} if $x \geq y$ implies $f(x) \succeq f(y)$.
In addition, if $(A_1,>_1,\geq_1),\dots,(A_n,>_n,\geq_n)$ are all well-founded sets, we say that
$f \in A_1 \times \dots \times A_n \arrfunc B$ is weakly monotonic if we have
$f(x_1,\dots,x_n) \succeq f(y_1,\dots,y_n)$ whenever $x_i \geq_i y_i$ for all $1 \leq i \leq n$.
We say that $f$ is \emph{strict} in argument $j$ if $x_j >_j y_j$ (and also $x_i \geq_i y_i$ for
all $i$) implies $f(x_1,\dots,x_n) \succ f(y_1,\dots,y_n)$.

We say that $f \in A_1 \times \dots \times A_n \arrfunc B$ is \emph{strongly monotonic} if $f$ is
weakly monotonic and strict in all its arguments (and similar for $f \in A \arrfunc B$).

\section{First-Order tuple interpretation}\label{sec:fo-tp-int}

In this section, we will introduce the concept of tuple interpretations for many-sorted term rewriting.
This is the core methodology which the higher-order theory is built on top of.
This theory also has value by itself as a first-order termination and complexity technique.

\smallskip
It is common in the rewriting literature to use termination proofs to assess the \textit{difficulty} of rewriting a term to normal form~\cite{ava:mos:08, hof:lau:89}.
The intuition comes from the idea that by ordering rewriting rules in descending order we gauge the order of magnitude of reduction.
The same principle applies for syntactic~\cite{nao:moser:08,hof:92,moser:06} and semantic~\cite{hof:lau:89,hof:01,mos:sch:wal:08} termination proofs.

On the semantic side there is a natural strategy:
given an extended well-founded set $\mathcal{A} = (A, >, \geq)$ find an interpretation from terms
to elements of $A$ so that $\interpret{s} > \interpret{t}$ whenever $s \arrz t$.  (This can
typically be done by showing that $\interpret{\ell} > \interpret{r}$ for all rules $\ell \to r$).
This interpretation holds information about the complexity of $\trs$ since the maximum length of a
reduction starting in a term $s$ is bounded by number of $>$ steps that may be done starting in
$\interpret{s}$.  If $\interpret{s}$ is a natural number, this gives a bound immediately.

In the setting of many-sorted term rewriting, we may formally define this as follows.

\begin{definition}\label{def:mult-sorted-alg}
    Let $\sortset$ be a set of sorts and $\signature$ an $\sortset$-signature.
    A many-sorted monotonic algebra $\mathcal{A}$ consists of a family of extended well-founded sets
    $(A_\asort, \sortgr{\asort},\sortgeq{\asort})_{\asort \in \sortset}$ together with an interpretation $\funcinterpret{}$
    which associates to each $\sAr{\afun :: [\asort_1 \times \dots \times \asort_k] \arrtype \bsort}$ in
    $\signature$ a strongly monotonic function $\funcinterpret{\afun} \in A_{\asort_1} \times \dots
        \times A_{\asort_k} \arrfunc A_{\bsort}$.
    Let $\alpha$ be a function that maps variables of sort $\asort$ to elements of $A_\asort$.
    We extend $\funcinterpret{}$ to a function $\ainterpret{\cdot}$ that maps terms of sort $\asort$ to elements of $A_\asort$,
    by letting
    $\ainterpret{x} = \alpha(x)$ if $x$ is a variable of sort $\asort$, and
    $\ainterpret{\afun(\aterm_1, \dots, \aterm_k)} = \funcinterpret{\afun}(\ainterpret{\aterm_1}, \dots, \ainterpret{\aterm_k})$.
    We say that a TRS $\trs$ is \emph{compatible} with $\mathcal{A}$ if $\ainterpret{\ell} > \ainterpret{r}$
    for all $\varinterpret$ and all $\ell \to r \in \rules$.
\end{definition}

We will generally omit the subscript $\alpha$ when it is clear from
context, 
writing
$\interpret{s}$ instead of $\ainterpret{s}$.
In examples, we may write something like $\interpret{s} = x + y$ to mean $\ainterpret{s} =
    \varinterpret(x) + \varinterpret(y)$.

\begin{theorem}\label{thm:sortinterpretworks}
    If $\trs$ is compatible with $\mathcal{A}$ then for all $\varinterpret$:
    $\ainterpret{\aterm} > \ainterpret{\bterm}$ whenever $\aterm \arrz \bterm$.
\end{theorem}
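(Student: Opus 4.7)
The plan is to proceed by induction on the derivation of $s \arrz t$ according to the inductive clauses that generate the first-order rewrite relation (the base clause for rule application and the congruence clause for function symbols). Before starting the induction, I would establish a substitution lemma as a separate ingredient.

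\textbf{Substitution lemma.} For every term $s$, every substitution $\gamma$, and every variable assignment $\alpha$,
\[
  \ainterpret{s\gamma} \;=\; \llbracket s \rrbracket_{\alpha'},
\]
where $\alpha'$ is defined by $\alpha'(x) = \ainterpret{\gamma(x)}$ if $x \in \dom{\gamma}$ and $\alpha'(x) = \alpha(x)$ otherwise. This is a routine structural induction on $s$: for a variable $x$, both sides reduce to $\alpha'(x)$ by definition of $\ainterpret{\cdot}$; for a function application $\afun(s_1,\dots,s_k)$, applying the definition of $\ainterpret{\cdot}$ and the induction hypothesis to each $s_i$ yields the result immediately.

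\textbf{Main induction.} In the base case $s = \ell\gamma \arrz r\gamma = t$ with $\ell \to r \in \rules$. Compatibility of $\trs$ with $\mathcal{A}$ gives $\llbracket \ell \rrbracket_{\alpha'} \sortgr{\bsort} \llbracket r \rrbracket_{\alpha'}$ for \emph{every} assignment $\alpha'$; instantiating with the $\alpha'$ supplied by the substitution lemma and then applying the lemma to both $\ell$ and $r$ yields $\ainterpret{s} > \ainterpret{t}$. In the congruence case $s = \afun(s_1,\dots,s_i,\dots,s_k) \arrz \afun(s_1,\dots,s_i',\dots,s_k) = t$ with $s_i \arrz s_i'$, the induction hypothesis gives $\ainterpret{s_i} \sortgr{\asort_i} \ainterpret{s_i'}$. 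Since $\funcinterpret{\afun}$ is strongly monotonic, it is in particular strict in argument $i$ while being weakly monotonic in all the other positions (which are unchanged, so trivially satisfy $\sortgeq{\asort_j}$). Hence $\funcinterpret{\afun}(\ainterpret{s_1},\dots,\ainterpret{s_i},\dots,\ainterpret{s_k}) \sortgr{\bsort} \funcinterpret{\afun}(\ainterpret{s_1},\dots,\ainterpret{s_i'},\dots,\ainterpret{s_k})$, which is exactly $\ainterpret{s} > \ainterpret{t}$.

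The argument is entirely routine; the only mildly delicate point is the substitution lemma, where one must be careful that the auxiliary assignment $\alpha'$ agrees with the interpretation of $\gamma$ on $\dom{\gamma}$ and with $\alpha$ elsewhere, so that both the rule-application step and the closure under arbitrary contexts come out properly. In the higher-order sections this same structure will have to be redone under $\beta$ and under abstraction, which is where the genuine difficulty of the paper lies; but in this first-order setting there is no real obstacle.
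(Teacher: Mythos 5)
Your proposal is correct and follows essentially the same strategy as the paper's own (detailed) argument: the paper's main-text sketch says ``induction on the size of $s$'' but the full appendix proof (given for the higher-order generalization, Theorem~\ref{thm:typeinterpretworks}, of which this theorem is a stated special case) proceeds exactly as you do — induction on the derivation of $s \arrz t$, with a separately proved substitution lemma (Lemma~\ref{lemma:sbst:lemma}) covering the rule-application case and strong monotonicity of $\funcinterpret{\afun}$ covering the congruence case. The only minor point worth spelling out — which you implicitly rely on — is that the unchanged arguments in the congruence case satisfy $\sortgeq{\asort_j}$ because $\geq$ is a quasi-ordering and hence reflexive; that is part of the paper's definition of an extended well-founded set, so no gap.
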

\begin{proof}[Proof Sketch]
    By induction on the size of $s$ using strong monotonicity of each $\funcinterpret{\afun}$.
\end{proof}

A common notion in the literature on complexity of term rewriting is \emph{derivation height}:
\[
    \dht{t} := \max \{ n \in \Nat \mid \exists s. \, t \to^n s \}.
\]
Intuitively, $\dht{t}$ describes the worst-case number of steps for all possible reductions starting in $t$.
If $\trs$ is terminating, then $\dht{\cdot}$ is a total function.
If $(A_\asort,>_\asort) = (\N,>)$ then we easily see that $\dht{t} \leq \interpret{t}$ for any term $t : \asort$.
Hence, $\interpret{\cdot}{}$ can be used to bound the derivation height function.
However, this 
may give a severe overestimation,
as demonstrated below.

\begin{example}\label{ex:poly_overstimation}
    Let $\trs_{\sAr{ab}}$ be the TRS with only a rule $\sAr{a}(\sAr{b}(x)) \to \sAr{b}(\sAr{a}(x))$
    and signature $\sAr{a},\sAr{b} : [\sAr{string}] \arrtype \sAr{string}$ and $\sAr{\epsilon}: \sAr{string}$.
    We can prove termination by the following interpretation:
    \begin{align*}
        \interpret{\sAr{a}(x)} & = 2*x & \interpret{\sAr{b}(x)} & = x + 1 & \interpret{\sAr{\epsilon}} & = 0
    \end{align*}
    Indeed, we have $\interpret{\ell} > \interpret{r}$ for the only rule as
    $\interpret{\sAr{a}(\sAr{b}(x))} = 2*x+2 > 2*x+1 = \interpret{\sAr{b}(\sAr{a}(x))}$.
    Now consider a term $t = \sAr{a}^n(\sAr{b}^m(\sAr{\epsilon}))$.
    Then $\dht{t} = n*m$ whereas $\interpret{t}{} = 2^nm$; an exponential difference!
    Such an overestimation is problematic if we want to use $\interpret{\cdot}{}$ to bound $\dht{\cdot}$.
\end{example}

We could find a tight bound for the system of Example \ref{ex:poly_overstimation} by a reasoning like the
following: for every term $s$, let $\mathit{\#bs}(s)$ be the number of $\sAr{b}$ occurrences in $s$.  For a
term $t$, let $\mathit{cost}(t)$ denote $\sum \{\!\{ \mathit{\#bs}(s) \mid \sAr{a}(s)$ is a subterm of $t\}\!\}$.
Then, the cost of a term decreases exactly by $1$ in each step.
As the normal form has cost $0$, we find the tight bound $\mathit{cost}(\sAr{a}^n(\sAr{b}^m(\epsilon))) = n*m$.

This reasoning relies on tracking more than one value.
We can formalise this reasoning using an algebra
interpretation (and will do so in Example \ref{ex:ab-int}), by choosing the right $\mathcal{A}$:

\begin{definition}\label{def:tuple_algebra}
    A \emph{tuple algebra} is an algebra $\mathcal{A} = (A,\funcinterpret{})$ with $A = (A_\asort,
        \sortgr{\asort},\sortgeq{\asort})_{\asort \in \sortset}$ such that each $A_\asort$ has the form
    $\N^{\typecount{\asort}}$ (for an integer $\typecount{\asort} \geq 1$) and
    we let $\pair{n_1,\dots,n_{\typecount{\asort}}} \sortgeq{\asort} \pair{n_1',\dots,
            n_{\typecount{\asort}}'}$ if each $n_i \geq n_i'$, and $\pair{n_1,\dots,n_{\typecount{\asort}}}
        \sortgr{\asort} \pair{n_1',\dots,n_{\typecount{\asort}}'}$ if additionally $n_1 > n_1'$.
\end{definition}

Intuitively, the first component always indicates ``cost'': the number of steps needed to reduce a term
to normal form.  This is the component that needs to decrease in each rewrite step to have $\interpret{s} >
    \interpret{t}$ whenever $s \arrz t$.  The remaining components represent some value of interest for the
sort.  This could for example be the size of the term (or its normal form), the length of a list,
or following Example \ref{ex:poly_overstimation}, the number of occurrences of a specific symbol.
For these components, we only require that they do not increase in a reduction step.

By the definition of $\sortgr{\asort}$, and using Theorem \ref{thm:sortinterpretworks}, we can
conclude:

\begin{corollary}
    If a TRS $\trs$ is compatible with a tuple algebra
    then it is terminating and $\dht{t} \leq \interpret{t}_1$, for all terms $t$.
    (Here, $\interpret{t}_1$ indicates the first component of the tuple $\interpret{t}$.)
\end{corollary}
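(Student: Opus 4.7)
The plan is to chain Theorem~\ref{thm:sortinterpretworks} with the concrete shape of the ordering $\sortgr{\asort}$ provided by Definition~\ref{def:tuple_algebra}. All of the real work has essentially been done in that theorem; this corollary is just the step from an abstract well-founded ordering to a concrete numerical bound.

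First I would fix an arbitrary assignment $\varinterpret$. By Theorem~\ref{thm:sortinterpretworks}, whenever $s \arrz t$ we have $\interpret{s} \sortgr{\asort} \interpret{t}$, where $\asort$ is the common sort of $s$ and $t$. Unpacking Definition~\ref{def:tuple_algebra}, this strict inequality on $\N^{\typecount{\asort}}$ says exactly that $\interpret{s}_1 > \interpret{t}_1$ in $\N$ (the other components are merely weakly decreasing). Termination is then immediate: an infinite reduction $t_0 \arrz t_1 \arrz \cdots$ would project to an infinite strictly decreasing sequence of natural numbers $\interpret{t_0}_1 > \interpret{t_1}_1 > \cdots$, contradicting well-foundedness of $(\N,>)$.

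For the derivation-height bound, I would take any finite reduction $t = t_0 \arrz t_1 \arrz \cdots \arrz t_n$. Iterating the observation above along this chain yields $\interpret{t_0}_1 > \interpret{t_1}_1 > \cdots > \interpret{t_n}_1 \geq 0$, so $\interpret{t}_1 \geq n$; since this holds for every reduction starting from $t$, we conclude $\dht{t} \leq \interpret{t}_1$, independently of the choice of $\varinterpret$. The only point worth flagging is that Definition~\ref{def:tuple_algebra} makes the strict part of $\sortgr{\asort}$ depend solely on the first coordinate; it is precisely this design choice that lets us read off a numerical bound on $\dht{t}$ from that coordinate alone, and there is no genuine obstacle to overcome beyond this unpacking.
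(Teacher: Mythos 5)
Your proof is correct and matches the paper's intended argument exactly: the paper gives no explicit proof of this corollary, stating only that it follows ``by the definition of $\sortgr{\asort}$, and using Theorem~\ref{thm:sortinterpretworks},'' which is precisely the unpacking you carry out. The two observations you isolate---that a rewrite step strictly decreases the first coordinate, and that a strictly decreasing chain in $\N$ both forbids infinite reductions and bounds finite ones---are indeed the whole content.
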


Using this, we obtain a tight bound on the derivation height of $\sAr{a}^n(\sAr{b}^m(\sAr{\epsilon}))$
in Example \ref{ex:poly_overstimation}:

\begin{example}\label{ex:ab-int}
    The TRS $\trs_{\sAr{ab}}$ is compatible with the tuple algebra with $A_{\sAr{string}} = \N^2$ and
    \begin{align*}
        \interpret{\sAr{a}(x)}     & = \pair{x_1 + x_2, x_2} &
        \interpret{\sAr{b}(x)}     & = \pair{x_1,x_2 + 1}    &
        \interpret{\sAr{\epsilon}} & = \pair{0, 0}
    \end{align*}
    Here, again, subscripts indicate tuple indexing; i.e., $\pair{n,m}_1 = n$ and $\pair{n,m}_2 = m$.
    Note that for every ground term $s$ we have $\interpret{s}_2 = \mathit{\#bs}(s)$.  The first component
    exactly sums $\mathit{\#bs}(t)$ for every subterm $t$ of $s$ which has the form $\sAr{a}(t')$.  We have:
    $
        \interpret{\sAr{a}(\sAr{b}(x))} = \pair{ x_1 + x_2 + 1, x_2 + 1}
        \sortgr{\nat} \pair{ x_1 + x_2, x_2 + 1 }
        = \interpret{\sAr{b}(\sAr{a}(x))}
    $.
    The interpretation functions $\funcinterpret{\sAr{a}}$ and $\funcinterpret{\sAr{b}}$ are indeed monotonic.
    For example, for $\funcinterpret{\sAr{a}}$: if $x \sortgr{\nat} y$ then $x_1 + x_2 > y_1 + y_2$ (since $x_1
    > y_1$ and $x_2 \geq y_2$) and $x_2 \geq y_2$; and if $x \sortgeq{\nat} y$ then $x_1 + x_2 \geq y_1 + y_2$
    and $x_2 \geq y_2$.
    We have $\interpret{\sAr{a}^n(\sAr{b}^m(\sAr{\epsilon}))} = (n*m,m)$.
\end{example}

To build strongly monotonic functions we can for instance use the following observation:
\begin{lemma}\label{lem:maxpol}
    A function
    $F : \N^{\typecount{\iota_1}} \times \dots \times \N^{\typecount{\iota_k}} \arrfunc \N^{\typecount{\kappa}}$
    is strongly monotonic if we can write
    $F(x^1,\dots,x^k) = \pair{x^1_1 + \dots + x^k_1 + S_1(x^1,\dots,x^k),\ S_2(x^1,\dots,x^k), \dots,\ S_{\typecount{\kappa}}(x^1, \dots,x^k)}$,
    where each $S_i$ is a weakly monotonic function in $\N^{\typecount{\iota_1}} \times
        \dots \times \N^{\typecount{\iota_k}} \arrfunc \N$.

    Moreover, a function
    $S : \N^{\typecount{\iota_1}} \times \dots \times \N^{\typecount{\iota_k}} \arrfunc \N$
    is weakly monotonic if it is built from constants in $\N$,
    variable components $x_j^n$ and weakly monotonic functions in $\N^n \arrfunc \N$.
\end{lemma}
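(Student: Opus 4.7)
The plan is to handle the two claims in order, both by direct verification against the definitions of $\sortgr{\asort}$ and $\sortgeq{\asort}$ from Definition~\ref{def:tuple_algebra} and of weak/strong monotonicity from the preliminaries.

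For the first claim, I need to show that such an $F$ is weakly monotonic and strict in every argument. Fix tuples $x^1,\dots,x^k$ and $y^1,\dots,y^k$ with $x^j \sortgeq{\iota_j} y^j$ for all $j$. By Definition~\ref{def:tuple_algebra}, this means $x^j_i \geq y^j_i$ componentwise for every $i$. Since each $S_i$ is weakly monotonic, $S_i(x^1,\dots,x^k) \geq S_i(y^1,\dots,y^k)$, and for the first component, summing the componentwise inequalities $x^j_1 \geq y^j_1$ with the weakly monotonic bound on $S_1$ gives $x^1_1 + \dots + x^k_1 + S_1(\vec x) \geq y^1_1 + \dots + y^k_1 + S_1(\vec y)$. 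Hence $F(\vec x) \sortgeq{\kappa} F(\vec y)$, giving weak monotonicity. For strictness in argument $m$, additionally assume $x^m \sortgr{\iota_m} y^m$, which by Definition~\ref{def:tuple_algebra} sharpens this to $x^m_1 > y^m_1$ while keeping $x^m_i \geq y^m_i$ for $i \geq 2$. The sum $x^1_1 + \dots + x^k_1$ then strictly exceeds $y^1_1 + \dots + y^k_1$, and adding the weak inequality from $S_1$ preserves strictness. Thus the first component of $F(\vec x)$ strictly exceeds that of $F(\vec y)$, while the other components satisfy $\geq$ by weak monotonicity of $S_2,\dots,S_{\typecount{\kappa}}$; so $F(\vec x) \sortgr{\kappa} F(\vec y)$, as required.

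For the second claim, I would proceed by structural induction on the construction of $S$. The base cases are constants, which are weakly monotonic since the inequality $c \geq c$ is trivial, and variable components $x^n_j$, which are weakly monotonic because $x^n \sortgeq{\iota_n} y^n$ implies $x^n_j \geq y^n_j$ by Definition~\ref{def:tuple_algebra}. For the inductive step, suppose $S = f(T_1,\dots,T_m)$ where $f \in \N^m \arrfunc \N$ is weakly monotonic (in the usual componentwise sense) and each $T_i$ is weakly monotonic by the induction hypothesis. Given $x^j \sortgeq{\iota_j} y^j$ for all $j$, we get $T_i(\vec x) \geq T_i(\vec y)$ for each $i$, and then weak monotonicity of $f$ yields $f(T_1(\vec x),\dots,T_m(\vec x)) \geq f(T_1(\vec y),\dots,T_m(\vec y))$, which is what we need.

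The main obstacle is essentially notational: untangling the different orderings (the componentwise $\sortgeq{\asort}$ on tuples versus the usual $\geq$ on $\N$) and being careful that weak monotonicity of $S_1$ supplies $\geq$ rather than $>$, so that strictness must come from the linear $x^1_1 + \dots + x^k_1$ part alone. Once the definitions are unpacked the arithmetic reduces to the fact that adding a weak inequality to a strict one yields a strict inequality, and that weakly monotonic functions compose.
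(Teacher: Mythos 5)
Your proof is correct and follows the same direct-verification approach the paper relies on: the paper does not give a standalone proof of this lemma but instead notes it is the first-order special case of Lemma~\ref{lem:wm}, whose proof sketch proceeds by exactly the case-by-case check against the componentwise orderings that you carry out here.
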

For the ``weakly monotonic functions in $\N^n \arrfunc \N$'' we could for instance use $+$, $*$ or $\max$.

To determine the length $\typecount{\iota}$ of the tuple for a sort $\iota$,
we use a semantic approach,
similar to one 
used in \cite{danner:et-al:15} in the context of functional languages:
the elements of the tuple are values of interest for the sort.
The two prominent examples 
in this paper are the sort $\sAr{nat}$ of natural numbers---which is constructed from the symbols
$\sAr{0 :: nat}$ and $\sAr{s :: [nat] \arrtype nat}$---and the sort $\lst$ of lists of natural numbers---which is constructed using
$\sAr{nil :: list}$ and $\sAr{cons :: [nat \times list] \arrtype list}$.
For natural numbers, 
we
consider their size, so the number of $\sAr{s}$s.
For lists, we consider both their length and an upper bound on the size of their elements.
This gives $\typecount{\sAr{nat}} = 2$ (cost of reducing the term, size of its normal form) and
$\typecount{\sAr{list}} = 3$ (cost of reducing, length of normal form, maximum element size).
In the remainder of this paper, we will use $x_\cost$ as syntactic sugar for $x_1$
(the cost component of $x$), $x_\size$ and $x_\leng$ as $x_2$ and $x_\mmax$ as $x_3$.

\begin{example}\label{ex:revinterpret}
    Consider the TRS defined in Example \ref{ex:append-reverse}.
    We start by giving an interpretation for the type constructors: the symbols $\zero,\nil,\suc$
    and $\cons$ which are used to construct natural numbers and lists.  To be in line with the
    semantics for the type interpretation, we let:
    \begin{align*}
        \interpret{\zero} & = \pair{0, 0}  & \interpret{\suc(\avar)}       & = \pair{\avar_\cost, \avar_\size + 1}                                                           \\
        \interpret{\nil}  & = \pair{0,0,0} & \interpret{\avar : \aListVar} & = \pair{\avar_\cost + \aListVar_\cost, \aListVar_\leng + 1, \max(\avar_\size, \aListVar_\mmax)}
    \end{align*}
    This expresses that $\zero$ has no evaluation cost and size $0$;
    analogously, $\nil$ has no evaluation cost and $0$ as length and maximum element.
    The 
    cost of evaluating a term $\suc(t)$ depends entirely on the cost of the term's argument $t$;
    the size component counts the number of $\suc$s.
    The cost component for $\cons$ similarly sums the costs of its arguments,
    while the length is increased by 1, and the maximum element 
    is the maximum between its head and tail.

    For the remaining symbols we choose the following interpretations:
    \[
        \begin{array}{rcllrcl}
            \interpret{\avar \add \bvar}              & = & \pair{\avar_\cost + \bvar_\cost + \bvar_\size + 1, \avar_\size + \bvar_\size}                                                      \\
            \interpret{\sumList(\aListVar)}           & = & \pair{\aListVar_\cost + 2*\aListVar_\leng + \aListVar_\leng * \aListVar_\mmax + 1, \aListVar_\leng * \aListVar_\mmax}              \\
            \interpret{\rev(\aListVar)}               & = & \pair{\aListVar_\cost + \aListVar_\leng + \frac{\aListVar_\leng * (\aListVar_\leng + 1)}{2} + 1, \aListVar_\leng, \aListVar_\mmax} \\
            \interpret{\append(\aListVar, \bListVar)} & = &
            \pair{\aListVar_\cost + \bListVar_\cost + \aListVar_\leng + 1, \aListVar_\leng + \bListVar_\leng, \max(\aListVar_\mmax, \bListVar_\mmax)}
        \end{array}
    \]
    Checking compatibility is easily done for the interpretation above, and strong monotonicity follows by
    Lemma \ref{lem:maxpol} (as $n \mapsto \frac{n * (n+1)}{2} \in \N \arrfunc \N$ is weakly monotonic).
    We see that the cost of evaluating $\append$ is linear in the first list length and independent of the size of the list elements,
    while evaluating $\sumList$ gives a quadratic dependency on length and size combined.
\end{example}

Our tuple interpretations have some similarities with matrix interpretations
\cite{waldmann:zantema:06}, where also each term is associated to an $n$-tuple.  In essence, matrix
interpretations \emph{are} tuple interpretations, for systems with only one sort.  However, the
shape of the interpretation functions $\funcinterpret{\afun}$ in matrix interpretations is limited
to functions following Lemma \ref{lem:maxpol} where each $S$ is a linear multivariate polynomial.
Hence, our interpretations are a strict generalisation, which also admits interpretations such as
those used for $\sumList$, $\rev$ and $\append$ in Example \ref{ex:revinterpret}.

For the purpose of termination, tuple interpretations strictly extend the power of both
polynomial interpretations and matrix interpretations already in the first-order case.

\begin{example}\label{ex:quot}
    A TRS that implements division in \cite{art:gie:00:1} shows a limitation of polynomial interpretations:
    it contains a rule $\quot(\suc(x), \suc(y)) \to \suc(\quot(\minus(x,y), \suc(y)))$ which cannot
    be oriented by any polynomial interpretation, because $\interpret{\minus(x,\suc(x))} >
    \interpret{\suc(x)}$ for any strongly monotonic polynomial $\funcinterpret{\minus}$. 
    Due to the duplication of $y$, this rule also cannot be handled by a matrix interpretation.
    However, we do have a compatible tuple interpretation:
    \[
        \begin{array}{rclcrcl}
            \interpret{\zero}        & = & \pair{0,0}                                                                            & \quad &
            \interpret{\minus(x, y)} & = & \pair{x_\cost + y_\cost + y_\size + 1, x_\size}                                                 \\
            \interpret{\suc(x)}      & = & \pair{x_\cost,x_\size+1}                                                              & \quad &
            \interpret{\quot(x,y)}   & = & \pair{x_\cost + x_\size + y_\cost + x_\size * y_\cost + x_\size * y_\size + 1, x_\size}           \\
        \end{array}
    \]
\end{example}

In practice, in first-order termination or complexity analysis one would not exclusively use
interpretations, but rather a combination of different techniques.
In that context, tuple interpretations may be used as one part of a large toolbox.
They are likely to offer a simple complexity proof in many cases,
but they are unlikely to be an essential technique since so many other methods have already been developed.
Indeed, all examples in this section can be handled with previously established theory.
For instance, Example \ref{ex:poly_overstimation} can be handled with matrix interpretations,
while $\sumList$ and $\rev$ may be analysed using ideas from \cite{nao:moser:08} and \cite{mos:sch:wal:08}.

However, developing a new technique for first-order termination and traditional complexity analysis
is not our goal.
Our method \emph{does} provide a more fine-grained notion of complexity, which may
consider information such as the length of a list.
Moreover, the first-order case is an important
stepping stone towards higher-order analysis, where far fewer methods exist.

\section{Higher-order tuple interpretations}\label{sec:ho-tp-int}

In this section, we will extend the ideas from Section \ref{sec:fo-tp-int} to the higher-order setting,
and hence define the core notion of this paper: higher-order tuple interpretations.  To do this, we will
build on the notion of \emph{strongly monotonic algebras} originating in \cite{pol:96}.

\subsection{Strongly monotonic algebras}\label{subsec:SM}

In first-order term rewriting, the complexity of a TRS is often measured as \emph{runtime} or
\emph{derivational} complexity.
Both measures consider initial terms $s$ of a certain shape,
and supply a bound on $\dht{s}$ given the size of $s$.
However, this is not a good approach for higher-order terms: the behaviour of a term of higher
type generally cannot be captured in an integer.

\begin{example}\label{ex:hor_f}
    Consider the AFS obtained by combining Examples \ref{ex:append-reverse} and \ref{ex:ho_trs}.
    The evaluation cost of a term $\foldl(F, n, q)$ depends almost completely on the behaviour of the
    functional subterm $F$, and not only on its evaluation cost.
    To see this, consider two cases:
    $F_1 := \abs{x}{\abs{y}{y \add x}}$, and
    $F_2 := \abs{x}{\abs{y}{x \add x}}$.
    For natural numbers $n,m$, the evaluation cost of both $F_1(n,m)$ and $F_2(n,m)$ is the same:
    $n + 1$.  However, the \emph{size} of the result is different.  Hence, the number of
    steps needed to compute $\foldl(F_1,n,q)$ for a number $n$ and list $q$ is quadratic in
    the size of $n$ and $q$, while the number of steps needed for $\foldl(F_2,n,q)$ is exponential.
\end{example}

As Example \ref{ex:hor_f} shows, higher-order rewriting is a natural place to separate cost and size.
But more than that, we need to know what a function does with its arguments: whether it is
size-increasing, how long it takes to evaluate them, and more.

This is naturally captured by the notion of (weakly or strongly) monotonic algebras for higher-order
rewriting introduced by v.d.~Pol \cite{pol:96}: here, a term of arrow type is interpreted as a
function, which allows the interpretation to retain all relevant information.

Monotonic interpretations were originally defined for a different higher-order rewriting
formalism, which does make some difference in the way abstraction and application is handled.
\emph{Weakly} monotonic algebras were transposed to AFSs in \cite{fuh:kop:12};
however, here we extend the more natural notion of \emph{hereditarily} monotonic algebras
which v.d.~Pol only briefly considered.%
\footnote{In \cite{pol:96}, v.d.~Pol rejects hereditarily (or: strongly) monotonic algebras because
    they are not so well-suited for analysing the HRS format \cite{nip:91} where reasoning is modulo
    $\to_{\beta}$: it is impossible to both interpret all terms of function type to strongly monotonic
    functions and have $\interpret{(\abs{x}{s})\ t} = \interpret{s[x:=t]}$.
    In the AFS format, we do not have the latter requirement.
    In \cite{fuh:kop:12}, where the authors considered the AFS format like we do here (but for interpretations to $\N$ rather than to tuples), weakly monotonic
    algebras were used because they are a more natural choice
    in the context of dependency pairs.
}

\begin{definition}\label{def:SM}
    Let $\sortset$ be a set of sorts and $\signature$ a higher-order signature.
    We assume given for every sort $\asort$ an extended well-founded set
    $(\sortinterpret{\asort},\sortgr{\asort},\sortgeq{\asort})$.
    From this, we define the set of \emph{strongly monotonic functionals}, as follows:
    \begin{itemize}
        \item For all sorts $\asort$: $\typeinterpret{\asort} := \sortinterpret{\asort}$ and $\typegr{\asort} \, := \, \sortgr{\asort}$
              and $\typegeq{\asort} \, := \, \sortgeq{\asort}$.
        \item For an arrow type $\atype \arrtype \btype$:
              \begin{itemize}
                  \item $\typeinterpret{\atype \arrtype \btype} := \{ F \in \typeinterpret{\atype} \arrfunc \typeinterpret{\btype} \mid
                            F$ is strongly monotoic$\}$
                  \item $F \typegr{\atype \arrtype \btype} G$ iff $\typeinterpret{\atype}$ is non-empty and
                    $\forall x \in \typeinterpret{\atype}.F(x) \typegr{\btype} G(x)$, and
                    \\
                  $F \typegeq{\atype \arrtype \btype} G$ iff $\forall x \in \typeinterpret{\atype}.F(x) \typegeq{\btype} G(x)$.
              \end{itemize}
    \end{itemize}
\end{definition}
That is, $\typeinterpret{\atype \arrtype \btype}$ contains strongly monotonic functions from $\typeinterpret{\atype}$ to
$\typeinterpret{\btype}$ and both $\typegr{\atype \arrtype \btype}$ and $\typegeq{\atype \arrtype\btype}$ do a point-wise comparison.
By a straightforward induction on types we have:
\begin{restatable}{lemma}{typeOrd}\label{lemma:typeOrd}
    For all types $\atype$, $(\typeinterpret{\atype},\typegr{\atype},\typegeq{\atype})$ is an extended well-founded set;
    that is:
    \begin{itemize}
        \item $\typegr{\atype}$ is well-founded and $\typegeq{\atype}$ is reflexive;
        \item both $\typegr{\atype}$ and $\typegeq{\atype}$ are transitive;
        \item for all $x, y, z \in \typeinterpret{\atype}$, $x \typegr{\atype} y$ implies $x \typegeq{\atype} y$ and
              $x \typegr{\atype} y \typegeq{\atype} z$ implies $x \typegr{\atype} z$.
    \end{itemize}
\end{restatable}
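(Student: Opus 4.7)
The plan is to prove the statement by straightforward induction on the type $\atype$. The base case $\atype = \asort \in \sortset$ is immediate: the triple $(\typeinterpret{\asort},\typegr{\asort},\typegeq{\asort})$ is, by definition, exactly the assumed extended well-founded set $(\sortinterpret{\asort},\sortgr{\asort},\sortgeq{\asort})$, so each of the five clauses is given.

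For the inductive step $\atype = \btype \arrtype \ctype$, I would reduce each clause to the inductive hypothesis on $\ctype$ via a pointwise argument. Reflexivity of $\typegeq{\btype \arrtype \ctype}$ follows from $F(x) \typegeq{\ctype} F(x)$ for every $x \in \typeinterpret{\btype}$; transitivity of both relations is equally straightforward, combining $F(x) \typegeq{\ctype} G(x) \typegeq{\ctype} H(x)$ (or the strict version) pointwise, and in the strict case carrying along the non-emptiness witness. For the compatibility clause $F \typegr{} G \implies F \typegeq{} G$, I note that $F \typegr{\btype \arrtype \ctype} G$ gives $F(x) \typegr{\ctype} G(x)$ for all $x$, hence $F(x) \typegeq{\ctype} G(x)$ by IH, yielding $F \typegeq{\btype \arrtype \ctype} G$. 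The mixed clause $F \typegr{} G \typegeq{} H \implies F \typegr{} H$ uses non-emptiness of $\typeinterpret{\btype}$ (which is guaranteed by the first hypothesis) together with the pointwise IH $F(x) \typegr{\ctype} G(x) \typegeq{\ctype} H(x) \implies F(x) \typegr{\ctype} H(x)$.

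For well-foundedness of $\typegr{\btype \arrtype \ctype}$, suppose for contradiction that $F_0 \typegr{} F_1 \typegr{} \dots$ is an infinite descending chain. The first step forces $\typeinterpret{\btype}$ to be non-empty, so I can fix some $x \in \typeinterpret{\btype}$ and extract the descending chain $F_0(x) \typegr{\ctype} F_1(x) \typegr{\ctype} \dots$ in $\typeinterpret{\ctype}$, contradicting the inductive hypothesis that $\typegr{\ctype}$ is well-founded.

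The only place that requires care, and the main obstacle such as it is, is the non-emptiness side-condition in the definition of $F \typegr{\btype \arrtype \ctype} G$. Without this clause, the universal quantifier $\forall x \in \typeinterpret{\btype}$ would be vacuously satisfied whenever $\typeinterpret{\btype}$ is empty, collapsing $\typegr{\btype \arrtype \ctype}$ into a reflexive relation and destroying well-foundedness. With the clause in place, if $\typeinterpret{\btype}$ happens to be empty, then $\typeinterpret{\btype \arrtype \ctype}$ consists solely of the empty function (which is vacuously strongly monotonic), $\typegr{\btype \arrtype \ctype}$ is empty, and $\typegeq{\btype \arrtype \ctype}$ is a singleton reflexive relation, so all five clauses hold trivially. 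Once this observation is recorded, every other case reduces in one line to the IH.
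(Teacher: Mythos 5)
Your proposal is correct and follows essentially the same route as the paper's proof: induction on the type, with the base case given by assumption and the arrow case handled by pointwise reduction to the inductive hypothesis on the codomain type, using the non-emptiness side condition to extract a descending chain for well-foundedness. The only difference is your explicit remark on the degenerate case where $\typeinterpret{\btype}$ is empty, which the paper's proof does not spell out but is a correct and useful clarification.
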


We will define \emph{higher-order strongly monotonic algebras} as an extension of Definition \ref{def:mult-sorted-alg},
mapping a term of type $\atype$ to an element of $\typeinterpret{\atype}$.
Functional terms $\afun(s_1,\dots,s_k)$ and variables can be handled as before, but we now also have to deal with application and abstraction.
Application is straightforward: since terms of higher type are mapped to functions, we can interpret application as function application, i.e.,
$\ainterpret{\aterm \cdot \bterm} := \ainterpret{\aterm}(\ainterpret{\bterm})$.
However, abstraction is more difficult.
The natural choice would be to view abstraction as defining a function; i.e.,
let $\ainterpret{\abs{\avar}{\aterm}}$ be the function $d \mapsto \interpret{\aterm}_{\alpha[x := d]}$.
Unfortunately, this is not necessarily monotonic: $d \mapsto \interpret{\aterm}_{\alpha[x := d]}$ is strongly monotonic only if $\avar$ occurs freely in $\aterm$.
For example $\abs{\avar}{\zero}$ would be mapped to a constant function, which is not in $\typeinterpret{\nat \arrtype \nat}$.
Moreover, this definition would give $\ainterpret{(\abs{x}{s}) \cdot t} = \ainterpret{s[x:=t]}$, so $\beta$-steps would not be counted toward the evaluation cost.

We handle both problems by using a choosable function $\makesm{\atype,\btype}$, which takes a function that may be strongly monotonic or constant, and turns it strongly monotonic.

\begin{definition}\label{def:monomaker}
    A $(\atype,\btype)$-monotonicity function $\makesm{\atype,\btype}$ is a
    strongly monotonic function in $C_{\atype,\btype} \arrfunc \typeinterpret{\atype \arrfunc \btype}$,
    where the set $C_{\atype,\btype}$ is defined as $\typeinterpret{\atype \arrtype \btype} \cup
        \{ F \in \typeinterpret{\atype} \arrfunc \typeinterpret{\btype} \mid F(x) = F(y)$ for all $x,y \in
        \typeinterpret{\atype} \}$.  (Here, the set $C_{\atype,\btype}$ is ordered by point-wise comparison.)
\end{definition}

With this definition, we are ready to define strongly monotonic algebras.

\begin{definition}\label{def:sminterpret}
    A strongly monotonic algebra $\SMA$ consists of a family
    $(\typeinterpret{\atype},\typegr{\atype},\typegeq{\atype})_{\atype \in \simpletypeset}$,
    an interpretation function $\funcinterpret{}$ which associates to each
    $\sAr{\afun :: [\atype_1 \times \dots \times \atype_k] \arrtype \btype}$
    in $\signature$ an element of
    $\typeinterpret{\atype_1 \arrtype \dots \arrtype\atype_k \arrtype \btype}$,
    and a $(\atype,\btype)$-monotonicity function $\makesm{\atype,\btype}$,
    for each $\atype,\btype \in \simpletypeset$.

    Let $\alpha$ be a function that maps variables of type $\atype$
    to elements of $\typeinterpret{\atype}$.
    We extend $\funcinterpret{}$ to a function $\ainterpret{\cdot}$ that
    maps terms of type $\atype$ to elements of $\typeinterpret{\atype}$, as follows:
    \[
        \begin{array}{lcl}
            \ainterpret{\avar} = \alpha(\avar)\ \text{for variables}\ \avar & \quad &
            \ainterpret{\afun(\aterm_1,\dots,\aterm_k)} = \funcinterpret{\afun}(\ainterpret{\aterm_1},\dots,\ainterpret{\aterm_k})\\
            \ainterpret{\app{\aterm}{\bterm}} = \ainterpret{\aterm}(\ainterpret{\bterm}) & &
            \ainterpret{\abs{x}{\aterm}} = \makesm{\atype,\btype}(d \mapsto \interpret{\aterm}_{\alpha[x:=d]})\ \text{if}\ \avar :: \atype\ \text{and}\ \aterm :: \btype
        \end{array}
    \]
\end{definition}
We can see by induction on $\aterm$ that for $\aterm :: \atype$ indeed $\ainterpret{\aterm} \in \typeinterpret{\atype}$.
We say that an AFS $\trs$ is \emph{compatible} with $\SMA$ if for all valuations $\alpha$ both
(1) $\ainterpret{\ell} \typegr{} \ainterpret{r}$,
for all $\ell \to r \in \rules$; and
(2) $\ainterpret{(\abs{\avar}{\aterm}) \, \bterm} \typegr{} \ainterpret{\aterm[\avar := \bterm]}$,
for any $s :: \atype$, $t :: \btype$ and $x \in \var_\btype$.

As before, we will typically omit the $\alpha$ subscript and use notation like $\interpret{s} = F(x+3)$
to denote $\ainterpret{s} = \varinterpret(F)(\varinterpret(x) + 3)$.
When types are not relevant, we will denote $\typegr{}$ instead of specifying $\typegr{\atype}$, and
we may write $f \in \typeinterpret{}$ to mean $f \in \typeinterpret{\atype}$ for some $\atype \in \simpletypeset$.

We extend Theorem \ref{thm:sortinterpretworks} into the following compatibility result.

\begin{restatable}{theorem}{typeIntWorks}\label{thm:typeinterpretworks}
    If $\trs$ is compatible with $\SMA$,
    then for all $\alpha$, $\ainterpret{s} \typegr{} \ainterpret{t}$ when $s \arrz t$.
\end{restatable}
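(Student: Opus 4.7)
The plan is to prove the theorem by induction on the derivation of $s \arrz t$ as generated by the inductive clauses defining $\arrz$. Before beginning, I would establish a \emph{substitution lemma}: for any term $s$, substitution $\gamma$, and valuation $\alpha$, $\ainterpret{s\gamma} = \interpret{s}_{\alpha_\gamma}$, where $\alpha_\gamma$ agrees with $\alpha$ outside $\dom{\gamma}$ and maps each $x \in \dom{\gamma}$ to $\ainterpret{\gamma(x)}$. This is a routine structural induction on $s$: the variable, function symbol and application cases are mechanical, while the abstraction case requires $\alpha$-renaming and the observation that the auxiliary map $d \mapsto \interpret{s}_{\alpha[y:=d]}$ is independent of the choice of fresh bound variable $y$.

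Given this lemma, both base cases of the main induction are immediate. For $(\abs{x}{s})\,t \arrz s[x:=t]$, the conclusion is exactly compatibility clause~(2). For $\ell\gamma \arrz r\gamma$ with $\ell \to r \in \rules$, two applications of the substitution lemma rewrite $\ainterpret{\ell\gamma}$ and $\ainterpret{r\gamma}$ as $\interpret{\ell}_{\alpha_\gamma}$ and $\interpret{r}_{\alpha_\gamma}$, and compatibility clause~(1), instantiated at $\alpha_\gamma$, supplies the strict inequality.

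The remaining congruence cases exploit strong monotonicity of the building blocks. For $\afun(\ldots,s,\ldots) \arrz \afun(\ldots,t,\ldots)$, the induction hypothesis gives $\ainterpret{s} \typegr{} \ainterpret{t}$, and strong monotonicity of $\funcinterpret{\afun}$ in the affected argument (combined with reflexivity of $\typegeq{}$ elsewhere) yields the result. For $s\,u \arrz t\,u$, the induction hypothesis is the pointwise strict inequality $\ainterpret{s} \typegr{} \ainterpret{t}$, so evaluating both sides at $\ainterpret{u}$ gives the conclusion (non-emptiness of $\typeinterpret{\atype}$ is witnessed by $\ainterpret{u}$ itself). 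For $u\,s \arrz u\,t$, strong monotonicity of the functional $\ainterpret{u} \in \typeinterpret{\atype \arrtype \btype}$ applied to $\ainterpret{s} \typegr{\atype} \ainterpret{t}$ yields $\ainterpret{u}(\ainterpret{s}) \typegr{\btype} \ainterpret{u}(\ainterpret{t})$.

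The main obstacle, and the only delicate step, is the abstraction case $\abs{x}{s} \arrz \abs{x}{t}$. The induction hypothesis, applied for each valuation of the form $\alpha[x:=d]$, gives $\interpret{s}_{\alpha[x:=d]} \typegr{\btype} \interpret{t}_{\alpha[x:=d]}$ for every $d \in \typeinterpret{\atype}$. To conclude via strong monotonicity of $\makesm{\atype,\btype}$, I first need to verify that both maps $F_s := d \mapsto \interpret{s}_{\alpha[x:=d]}$ and $F_t := d \mapsto \interpret{t}_{\alpha[x:=d]}$ actually lie in the domain $C_{\atype,\btype}$ of the monotonicity-maker: each must be either strongly monotonic in $d$ or constant in $d$. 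This requires an auxiliary lemma, proved by induction on the term $u$, that $d \mapsto \interpret{u}_{\alpha[x:=d]}$ always lies in $C_{\atype,\btype}$---it is constant when $x$ does not occur freely in $u$ and strongly monotonic otherwise, with the abstraction subcase handled precisely because the monotonicity-maker in Definition~\ref{def:sminterpret} is defined to output strongly monotonic functionals. With $F_s, F_t \in C_{\atype,\btype}$ and $F_s \typegr{} F_t$ holding pointwise, strong monotonicity of $\makesm{\atype,\btype}$ yields $\ainterpret{\abs{x}{s}} = \makesm{\atype,\btype}(F_s) \typegr{\atype \arrtype \btype} \makesm{\atype,\btype}(F_t) = \ainterpret{\abs{x}{t}}$, closing the induction.
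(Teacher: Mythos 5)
Your proof is correct and follows essentially the same approach as the paper: induction on the derivation of $s \arrz t$, a substitution lemma for the rule case, compatibility clause (2) for the $\beta$ case, strong monotonicity of $\funcinterpret{\afun}$ and of $\ainterpret{u}$ for the congruence cases, and strong monotonicity of $\makesm{\atype,\btype}$ over $C_{\atype,\btype}$ for the abstraction case. If anything, you are slightly more careful than the paper's write-up in the abstraction case, explicitly invoking the auxiliary lemma that $d \mapsto \interpret{u}_{\alpha[x:=d]}$ always lands in $C_{\atype,\btype}$, which the paper establishes as part of its separate well-definedness proof for $\ainterpret{\cdot}$.
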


For Definition \ref{def:SM} and Theorem \ref{thm:typeinterpretworks}, we can choose the well-founded sets
$(A_\asort,\sortgr{\asort},\sortgeq{\asort})$ for each sort, and the functions $\makesm{\atype, \btype}$ for each pair of types, 
as we desire.
A \emph{higher-order tuple algebra} is a strongly monotonic algebra where each
$(A_\asort,\sortgr{\asort}, \sortgeq{\asort})$ follows Definition \ref{def:tuple_algebra}.

\begin{example}\label{ex:comp_of_map}
    Let $A_{\sAr{nat}} = \mathbb{N}^2$ and $A_{\sAr{list}} = \mathbb{N}^3$ as before, and assume $\cons$ and $\nil$ are interpreted as in Example~\ref{ex:revinterpret}.
    Consider the rules for $\map$ in Example \ref{ex:ho_trs}.
    We let:
    \[
        \interpret{\map(F,\aListVar)} = \pair{
            (\aListVar_\leng + 1) * (F(\pair{\aListVar_\cost,\aListVar_\mmax})_\cost + 1), \,
            \aListVar_\leng, \,
            F(\aListVar_\cost,\aListVar_\mmax)_\size}
    \]
    This expresses that $\map$ does not increase the list length (as the length component is just $\aListVar_\leng$),
    the greatest element of the result is bounded by the value of $F$ on the greatest element of $\aListVar$,
    and the evaluation cost is mostly expressed by a number of $F$ steps that is linear in the length of $\aListVar$.
    We will see in Lemma \ref{lem:wm} that $\funcinterpret{\map}$ is indeed strongly monotonic.

    To prove compatibility of the AFS with $\SMA$, we must first see that $\interpret{\ell} \typegr{}
        \interpret{r}$ for all rules $\ell \arrz r$.  For the first $\map$ rule this is easy:
    $\interpret{\map(F,\nil)} = \pair{F(\pair{0,0})_\cost + 1,0,F(\pair{0,0})_\size} \typegr{\lst} \pair{0,0,0} =
        \interpret{\nil}$.
    For the second $\map$ rule, we must check that $\pair{\text{cost-}\ell,\text{len-}\ell,\text{max-}\ell}
        \typegr{\lst} \pair{\text{cost-}r,\text{len-}r,\text{max-}r}$; that is,
    $\text{cost-}\ell > \text{cost-}r$ and $\text{len-}\ell \geq \text{len-}r$ and $\text{max-}\ell \geq \text{max-}r$, where:
    \[
        \begin{array}{lclclcccl}
            \text{cost-}\ell & = & \interpret{\map(F, x : \aListVar)}_\cost & = &
              \multicolumn{5}{l}{
                  (\aListVar_\leng + 2) * (F(\pair{x_\cost + \aListVar_\cost, \max(x_\size,\aListVar_\mmax)})_\cost + 1)
                  } \\
            \text{cost-}r & = & \interpret{F(x) : \map(F, \aListVar)}_\cost & = &
              \multicolumn{5}{l}{F(\pair{x_\cost,x_\size})_\cost + (\aListVar_\leng + 1) * (F(\pair{\aListVar_\cost,\aListVar_\mmax})_\cost + 1)} \\
            \text{len-}\ell & = & \interpret{\map(F, x : \aListVar)}_\leng & = & \aListVar_\leng + 1 & = & \interpret{F(x) : \map(F, \aListVar)}_\leng & = & \text{len-}r \\
            \text{max-}\ell & = & \interpret{\map(F, x : \aListVar)}_\mmax & = &
              \multicolumn{5}{l}{F(\pair{x_\cost + \aListVar_\cost,\max(x_\size,\aListVar_\mmax)})_\size} \\
            \text{max-}r & = & \interpret{F(x) : \map(F, \aListVar)}_\mmax & = &
              \multicolumn{5}{l}{\max(F(\pair{x_\cost,x_\size})_\size, F(\pair{\aListVar_\cost,\aListVar_\mmax})_\size)} \\
        \end{array}
    \]
    To see why $\text{cost-}\ell > \text{cost-}r$, we
    observe that for all $x,\aListVar$: $\pair{x_\cost + \aListVar_\cost,\max(x_\size + \aListVar_\mmax)}
        \typegeq{\nat}$ both $\pair{x_\cost,x_\size}$ and $\pair{\aListVar_\cost,\aListVar_\mmax}$.
    Since $F \in \typeinterpret{\nat \arrtype \nat}$ therefore
    $F(\pair{x_\cost + \aListVar_\cost,\max(x_\size + \aListVar_\mmax)}) \typegeq{\nat}$ both
    $F(\pair{x_\cost,x_\size})$ and $F(\pair{\aListVar_\cost,\aListVar_\mmax})$.
    We find $\text{max-}\ell \geq \text{max-}r$ by a similar reasoning.
\end{example}

\subsection{Interpreting abstractions}\label{subsec:makesm}

Example \ref{ex:comp_of_map} is not complete: we have not yet defined the functions
$\makesm{\atype,\btype}$
, and we have not shown that
$\interpret{(\abs{x}{s}) \ t} \typegr{} \interpret{s[x:=t]}$ always holds.
To achieve this, we will define some standard functions to build elements of $\SM$.
This allows us to easily construct strongly monotonic functionals, both to build $\makesm{\atype,\btype}$
and to create interpretation functions $\funcinterpret{\afun}$.

\begin{definition}
    For every type $\atype$, we define:
    $\nul{\atype} \in \typeinterpret{\atype}$;
    $\costof{\atype} \in \typeinterpret{\atype} \arrfunc \N$; and
    $\addcost{\atype} \in \N \times \typeinterpret{\atype} \arrfunc \typeinterpret{\atype}$
    by mutual recursion on $\atype$ as follows.
    \begin{align*}
         & \nul{\asort} = \pair{0,\dots,0}                                                                                  &  & \nul{\atype \arrtype \btype} = d \mapsto \addcost{\btype}(\costof{\atype}(d),\nul{\btype}) \\
         & \costof{\asort}( \pair{n_1,\dots,n_{\typecount{\asort}}} ) = n_1                                                 &  & \costof{\atype \arrtype \btype}(F) = \costof{\btype}(F(\nul{\atype}))                      \\
         & \addcost{\asort}( c, \pair{n_1,\dots,n_{\typecount{\asort}}} ) = \pair{c + n_1,n_2,\dots,n_{\typecount{\asort}}} &  & \addcost{\atype \arrtype \btype}(c, F) = d \mapsto \addcost{\btype}(c, F(d))
    \end{align*}
\end{definition}

Here, $\nul{\atype}$ defines the \emph{minimal element} of $\typeinterpret{\atype}$.
The function $\costof{\atype}$ maps every $F$ to the cost component of
$F(\nul{\atype_1},\dots,\nul{\atype_m})$; hence, if $F \typegr{\atype} G$ we have
$\costof{\atype}(F) > \costof{\atype}(G)$.
The function $\addcost{\atype}$ pointwise increases an element of $\typeinterpret{\atype}$ by adding to the cost component:
if $F(\avar_1,\dots,\avar_m) = \pair{n_1,\dots,n_k}$, then
$\addcost{}(c,F)(\avar_1,\dots,\avar_m) = \pair{c + n_1,n_2,\dots,n_k}$.

It is easy to see that $\nul{\atype}$ and $\addcost{\atype}(n,X)$ are in $\SM$ for all $\atype$ (by
induction on $\atype$), and that $\costof{\atype}$ and $\addcost{\atype}$ are
strict in all their arguments.
Various properties of these functions are detailed in the appendix
(Lemmas \ref{lem:builders}--\ref{lem:argumentincrease}).
We will particularly use that always
$\aFuncVar(\addcost{}(n,\avar)) \typegeq{} \addcost{}(n,\aFuncVar(\avar))$ (Lemma \ref{lem:estimatebycost}) and
$\costof{}(\aFuncVar(\avar)) \geq \costof{}(\avar)$ (Lemma \ref{lem:argumentincrease}).

We can use these functions to for instance create candidates for $\makesm{\atype,\btype}$.
While many suitable definitions are possible, we will particularly consider the following:

\newcommand{\addarg}{\Phi}

\begin{definition}\label{def:makesmchoices}
For types $\atype,\btype$, and $F$ a weakly monotonic function in $\typeinterpret{\atype}
\arrfunc \typeinterpret{\btype}$, let:
\[
  \addarg_{\atype,\btype}(F) = \left\{
    \begin{array}{ll}
      d \mapsto \addcost{\atype \arrtype \btype}(1,F(d))      & \text{if}\ F\ \text{is in}\ \typeinterpret{\atype \arrtype \btype} \\
      d \mapsto \addcost{\atype \arrtype \btype}(\costof{\atype}(d) + 1, F(d)) & \text{otherwise}  \\
    \end{array}
    \right.
\]
\end{definition}

Then $\addarg_{\atype,\btype}$ is a $(\atype,\btype)$-monotonicity function.
To see this, the most challenging part is proving that $\addarg_{\atype,\btype}(F) \typegr{}
\addarg_{\atype,\btype}(G)$ if $F \typegr{} G$ and $F \in \typeinterpret{\atype \arrtype \btype}$
while $G$ is a constant function.  We can prove this using the result that $x \typegr{} y$ implies
$\addcost{}(1,x) \typegeq{} y$ for all $x,y$.
We have:

\begin{restatable}{lemma}{makeSMbeta}
If $\makesm{\atype,\btype} = \addarg_{\atype,\btype}$ then
$\interpret{(\abs{x}{\aterm}) \, \bterm} \typegr{\btype} \interpret{\aterm[x := \bterm]}$, for
$s :: \btype$, $t :: \atype$, $x \in \var_\atype$.
\end{restatable}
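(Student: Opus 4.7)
The plan is to unfold both sides, apply a substitution lemma, and close the argument using strictness of $\addcost{}$ in its cost argument. First, by the interpretation clauses for application and abstraction,
\[
\interpret{(\abs{x}{s})\,t} \;=\; \addarg_{\atype,\btype}(F)(\interpret{t}),
\quad\text{where}\quad F \;=\; d \mapsto \interpret{s}_{\alpha[x:=d]}.
\]
A standard interpretation substitution lemma, proved by a routine induction on $s$ (and using that $\ainterpret{\cdot}$ is defined compositionally on variables, function symbols, applications and abstractions), identifies $F(\interpret{t})$ with $\interpret{s[x:=t]}$. This reduces the goal to showing $\addarg_{\atype,\btype}(F)(\interpret{t}) \typegr{\btype} F(\interpret{t})$.

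Next, I would case split according to Definition~\ref{def:makesmchoices}. If $F \in \typeinterpret{\atype \arrtype \btype}$, the first clause gives $\addarg_{\atype,\btype}(F)(\interpret{t}) = \addcost{\btype}(1,\, F(\interpret{t}))$; if $F$ is constant, the second clause gives $\addcost{\btype}(\costof{\atype}(\interpret{t}) + 1,\, F(\interpret{t}))$. In both situations the cost increment is some $n \geq 1$, and what remains is to observe that $\addcost{\btype}(n, z) \typegr{\btype} z$ for every $z \in \typeinterpret{\btype}$ and every $n \geq 1$.

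This last fact is proved by induction on $\btype$. For a sort it is immediate from the clause $\addcost{\asort}(n,\pair{n_1,\dots,n_{\typecount{\asort}}}) = \pair{n+n_1,\,n_2,\dots,n_{\typecount{\asort}}}$, which strictly increases the cost component. For an arrow type $\ctype \arrtype \dtype$, unfolding the definition gives $\addcost{\ctype \arrtype \dtype}(n,F) = d \mapsto \addcost{\dtype}(n,F(d))$, and by the inductive hypothesis $\addcost{\dtype}(n,F(d)) \typegr{\dtype} F(d)$ for every $d$; since $\typeinterpret{\ctype}$ is non-empty (witnessed by $\nul{\ctype}$), this yields $\addcost{\ctype \arrtype \dtype}(n,F) \typegr{\ctype \arrtype \dtype} F$ by the pointwise definition of $\typegr{}$. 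Alternatively, this follows immediately from the strictness of $\addcost{\btype}$ in its first argument, recorded after Definition~\ref{def:monomaker}, together with the easy equality $\addcost{\btype}(0,z) = z$.

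The main obstacle is not the calculation itself but the very applicability of $\addarg_{\atype,\btype}$: one must verify that $F = d \mapsto \interpret{s}_{\alpha[x:=d]}$ actually lies in $C_{\atype,\btype}$, i.e.\ is either strongly monotonic or constant. This is needed for $\interpret{\abs{x}{s}}$ to be well-defined in the first place, and should be guaranteed by a prior structural lemma showing that every interpretation built from the admissible constructors has this dichotomy in each of its free variables. Once this and the substitution lemma are in hand, the two-case unfolding plus the strictness of $\addcost{\btype}$ close the proof.
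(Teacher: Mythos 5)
Your proof is correct and takes essentially the same route as the paper: unfold the interpretation of the redex into the two cases of $\addarg_{\atype,\btype}$, observe that both cases yield $\addcost{\btype}(n,\cdot)$ with $n \geq 1$, appeal to $\addcost{\btype}(n,z) \typegr{\btype} z$ (the paper's Lemma~\ref{lem:addcostprops}(\ref{lem:addcostprops:addcostpositive})), and close with the substitution lemma (the paper's Lemma~\ref{lemma:sbst:lemma}). Your remark about checking $F \in C_{\atype,\btype}$ and the non-emptiness of $\typeinterpret{\ctype}$ (via $\nul{\ctype}$) in the arrow case is a nice piece of explicit care that the paper relegates to a separate well-definedness lemma and leaves implicit in its proof of the $\addcost{}$ positivity fact, but it does not change the argument.
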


\begin{proof}[Proof Sketch]
    We expand $\makesm{\atype, \btype}$ to achieve
    $
        \interpret{(\abs{\avar}{\aterm}) \, \bterm}_{\alpha} =
        \addcost{\btype}(\costof{\atype}(\interpret{\bterm}_\alpha) + 1,
        \interpret{\aterm}_{\alpha[x := \interpret{\bterm}]})
    $
    or
    $        \interpret{(\abs{\avar}{\aterm}) \, \bterm}_{\alpha} = \addcost{\btype}(1, \interpret{\aterm}_{\alpha[x := \interpret{\bterm}]})
    $.
    By induction on $\tau$ we prove that $\addcost{\tau}(n, F) \typegr{\tau} F$ for all $n \geq 1$.
    So either way, $\interpret{(\abs{\avar}{\aterm}) \, \bterm}_{\alpha} \typegr{\tau} \interpret{\aterm}_{\alpha[x := \interpret{\bterm}]}$.
    Finally, we prove a substitution lemma,
    $\interpret{\aterm}_{\alpha[x :=\interpret{\bterm}_\alpha]} = \interpret{\aterm[\avar := \bterm]}_\alpha$,
    by induction on $s$.
\end{proof}

In examples in the remainder of this paper, we will assume that $\makesm{\atype,\btype} =
\addarg_{\atype,\btype}$.
With these choices we do not only orient the $\beta$-rule
(and thus satisfy item (2) of the compatibility conditions),
but also the $\eta$-reduction rules mentioned in \seclong\ref{sec:prelim:ho}.

\begin{restatable}{lemma}{makeSMeta}
    If $\makesm{\atype,\btype} = \addarg_{\atype,\btype}$
    then for any $F \in \var_{\atype \arrtype \btype}$ we have: $\interpret{\abs{x}{F \, x}} \typegr{
            \atype \arrtype \btype} \interpret{F}$.
\end{restatable}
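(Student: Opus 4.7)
The plan is to unfold the interpretation of $\lambda x.\,F\,x$ and show it differs from $\interpret{F}$ by an application of $\addcost{\btype}(1,\cdot)$ at every argument, then invoke the (already-noted) fact that adding a positive cost gives a strictly greater element.

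First I would compute, for an arbitrary valuation $\alpha$,
\[
\interpret{\abs{x}{F\,x}}_\alpha = \makesm{\atype,\btype}\bigl(d \mapsto \interpret{F\,x}_{\alpha[x := d]}\bigr)
= \addarg_{\atype,\btype}\bigl(d \mapsto \alpha(F)(d)\bigr),
\]
using that $\interpret{F\,x}_{\alpha[x:=d]} = \alpha(F)(d)$ by the clauses for application and variables. The function $d \mapsto \alpha(F)(d)$ is literally $\alpha(F)$, which lies in $\typeinterpret{\atype \arrtype \btype}$ because $F \in \var_{\atype \arrtype \btype}$. So the first branch of Definition~\ref{def:makesmchoices} applies, and
\[
\interpret{\abs{x}{F\,x}}_\alpha \;=\; d \mapsto \addcost{\btype}\bigl(1,\;\alpha(F)(d)\bigr).
\]

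Next I would unfold the right-hand side as $\interpret{F}_\alpha = \alpha(F)$ and reduce the goal to showing, pointwise, that $\addcost{\btype}(1, \alpha(F)(d)) \typegr{\btype} \alpha(F)(d)$ for every $d \in \typeinterpret{\atype}$. Here I would simply appeal to the fact used already in the proof of the preceding lemma: $\addcost{\ctype}(n, y) \typegr{\ctype} y$ for every $\ctype \in \simpletypeset$, $y \in \typeinterpret{\ctype}$, and $n \geq 1$, proved by induction on $\ctype$ (the base case follows from the definition of $\sortgr{\asort}$ on tuples, the inductive case from the pointwise definition of $\typegr{\atype \arrtype \btype}$ combined with Lemma~\ref{lem:builders}). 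Setting $\ctype := \btype$, $y := \alpha(F)(d)$, $n := 1$ gives the required strict inequality.

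Finally, the side condition that $\typeinterpret{\atype}$ be non-empty, needed to apply the definition of $\typegr{\atype \arrtype \btype}$, is immediate since $\nul{\atype} \in \typeinterpret{\atype}$. Combining these steps yields $\interpret{\abs{x}{F\,x}}_\alpha \typegr{\atype \arrtype \btype} \interpret{F}_\alpha$ for every $\alpha$. I do not expect a serious obstacle here: the only nontrivial ingredient is the strict-increase lemma for $\addcost{}$, and that has already been invoked in the previous proof sketch, so its induction on types is the same routine argument as before.
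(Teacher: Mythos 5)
Your proposal is correct and follows essentially the same route as the paper's own proof: unfold $\interpret{\abs{x}{F\,x}}_\alpha$ using the fact that $d \mapsto \interpret{F\,x}_{\alpha[x:=d]}$ is extensionally equal to $\alpha(F)$ (since $F \neq x$), observe that $\alpha(F) \in \typeinterpret{\atype\arrtype\btype}$ so the first branch of $\addarg_{\atype,\btype}$ applies giving $d \mapsto \addcost{\btype}(1,\alpha(F)(d))$, and then apply the strict-increase property of $\addcost{}$ (Lemma~\ref{lem:addcostprops}(\ref{lem:addcostprops:addcostpositive})). Your observation that only the first branch of $\addarg$ applies, yielding an equality rather than merely a $\typegeq{}$ as in the paper's proof sketch, is in fact sharper than the sketch and matches what the appendix proof does, and you correctly flag the non-emptiness side condition via $\nul{\atype}$.
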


\begin{proof}[Proof Sketch]
    Since $F \neq \avar$, we have $\interpret{F}_{\varinterpret[\avar:=d]} = \varinterpret(F)$
    for all $\varinterpret$ and $d$.
    Consequently,
    $\interpret{\abs{\avar}{F \, \avar}} \typegeq{\atype \arrtype \btype} d \mapsto \addcost{\btype}(1,F(d))$
    either way.
    We are done as: $\addcost{\btype}(1,F(d)) \typegr{\btype} F(d)$.
\end{proof}

\subsection{Creating strongly monotonic interpretation functions}\label{sec:createmono}

We can use Theorem \ref{thm:typeinterpretworks} to obtain bounds on the derivation heights of given terms.
However, to achieve this, we must find an interpretation function $\funcinterpret{}$,
and prove that each $\funcinterpret{\afun}$ is in $\typeinterpret{}$.
We will now explore ways to construct such strongly monotonic functions.
It turns out to be useful to also consider \emph{weakly} monotonic functions.
In the following, we will write ``$f$ is $\weakset(A_1,\dots,A_k;B)$''
to mean that $f$ is a weakly monotonic function in
$A_1 \times \dots \times A_k \arrfunc B$.

\begin{lemma}\label{lem:wm}
    Let $x^1,\dots,x^k$ be variables ranging over $\typeinterpret{\atype_1},\dots,
        \typeinterpret{\atype_k}$ respectively; we shortly denote this sequence $\vec{x}$.
    We let $\wmatypes$ denote the sequence $\typeinterpret{\atype_1},\dots,\typeinterpret{\atype_k}$.
    Then:
    \begin{enumerate}
        \item\label{lem:wm:project}
              if $F(\vec{x}) = x^i$ then $F$ is $\weakset(\wmatypes;\typeinterpret{\atype_i})$,
              and $F$ is strict in argument $i$;
        \item\label{lem:wm:projectapply}
              if $F(\vec{x}) = x^i(F_1(\vec{x}),\dots,F_n(\vec{x}))$,
              $\atype_i = \btype_1 \arrtype \dots \arrtype \btype_n \arrtype \ctype$, and each
              $F_j$ is $\weakset(\wmatypes;\typeinterpret{\btype_j})$ then
              $F$ is $\weakset(\wmatypes;\typeinterpret{\ctype})$ and for all $p \in \{1,\dots,k\}$:
              $F$ is strict in argument $p$ if $p = i$ or some $F_j$ is strict in argument $p$;
        \item\label{lem:wm:smaller}
              if $F(\vec{x}) = \pair{ G_1(\vec{x}),\dots, G_{\typecount{\asort}}(\vec{x}) }$ and
              each $G_j$ is $\weakset(\wmatypes;\N)$ then $F$ is $\weakset(\wmatypes;\typeinterpret{\asort})$,
              and for all $p \in \{1,\dots,k\}$: $F$ is strict in argument $p$ if $G_1$ is.
    \end{enumerate}
    The last result uses functions mapping to $\N$; these can be constructed using the 
    observations:
    \begin{enumerate}[resume]
        \item\label{lem:wm:constant}
              if $G(\vec{x}) = n$ for some $n \in \N$ then $G$ is $\weakset(\wmatypes;\N)$;
        \item\label{lem:wm:projectindex}
              if $G(\vec{x}) = x^i_j$ and $\atype_i = \asort \in \sortset$ and $1 \leq j \leq \typecount{\asort}$,
              then $G$ is $\weakset(\wmatypes;\N)$, and $G$ is strict in argument $i$ if $j = 1$;
        \item\label{lem:wm:operator}
              if $G(\vec{x}) = f(G_1(\vec{x}),\dots,G_n(\vec{x}))$ and all $G_j$ are
              $\weakset(\wmatypes;\N)$ and $f$ is $\weakset(\N,\dots,\N;\N)$, then $G$ is
              $\weakset(\wmatypes;\N)$, and for all $p \in \{1,\dots,k\}$: $G$ is strict in argument
              $p$ if, for some $j \in \{1,\dots,n\}$: $G_j$ is
              strict in argument $p$ and $f$ is strict in argument $j$;
        \item\label{lem:wm:bigger}
              if $G(\vec{x}) = F(\vec{x})_j$ and
              $F$ is $\weakset(\wmatypes;\typeinterpret{\asort})$ and $1 \leq j \leq \typecount{\asort}$ then $G$
              is $\weakset(\wmatypes;\N)$ and if $j = 1$ then for all $p \in \{1,\dots,k\}$:
              $G$ is strict in argument $p$ if $F$ is.
    \end{enumerate}
\end{lemma}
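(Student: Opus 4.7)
\medskip\noindent
\textbf{Proof plan.} The plan is to verify each of the seven clauses independently, in the order listed, since the later clauses for $\mathbb{N}$-valued functions are used to justify parts of clause (3). For every clause the verification consists of two checks: weak monotonicity (when all arguments are replaced by $\typegeq{}$-larger ones, the output is $\typegeq{}$-larger) and, where applicable, strictness in the indicated argument $p$ (when the $p$-th argument is additionally replaced by a $\typegr{}$-larger one, the output is $\typegr{}$-larger). Throughout I will use Lemma~\ref{lemma:typeOrd} (in particular $x \typegr{} y \typegeq{} z \Rightarrow x \typegr{} z$) and the definitions from Definition~\ref{def:SM} that tell us how $\typegr{\atype \arrtype \btype}$ and $\typegeq{\atype \arrtype \btype}$ are computed pointwise and what strong monotonicity of an inhabitant of $\typeinterpret{\atype \arrtype \btype}$ means.

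First I would dispatch the easy cases. Clause (1) is immediate from the definition of the pointwise orderings: if $x^i \typegeq{\atype_i} y^i$ (resp.\ $\typegr{\atype_i}$) then $F(\vec x) = x^i$ is in the corresponding relation with $F(\vec y) = y^i$. Clause (4) is trivial since $n \geq n$ on $\mathbb{N}$, with no argument being strict. Clause (5) is immediate from Definition~\ref{def:tuple_algebra}, recalling that $\typegr{\asort}$ demands only that the \emph{first} coordinate strictly decreases; this is exactly what makes $j=1$ the distinguished strict case. For clause (6) I would use that composition preserves weak monotonicity of functions on $\mathbb{N}$; strictness in argument $p$ of $G$ requires some $G_j$ strict in $p$ \emph{and} $f$ strict in $j$, because only then can we lift the strict inequality through $f$. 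Clause (7) follows by unfolding: if $F(\vec x) \typegeq{\asort} F(\vec y)$ then every coordinate is $\geq$, and when $j = 1$ and $F$ is strict in $p$ we have strict increase in the first coordinate, hence strict increase of $G$. Clause (3) is then assembled from (4)--(7): since $\pair{G_1,\dots,G_{\typecount{\asort}}}$ is $\typegeq{\asort}$-increased iff each coordinate is weakly increased, weak monotonicity of each $G_j$ suffices; strictness is governed by $G_1$ exactly because of how $\typegr{\asort}$ is defined.

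The main obstacle is clause (2), which combines weak monotonicity of the $F_j$ with strong monotonicity of the inhabitant $x^i \in \typeinterpret{\atype_i}$. Suppose $\vec x, \vec y$ with $x^q \typegeq{\atype_q} y^q$ for all $q$ (and possibly $x^p \typegr{\atype_p} y^p$). By weak monotonicity of each $F_j$ we have $F_j(\vec x) \typegeq{\btype_j} F_j(\vec y)$, and by strong monotonicity of the functional $x^i$ this gives $x^i(F_1(\vec x),\dots,F_n(\vec x)) \typegeq{\ctype} x^i(F_1(\vec y),\dots,F_n(\vec y))$. Since also $x^i \typegeq{\atype_i} y^i$ (pointwise), the right-hand side is $\typegeq{\ctype} y^i(F_1(\vec y),\dots,F_n(\vec y)) = F(\vec y)$. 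For the strict case, if $p = i$ then $x^i \typegr{\atype_i} y^i$ and Definition~\ref{def:SM} directly yields the strict pointwise drop $x^i(\vec z) \typegr{\ctype} y^i(\vec z)$ at $\vec z = (F_1(\vec y),\dots,F_n(\vec y))$; if instead some $F_j$ is strict in argument $p$, then $F_j(\vec x) \typegr{\btype_j} F_j(\vec y)$ and strong monotonicity of $x^i$ (strict in its $j$-th argument) produces the strict inequality. In both subcases I would chain the strict step with the already-established $\typegeq{}$ step via Lemma~\ref{lemma:typeOrd} to conclude $F(\vec x) \typegr{\ctype} F(\vec y)$.

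No induction on types is needed in any of the clauses; the proof is a careful but direct unfolding of definitions, with clause (2) being the only one where the interplay between the pointwise ordering on $\typeinterpret{\atype_i \arrtype \cdots}$ and the strong monotonicity of $x^i$ must be handled in the right order.
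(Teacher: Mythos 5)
Your proposal follows the same case-by-case verification the paper sketches, and the treatment of clauses (1) and (3)--(7) is fine. There is, however, a subtle but real flaw in the strictness argument of clause (2). You apply the strict pointwise comparison $x^i(\vec z) \typegr{\ctype} y^i(\vec z)$ at $\vec z = (F_1(\vec y),\dots,F_n(\vec y))$ and then chain with the already-established $F(\vec x) = x^i(F_1(\vec x),\dots) \typegeq{\ctype} x^i(F_1(\vec y),\dots)$. This puts the chain in the order $\typegeq{\ctype}\;\cdot\;\typegr{\ctype}$, but Lemma~\ref{lemma:typeOrd} only licenses $\typegr{\ctype}\;\cdot\;\typegeq{\ctype} \subseteq \typegr{\ctype}$; an extended well-founded set as defined in the preliminaries is not required to satisfy $\typegeq{}\cdot\typegr{}\subseteq\typegr{}$. (For the concrete tuple algebra of Definition~\ref{def:tuple_algebra} that inclusion happens to hold, but establishing it is an extra step which you neither perform nor get from Lemma~\ref{lemma:typeOrd}.) The fix is to evaluate the strict comparison at the \emph{larger} argument tuple and only then descend weakly: $F(\vec x) = x^i(F_1(\vec x),\dots,F_n(\vec x)) \typegr{\ctype} y^i(F_1(\vec x),\dots,F_n(\vec x)) \typegeq{\ctype} y^i(F_1(\vec y),\dots,F_n(\vec y)) = F(\vec y)$, which is a $\typegr{\ctype}\cdot\typegeq{\ctype}$ chain and is covered by Lemma~\ref{lemma:typeOrd}. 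The same care is needed in the subcase where some $F_j$ is strict in $p$: replace the $j$-th argument strictly first, keeping the remaining arguments at their $\vec x$-values, and only then weaken the other arguments; if you weaken some arguments before the strict step, a forbidden $\typegeq{}\cdot\typegr{}$ link reappears. With that reordering the argument is correct and matches the paper's (much terser) proof sketch.
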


\begin{proof}[Proof Sketch]
    We easily see that in each case, $F$ or $G$ is in the given function space.  To show weak monotonicity, assume
    given both $\vec{x}$ and $\vec{y}$ such that each $x^i \typegeq{} y^i$; we then check for all cases
    that $F(\vec{x}) \typegeq{} F(\vec{y})$, or $G(\vec{x}) \geq G(\vec{y})$.
    For the strictness conditions, we assume that $x^p \typegr{} y^p$ and similarly check all cases.
\end{proof}

The reader may recognise items (\ref{lem:wm:constant}--\ref{lem:wm:operator}): these largely
correspond to the sufficient conditions for a weakly monotonic function $S$ in Lemma \ref{lem:maxpol}.
For the function $f$ in item (\ref{lem:wm:operator}), we could for instance choose
$+$, $*$ or $\max$, where $+$ is strict in all arguments.
However, we can get beyond Lemma \ref{lem:maxpol} by using the other items; for example, applying
variables to each other.

Now, if a function $f$ is $\weakset(\wmatypes;\typeinterpret{\btype})$ and $f$ is strict in all its
arguments, then we easily see that the function $d_1 \mapsto \dots \mapsto d_k \mapsto f(d_1,\dots,
    d_k)$ is an element of $\typeinterpret{\atype_1 \arrtype \dots \arrtype \atype_k \arrtype \btype}$.
To illustrate how this can be used in practice, we show monotonicity of $\funcinterpret{\map}$ of
Example \ref{ex:comp_of_map}:

\begin{example}
    Suppose
    $\funcinterpret{\map}(F,\listvar) = (\ F(\pair{\listvar_\cost,\listvar_\mmax})_\cost + \listvar_\leng
        * F(\pair{\listvar_\cost,\listvar_\mmax})_\cost + \listvar_\leng + 1\ ,\ \listvar_\leng\ ,\
        F(\pair{\listvar_\cost,\listvar_\mmax})_\leng\ )$.
    By (\ref{lem:wm:projectindex}), the functions
    $(F,\listvar) \mapsto \listvar_i$ are $\weakset(\typeinterpret{\nat \arrtype\nat},
        \typeinterpret{\lst};\N)$ for $i \in \{\cost,\leng,\mmax\}$ and moreover,
    $(F,\listvar) \mapsto \listvar_\cost$ is strict in argument 2.
    Hence, by (\ref{lem:wm:smaller}),
    $(F,\listvar) \mapsto \pair{\listvar_\cost,\listvar_\mmax}$ is $\weakset(\typeinterpret{\nat
            \arrtype \nat},\typeinterpret{\lst};\linebreak
        \typeinterpret{\nat})$ and strict in argument 2.
    Therefore, by (\ref{lem:wm:projectapply}),
    $(F,\listvar) \mapsto F(\pair{\listvar_\cost,\listvar_\mmax})$ is
    $\weakset(\typeinterpret{\nat
            \arrtype \nat},\linebreak
        \typeinterpret{\lst};\typeinterpret{\nat})$ and strict in both arguments.
    Hence, by (\ref{lem:wm:bigger}), $(F,\listvar) \mapsto F(\pair{\listvar_\cost,\listvar_\mmax})_\cost$ and
    $(F,\listvar) \mapsto F(\pair{\listvar_\cost,\listvar_\mmax})_\leng$ are
    $\weakset(\typeinterpret{\nat \arrtype\nat},\typeinterpret{\lst};\N)$ and the former is
    strict in both arguments.

    Continuing like this, it is not hard to see how we can iteratively prove that $(F,\listvar) \mapsto
    (\ F(\pair{\listvar_\cost,\listvar_\mmax})_\cost + \listvar_\leng * F(\pair{\listvar_\cost,
    \listvar_\mmax})_\cost + \listvar_\leng + 1\ ,\ \listvar_\leng\ ,\ F(\pair{\listvar_\cost,
    \listvar_\mmax})_\leng\ )$ is $\weakset(\typeinterpret{\nat \arrtype \nat},
    \typeinterpret{\lst};\typeinterpret{\lst})$ and strict in both arguments,
    which immediately gives $\funcinterpret{\map} \in \typeinterpret{(\nat \arrtype \nat) \arrtype \lst \arrtype \lst}$.
\end{example}

In practice, it is usually not needed to write such an elaborate proof: Lemma \ref{lem:wm}
essentially tells us that if a function is built exclusively using variables and variable
applications, projections $F(\vec{x})_j$, constants, and weakly monotonic operators over the natural
numbers, then that function is weakly monotonic; we only need to check that the cost component
indeed increases if one of the variables $x^i$ is increased.

Unfortunately, while Lemma \ref{lem:wm} is useful for rules like the ones for $\map$, it is not
enough to handle functions like $\foldl$, where the same function is repeatedly applied on a term.
As $\foldl$-like functions occur more often in higher-order rewriting, we should also address this.

To handle iteration, we define: for a function $Q \in A \arrfunc A$ and natural number $n$, let
$Q^n(a)$ indicate repeated function application; that is, $Q^0(a) = a$ and $Q^{n+1}(a) = Q^n(Q(a))$.

\begin{restatable}{lemma}{iteration}\label{lem:iterate}
    Suppose $F$ is $\weakset(\wmatypes,\typeinterpret{\btype \arrtype \btype})$ and
    $G$ is $\weakset(\wmatypes;\N)$.
    Suppose that for all $u^1 \in \typeinterpret{\atype_1},\dots,u^k \in \typeinterpret{\atype_k}$
    and $v \in \typeinterpret{\btype}$ we have: $F(u^1,\dots,u^k,v) \typegeq{\btype} v$.
    Then the function $(x^1,\dots,x^k) \mapsto F(x^1,\dots,x^k)^{G(x^1,\dots,x^k)}$ is
    $\weakset(\wmatypes,\typeinterpret{\btype \arrtype \btype})$.
\end{restatable}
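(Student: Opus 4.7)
The plan is to prove weak monotonicity by direct pointwise comparison. Given $\vec{x}, \vec{y}$ with $x^{i} \typegeq{\atype_i} y^{i}$ for each $i$, I would abbreviate $m := G(\vec{x})$, $n := G(\vec{y})$, $f := F(\vec{x})$, $g := F(\vec{y})$. Weak monotonicity of $F$ and $G$ immediately gives $m \geq n$ and $f \typegeq{\btype \arrtype \btype} g$, and by the pointwise definition of $\typegeq{\btype \arrtype \btype}$ the goal reduces to showing $f^{m}(v) \typegeq{\btype} g^{n}(v)$ for every $v \in \typeinterpret{\btype}$. I would also briefly note that $f^{m}$ and $g^{n}$ do belong to $\typeinterpret{\btype \arrtype \btype}$, since strong monotonicity is preserved under composition (and the identity is strongly monotonic), so the pointwise comparison is well-posed.

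The strategy is then to split the inequality via transitivity of $\typegeq{\btype}$ (Lemma~\ref{lemma:typeOrd}) through the intermediate term $f^{n}(v)$, i.e.\ to show $f^{m}(v) \typegeq{\btype} f^{n}(v) \typegeq{\btype} g^{n}(v)$. For the left inequality, I would first lift the non-decrease hypothesis $F(\vec{u},v) \typegeq{\btype} v$, which unpacks as $f(w) \typegeq{\btype} w$, to $f^{i}(w) \typegeq{\btype} w$ for all $i$ and $w$ by a short induction on $i$ using weak monotonicity of $f$; instantiating at $i = m - n$ and $w = f^{n}(v)$ then yields the claim. For the right inequality, I would induct on $n$, using the paper's convention $Q^{n+1}(a) = Q^{n}(Q(a))$ to chain
\[
  f^{n+1}(v) \;=\; f^{n}(f(v)) \;\typegeq{\btype}\; g^{n}(f(v)) \;\typegeq{\btype}\; g^{n}(g(v)) \;=\; g^{n+1}(v),
\]
where the first $\typegeq{\btype}$ is the induction hypothesis and the second combines $f(v) \typegeq{\btype} g(v)$ (an instance of $f \typegeq{\btype \arrtype \btype} g$) with weak monotonicity of $g^{n}$.

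The hardest part will be purely bookkeeping: honouring the paper's iteration recursion $Q^{n+1}(a) = Q^{n}(Q(a))$ when decomposing iterated applications, and keeping track of which of $f$ or $g$ is being iterated in each intermediate step of the second induction. No deeper obstacle is expected, since the non-decrease hypothesis on $F$ is exactly what is needed to make the longer iteration $f^{m}$ dominate $f^{n}$, and weak monotonicity of $F$ is exactly what is needed to make $f^{n}$ dominate $g^{n}$.
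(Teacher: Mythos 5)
Your proof is correct, and it takes a genuinely different route from the paper's. The paper proves one combined statement by a single induction on $n$: for all $n \geq m$, $F(\vec{u})^{n} \typegeq{\btype \arrtype \btype} F(\vec{u'})^{m}$, with a three-way case split ($n=0$; $n=i{+}1=m$; $n=i{+}1$ and $i \geq m$) — the first branch is trivial, the second handles the ``change of base'' using weak monotonicity of $F$, and the third uses the non-decrease hypothesis to shed one application. You instead factor the inequality $f^{m}(v) \typegeq{\btype} g^{n}(v)$ through the intermediate $f^{n}(v)$ and dispatch the two resulting inequalities by independent inductions: one isolates the effect of enlarging the exponent (using only the non-decrease hypothesis plus $Q^{i+j} = Q^{i} \circ Q^{j}$), the other isolates the effect of enlarging the base (using only weak monotonicity of $F$). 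The two approaches are comparable in length, but yours is arguably more modular and makes the role of each hypothesis cleaner to see; the paper's single induction states a slightly more general auxiliary fact (valid for arbitrary $n \geq m$, not only $G$-values) at the cost of the case split. One small point your write-up glosses over but should make explicit: the left inequality $f^{m}(v) \typegeq{\btype} f^{n}(v)$ requires the identity $f^{m-n}(f^{n}(v)) = f^{m}(v)$, which is not literally the paper's recursion $Q^{n+1}(a) = Q^{n}(Q(a))$ but a short inductive consequence of it; you rightly flag this as bookkeeping, but in a full proof it deserves a line.
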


With this in hand, we can orient the $\foldl$ rules of Example \ref{ex:ho_trs}.

\begin{example}\label{ex:foldl}
    For $F \in \typeinterpret{\nat \arrtype \nat \arrtype \nat}$ and $x,y \in \typeinterpret{\nat}$,
    let $\mathit{Helper}$ be defined by:
    \[\mathit{Helper}(F,y,x) = \pair{F(x,y)_\cost, \ \max(x_\size,F(x,y)_\size)}.\]
    Then $\mathit{Helper}$ is $\weakset(\typeinterpret{
            \nat \arrtype \nat \arrtype \nat},\typeinterpret{\nat},\typeinterpret{\nat};\typeinterpret{\nat})$
    and strict in its third argument by Lemma
    \ref{lem:wm}(\ref{lem:wm:project},\ref{lem:wm:projectapply},\ref{lem:wm:smaller},\ref{lem:wm:operator},\ref{lem:wm:bigger}),
    Hence, $\mathit{Helper}$ is $\weakset(\typeinterpret{
            \nat \arrtype \nat \arrtype \nat},\typeinterpret{\nat};\typeinterpret{\nat \arrtype \nat})$.
    Since, in general, $\costof{\nat}(F(x,y)) \geq \costof{\nat}(x)$, we have $\mathit{Helper}(F,y,x) \typegeq{\nat} x$.
    Using Lemma \ref{lem:iterate}, we therefore see that the function
    $(\aFuncVar,\cvar,\aListVar) \mapsto \mathit{Helper}(\aFuncVar,\pair{\aListVar_\cost,
            \aListVar_\mmax})^{\aListVar_\leng}(\cvar)$ is weakly monotonic, and strict in its second argument.
    This ensures that the following function is in $\typeinterpret{}$.
    \[
        \interpret{\foldl(\aFuncVar,\cvar,\aListVar)} =
        \mathit{Helper}(\aFuncVar,\pair{\aListVar_\cost,\aListVar_\mmax})^{\aListVar_\leng}(\pair{1 + \aListVar_\cost +
          \aListVar_\leng + \aFuncVar(\nul{\nat},\nul{\nat})_\cost + \cvar_\cost,\cvar_\size})
    \]
    This interpretation function is compatible with the rules for $\foldl$ in Example \ref{ex:ho_trs}.
    First, we have
    $\interpret{\foldl(\aFuncVar, \cvar, \nil)} = \pair{\ 1 + F(\nul{\nat},\nul{\nat})_\cost + \cvar_\cost,\ \cvar_\size \ }
        \typegr{\nat} \pair{\cvar_\cost,\cvar_\size} = \cvar$, which orients the first rule.  For the second,
    we will use the general property that (**) $\aFuncVar(\addcost{}(n,\avar),\bvar) \typegeq{} \addcost{}(n,\aFuncVar(\avar,\bvar))$
    (Lemma \ref{lem:addcostincrease}).
    We denote $A := \pair{\avar_\cost + \aListVar_\cost,\max(\avar_\size,\aListVar_\mmax)}$ and
    $B := 1 + \aListVar_\cost + \aListVar_\leng + \aFuncVar(\nul{\nat},\nul{\nat})_\cost + \cvar_\cost$.
    Then we have $\interpret{\foldl(\aFuncVar, \cvar, \avar : \aListVar)} =
    \mathit{Helper}(\aFuncVar,A)^{\aListVar_\leng+1}(\pair{B + \avar_\cost + 1,\cvar_\size})$, which:
    \[
        \begin{array}{rll}
             & \typegr{\nat} & \mathit{Helper}(\aFuncVar,A)^{\aListVar_\leng}(\mathit{Helper}(\aFuncVar,A,\pair{B,\cvar_\size}))
                \text{\ because}\ \pair{B + \avar_\cost + 1,\cvar_\size} \typegr{\nat} \pair{B,z_\size} \\
             & \typegeq{\nat} & \mathit{Helper}(\aFuncVar,A)^{\aListVar_\leng}(\aFuncVar(\pair{B,\cvar_\size},A))
                \text{\ because}\ \mathit{Helper}(\aFuncVar,n,m) \typegeq{\nat} \aFuncVar(m,n) \\
             & \typegr{\nat} &
            \mathit{Helper}(\aFuncVar,\pair{\aListVar_\cost,\aListVar_\mmax})^{\aListVar_\leng}(\aFuncVar(\pair{B,z_\size},\avar)) \text{\ because}\
               A \typegeq{\nat} \pair{\aListVar_\cost,\aListVar_\mmax}\ \text{and}\ A \typegeq{\nat} \avar \\
             & \typegeq{\nat} & \mathit{Helper}(F,\pair{\aListVar_\cost,\aListVar_\mmax})^{\aListVar_\leng}(\addcost{\nat}(
                1 + \aListVar_\cost + \aListVar_\leng + \aFuncVar(\nul{\nat},\nul{\nat})_\cost,\aFuncVar(\cvar,\avar)))\ \text{by (**)} \\
             & = & \interpret{\foldl(\aFuncVar, (\aFuncVar\ \cvar\ \avar), \aListVar)}.\\
        \end{array}
    \]
\end{example}

The interpretation in Example \ref{ex:foldl} may \emph{seem} too convoluted for practical use:
it does not obviously tell us something like ``$F$ is applied a linear number of times on terms whose
size is bounded by $n$''.
However, its value becomes clear when we plug in specific bounds for $F$.

\begin{example}\label{ex:foldl-sum}
    The function $\sumList$, defined in Example \ref{ex:append-reverse}, could alternatively be defined in terms of $\foldl$:
    let \(\sumList(\aListVar) \to \foldl(\abs{\avar\bvar}{(\avar \add \bvar)}, \zero, \aListVar)\).
    To find an interpretation for this function, we use the interpretation functions for $\zero$, $\suc$, $\nil$, $\cons$
    and $\add$ from Example \ref{ex:revinterpret}.
    Then $\interpret{\abs{\avar\bvar}{(\avar \add \bvar)}} = d,e \mapsto (d_\cost + e_\cost + e_\size + 3, d_\size + e_\size)$.
    We easily see that
    $\mathit{Helper}(\interpret{\abs{\avar\bvar}{(\avar \add \bvar)}},\pair{\aListVar_\cost,\aListVar_\mmax},z) =
        \pair{ z_\cost + \aListVar_\cost + \aListVar_\mmax + 3, z_\size + \aListVar_\mmax }$.  Importantly, the iteration variable
    $z$ is used in a very innocent way: although its size is increased, this increase is by the same number ($\aListVar_\mmax$)
    in every iteration step.  Moreover, the length of $z$ does not affect the evaluation cost.  Hence, we can choose
    $\interpret{\sumList(\aListVar)} = \pair{5 + \aListVar_\cost + \aListVar_\leng + \aListVar_\leng * (\aListVar_\cost + \aListVar_\mmax + 3), \aListVar_\leng * \aListVar_\mmax}$.
    This is close to the interpretation from Example \ref{ex:revinterpret}
    but differs both in a small overhead for the $\beta$-reductions, and
    because our interpretation of $\foldl$ slightly overestimates the true cost.

    This approach can be used to obtain bounds for any function that may be defined in terms of $\foldl$, which
    includes many first-order functions.  For example, with a small change to the signature of $\foldl$, we could let
    $\rev(\aListVar) = \foldl(\abs{\avar \bvar}{(\bvar : \avar)}, \nil, \aListVar)$;
    however, this would necessitate corresponding changes in the interpretation of $\foldl$.
\end{example}

\section{Finding complexity bounds}\label{sec:bounds}

A key notion in complexity analysis of first-order rewriting is \emph{runtime complexity}.  In
this section, we will define a conservative notion of runtime complexity for higher-order term
rewriting, and show how our interpretations can be used to find runtime complexity bounds.

In first-order (and many-sorted) term rewriting, a \emph{defined symbol} is any function symbol
$\afun$ such that there is a rule $\afun(\ell_1,\dots,\ell_k) \to r$ in the system;
all other symbols are called \emph{constructors}.
A \emph{ground constructor term} is a ground term without defined symbols.
A \emph{basic term} has the form $\afun(\aterm_1,\dots,\aterm_k)$ with $\afun$ a defined
symbol and $\aterm_1,\dots,\aterm_k$ all ground constructor terms.
The \emph{runtime complexity} of a TRS is then a function $\varphi$ in $(\N \setminus \{0\}) \arrfunc \N$ that maps each $n$ to
a number $\varphi(n)$ so that for every basic term $s$ of size at most $n$: $\dht{s} \leq \varphi(n)$.

The comparable notion of \emph{derivational complexity} considers the derivation height for
arbitrary ground terms of size $n$, but we will not use that here, since it can often give very
high bounds that are not necessarily representative for realistic use of the system.
In practice, a computation with a TRS would typically start with a main function,
which takes \emph{data} (e.g., natural numbers, lists) as input.
This is exactly a basic term.
Hence, the notion of runtime complexity roughly captures the worst-case number of steps for a realistic computation.

It is not obvious how this notion translates to the higher-order setting.
It may be tempting to literally apply the definition to an AFS,
but a ``ground constructor term'' (or perhaps ``closed constructor term'')
is not a natural concept in higher-order rewriting; it does not intuitively capture data.
Moreover, we would like to create a \emph{robust} notion which can be extended to simple functional programming languages,
so which is not subject to minor language difference like whether partial application of function symbols is allowed.

Instead, there are two obvious ways to capture the idea of input in higher-order rewriting:
\begin{itemize}
    \item \emph{closed irreducible terms}; this includes all ground constructor terms, but also
          for instance $\abs{x}{\zero \add x}$ (but not $\abs{x}{x \add \zero}$, since this can be
          rewritten following the rules in Example \ref{ex:append-reverse});
    \item \emph{data}: this includes only ground constructor terms with no higher-order subterms.
\end{itemize}

As we observed in Example \ref{ex:hor_f}, the size of a higher-order term does not capture its
behaviour.
Hence, a notion of runtime complexity using closed irreducible terms is not obviously meaningful---%
and might be closer to \emph{derivational} complexity due to defined symbols inside abstractions.
Therefore, we here take the conservative choice and consider \emph{data}.
\begin{definition}
    In an AFS $\trs$,
    a \emph{data constructor} is a function symbol $\sAr{c} :: [\asort_1 \times \dots \times \asort_k]
        \arrtype \asort_0$ with each $\asort_i \in \sortset$, such that there is no rule of the form
    $\sAr{c}(\ell_1,\dots,\ell_k) \to r$.
    A \emph{data term} is a term $\sAr{c}(d_1,\dots,d_k)$ such that $\sAr{c}$ is a constructor and all
    $d_i$ are also data terms.
\end{definition}

In practice, a sort is defined by its data constructors.
For example, $\nat$ is defined by $\zero$ and $\suc$, and $\lst$ by $\nil$ and $\cons$.
In typical examples of first- and higher-order term rewriting systems,
rules are defined to exhaustively pattern match on all constructors for a sort.

With this definition, we can conservatively extend the original notion of runtime complexity to be
applicable to both many-sorted and higher-order term rewriting.

\begin{definition}\label{def:runtime}
    A \emph{basic term} is a term of the form $\afun(d_1, \dots, d_k)$ with
    all $d_i$ data terms and $\afun$ not a data constructor.
    We let $|d|$ denote the total number of symbols in a basic term $d$.

    The \emph{runtime complexity} of an AFS is a function
    $\varphi \in (\N \setminus \{0\}) \arrfunc \N$
    so that for all $n$
    and basic terms $d$, with $|d| \leq n$: $\dht{d} \leq \varphi(n)$.
\end{definition}

Note that if $\afun(d_1,\dots,d_k)$ is a basic term, then $\afun :: [\asort_1 \times \dots \times \asort_k]
\arrtype \btype$ with all $\asort_i$ sorts.  Hence,
higher-order runtime complexity considers the same (first-order) notion of basic terms
as the first-order case;
terms such as $\map(\aFuncVar,s)$ or even $\map(\abs{\avar}{\suc(\avar)},\nil)$ are not basic.
One might reasonably question whether such a first-order notion is useful when studying the
complexity of \emph{higher-order} term rewriting.
However, we argue that it is:
runtime complexity aims to address the length of computations that begin at a typical starting point.
When performing a \emph{full program} analysis of 
an AFS,
the computation will still typically start in a basic term, for instance;
the entry-point symbol $\mathsf{main}$ applied to some user input $d_1, \dots, d_k$.

\begin{example}\label{ex:extrec}
We consider an AFS from the Termination Problem Database, v11.0 \cite{tpdb}.
\[
\begin{array}{rclcrcl}
\avar \add \zero & \arrz & \avar & \quad &
\sAr{rec}(\zero,\bvar,\aFuncVar) & \arrz & \bvar \\
\avar \add \suc(\bvar) & \arrz & \suc(\avar \add \bvar) & \quad &
\sAr{rec}(\suc(\avar),\bvar,\aFuncVar) & \arrz & \aFuncVar \cdot \avar \cdot \sAr{rec}(\avar,\bvar,\aFuncVar) \\
& & & &
\avar \otimes \bvar & \arrz & \sAr{rec}(\bvar,\zero,\abs{n}{\abs{m}{\avar \oplus m}}) \\
\end{array}
\]
Here, $\sAr{rec} :: [\nat \times \nat \times (\nat \arrtype \nat \arrtype \nat)] \arrtype \nat$.
The only basic terms have the form $\suc^n(\zero) \add \suc^m(\zero)$ or $\suc^n(\zero) \otimes
\suc^m(\zero)$. Using our method, we obtain cubic runtime complexity; to be precise:
$\mathcal{O}(m^2 * n)$.  The interpretation functions are found in Appendix \ref{app:examples}.
\end{example}

To derive runtime complexity for both first- and higher-order rewriting,
our approach is to consider bounds for the functions $\funcinterpret{\afun}$;
we only need to consider the first-order symbols $\afun$.

\begin{definition}\label{def:linear-bol-int}
    Let $P \in \typeinterpret{\asort_1 \arrtype \dots \arrtype \asort_m \arrtype \bsort}$ be of the form
    $P(x^1,\dots,x^m) = \pair{ P_1(x^1,\dots,x^m), \linebreak \dots, P_{\typecount{\bsort}}(x^1, \dots, x^m) }$.
    Then $P$ is \emph{linearly bounded} if each component function $P_l$ of $P$ is upper-bounded by a positive
    linear polynomial, i.e., there is a constant $a \in \N$ such that
    $P_l(x^1,\dots,x^m) \leq a * (1 + \sum_{i=1}^m\sum_{j=1}^{\typecount{\asort_i}} x^i_j)$.
    We say that $P$ is \emph{additive} if
    there exists a constant $a \in \N$ such that $\sum_{l = 1}^{\typecount{\bsort}} P_l(x^1,\dots,x^m) \leq
    a + \sum_{i=1}^m\sum_{j=1}^{\typecount{\asort_i}} x^i_j$.
\end{definition}

By this definition, $P_l$ is not required to be a linear function, only to be bounded by one.
This means that for instance $\min(x^i_j,2 * x^a_b)$ can be used, but $x^i_j * x^a_b$ cannot.
It is easily checked that all the data constructors in this paper have an additive interpretation.
For example, for $\funcinterpret{\cons}$: $(\avar_\cost + \aListVar_\cost) + (\avar_\leng + 1) + \max(\avar_\size,
\aListVar_\mmax) \leq 1 + \avar_\cost + \avar_\size + \aListVar_\cost + \aListVar_\leng + \aListVar_\size$.

\begin{restatable}{lemma}{boundingIntData}\label{lem:boun-const-int-data}
    Let $\trs$ be an AFS or TRS that is compatible with a strongly monotonic algebra with interpretation
    function $\funcinterpret{}$.
    Then:
    \begin{enumerate}
        \item\label{lem:boun-const-int-data:additive}
              if $\funcinterpret{\sAr{c}}$ is additive for all data constructors $\sAr{c}$, then there exists
              a constant $b > 0$ in $\N$ so that for all data terms $s$: if $|s| \leq n$ then
              $\interpret{s}_l \leq b * n$, for each component $\interpret{s}_l$ of $\interpret{s}$;
        \item\label{lem:boun-const-int-data:linear}
              if $\funcinterpret{\sAr{c}}$ is linearly bounded for all data constructors $\sAr{c}$, then there exists
              a constant $b > 0$ in $\N$ so that for all data terms $s$: if $|s| \leq n$ then $\interpret{s}_l
              \leq 2^{b * n}$, for each component $\interpret{s}_l$ of $\interpret{s}$.
    \end{enumerate}
\end{restatable}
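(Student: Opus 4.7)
The plan is to prove both parts by induction on the structure of the data term $s$, exploiting that every data term is either a data constructor constant $\sAr{c}$ or has the form $\sAr{c}(d_1,\dots,d_k)$ with each $d_i$ itself a data term. The key preparatory step is to collect, over the (finite) set of data constructors, the constants witnessing additivity (resp.\ linearity) and take suitable maxima so that a single global $b$ works uniformly in the induction.

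For part (1), I would actually prove the slightly stronger statement $\sum_l \interpret{s}_l \leq b \cdot |s|$, from which the per-component bound follows since every tuple entry is a natural number. Let $a_{\sAr{c}}$ be the additive constant for $\sAr{c}$ and set $b := \max\{1,\,\max_{\sAr{c}} a_{\sAr{c}}\}$. The base case $s = \sAr{c}$ of arity zero gives $\sum_l \interpret{s}_l \leq a_{\sAr{c}} \leq b = b \cdot 1$. For the inductive step with $s = \sAr{c}(d_1,\dots,d_k)$, additivity yields $\sum_l \interpret{s}_l \leq a_{\sAr{c}} + \sum_i \sum_j \interpret{d_i}_j$; the induction hypothesis bounds each inner double sum by $b \cdot |d_i|$, so using $|s| = 1 + \sum_i |d_i|$ we get $\sum_l \interpret{s}_l \leq a_{\sAr{c}} + b(|s|-1) \leq b \cdot |s|$.

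For part (2), I would prove directly that every component satisfies $\interpret{s}_l \leq 2^{b \cdot |s|}$. Let $A$ be the maximum of the linear constants $a_{\sAr{c}}$, let $N$ bound the arity of any data constructor, and let $K$ bound $\typecount{\asort}$ over all sorts $\asort$ that appear as input or output of a data constructor. From linear boundedness and the induction hypothesis, $\interpret{s}_l \leq a_{\sAr{c}}\bigl(1 + \sum_i \sum_j \interpret{d_i}_j\bigr) \leq A\bigl(1 + NK \cdot 2^{b(|s|-1)}\bigr)$, where we use $|d_i| \leq |s|-1$ for each $i$ and the fact that the inner double sum has at most $NK$ entries. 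Since $NK \cdot 2^{b(|s|-1)} \geq 1$, this is bounded by $2ANK \cdot 2^{b(|s|-1)}$, so the induction closes whenever $b$ is chosen with $2^b \geq 2ANK$; the base case (arity zero) is then immediate since $\interpret{\sAr{c}}_l \leq A \leq 2^b$.

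The main obstacle is not conceptual but bookkeeping: we must observe that the finiteness of the signature of data constructors permits uniform choices of the constants $A$, $N$, $K$, and of the additive constants, so that a single global $b$ serves the whole induction, and in part (2) we must carefully select $b$ large enough to absorb the multiplicative overhead $2ANK$ into the exponential. Once these constants are fixed, both inductive arguments are routine.
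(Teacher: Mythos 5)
Your proof is correct and uses the same structural induction on data terms as the paper, with part (1) essentially identical (the paper proves $\sum_l \interpret{s}_l \leq a\cdot|s|$ where $a = \max_{\sAr{c}} a_{\sAr{c}}$; your $\max(1,\cdot)$ is a harmless extra precaution to guarantee $b>0$). In part (2) your bookkeeping diverges from the paper's in a way worth noting. The paper sets $k = \max(2,\max_i \typecount{\asort_i})$, bounds the double sum via $a_{\sAr{c}}(1+z) \leq 2az$, converts $\sum_i 2^{ak|d_i|}$ to $\prod_i 2^{ak|d_i|}$ using a dedicated claim (which requires each factor $\geq 2$), and finally absorbs $2ak$ via $2z \leq 2^z$; the exponent then telescopes exactly to $ak(|s|-1)$. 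You instead coarsely bound every $\interpret{d_i}_j$ by $2^{b(|s|-1)}$, count the entries by $NK$, and absorb the multiplicative slack $2ANK$ by choosing $b$ with $2^b \geq 2ANK$. This avoids the $\sum \leq \prod$ trick and the $2z \leq 2^z$ step at the cost of a possibly larger $b$, which is immaterial for the statement being proved. One small caution: if you want a uniform inductive step, take $N \geq 1$ (e.g.\ $N := \max(1,\text{max arity})$), since you use $NK\cdot 2^{b(|s|-1)} \geq 1$; this is vacuous when the only data constructors are constants, but stating it keeps the argument airtight. Your version also sidesteps a minor fragility in the paper's ``$1+z \leq 2z$ for $z \geq 1$'' step, which as written does not cover $z=0$, whereas your ``$NK\cdot 2^{b(|s|-1)} \geq 1$'' always holds.
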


By using Lemma \ref{lem:boun-const-int-data}, we quickly obtain some ways to bound runtime complexity:

\begin{corollary}\label{col:bound-basic-func-pol}
    Let $\trs$ be an AFS or TRS that is compatible with a strongly monotonic algebra with interpretation
    function $\funcinterpret{}$, and let $\signature_C$ denote its set of data constructors, and
    $\signature_B$ the set of all other symbols $\afun$ with a signature $\afun :: [\asort_1\times
    \dots \times \asort_m] \arrtype \btype$.
    Then:
    \begin{itemize}
        \item if $\funcinterpret{\afun}$ is additive for all $\afun \in \signature_C \cup
                  \signature_B$, then $\trs$ has linear runtime complexity;
        \item if $\funcinterpret{\sAr{c}}$ is additive for all $\sAr{c} \in \signature_C$ and
              for all $\afun \in \signature_B$, $\funcinterpret{\afun}(\vec{x}) = (P_1(\vec{x}),
              \dots,P_k(\vec{x}))$ where $P_1$ is bounded by a polynomial, then $\trs$ has
              polynomial runtime complexity;
        \item if $\funcinterpret{\afun}$ is linearly bounded for all $\afun \in \signature_C \cup
                  \signature_B$, then $\trs$ has exponential runtime complexity.
    \end{itemize}
\end{corollary}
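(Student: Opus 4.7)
My plan is to reduce all three items to a common core and then dispatch each by a single substitution. The common core is: for any basic term $t = \afun(d_1,\ldots,d_k)$ with $\afun \in \signature_B$ and $|t| \leq n$, each $|d_i| \leq n$. Iterating Theorem~\ref{thm:typeinterpretworks} along a reduction chain starting at $t$, and using that in a tuple algebra $\typegr{\iota}$ is witnessed by strict decrease of the cost component (Definition~\ref{def:tuple_algebra}), I obtain $\dht{t} \leq \interpret{t}_\cost = \funcinterpret{\afun}(\interpret{d_1},\ldots,\interpret{d_k})_\cost$. So the remaining task is to bound each $\interpret{d_i}_l$ as a function of $n$ and feed these estimates into $\funcinterpret{\afun}$.

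For the first task I would simply invoke Lemma~\ref{lem:boun-const-int-data}: when constructors are additive (covering items 1 and 2) there is a constant $b > 0$ with $\interpret{d_i}_l \leq b \cdot n$ for every component $l$; when they are only linearly bounded (item 3) the corresponding bound is $\interpret{d_i}_l \leq 2^{b \cdot n}$. The three cases then reduce to direct substitutions. For item 1, additivity of $\funcinterpret{\afun}$ gives $\funcinterpret{\afun}(\vec{x})_\cost \leq a + \sum_{i,j} x^i_j$; combined with the linear bounds this yields a linear function of $n$. For item 2, the polynomial dominator $Q$ of $P_1$ composed with the linear substitutions $x^i_j \leq b \cdot n$ is again a polynomial in $n$, since polynomials are closed under composition with linear maps. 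For item 3, linear boundedness of $\funcinterpret{\afun}$ composed with the exponential bounds on $\interpret{d_i}_l$ gives $\interpret{t}_\cost \leq a \cdot (1 + C \cdot 2^{bn})$, which is $\mathcal{O}(2^{b' n})$ for a suitable $b'$. Since $\signature_B$ is finite, taking a maximum over $\afun \in \signature_B$ yields a single function $\varphi$ of the required asymptotic shape.

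The main obstacle is largely bookkeeping rather than a deep argument: I need to absorb into the constants the total number of tuple components across all argument positions (which depends on the sort profile of each $\afun$) and ensure the bound remains uniform after taking maxima over $\signature_B$. One small subtlety in item 2 is that the hypothesis only supplies a pointwise polynomial upper bound on $P_1$ rather than a polynomial equation; I would make this explicit so it is clear that substituting the linear estimates into the dominator still produces a polynomial in $n$, rather than inadvertently assuming $P_1$ is itself polynomial.
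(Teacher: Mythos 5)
Your proposal is correct and follows exactly the route the paper intends: the paper dispenses with a full proof, saying only that the items ``follow from the same strategy used above and utilize the bounds established for data terms,'' i.e.\ from Lemma~\ref{lem:boun-const-int-data}. Your elaboration---bounding $\dht{t}$ by the cost component via Theorem~\ref{thm:typeinterpretworks} and the definition of $\sqsupset_\iota$ in a tuple algebra, bounding each $\interpret{d_i}_l$ via Lemma~\ref{lem:boun-const-int-data}, then substituting into the additive/polynomial/linearly-bounded shape of $\funcinterpret{\afun}$ and taking a maximum over the (implicitly finite) $\signature_B$---is the argument the authors have in mind, and your care over the distinction in item~2 between $P_1$ being polynomial and being merely \emph{bounded} by a polynomial is a correct and worthwhile clarification.
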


We could easily use these results as part of an automatic complexity tool---and indeed, combine
them with other methods for complexity analysis.  However, this is not truly our goal: runtime
complexity is only a part of the picture, especially in higher-order term rewriting where we may
want to analyse modules that get much more hairy input.  Our technique aims to give more fine-grained
information, where we consider the impact of input with certain properties---like the length of a
list or the depth of a tree.  For this, the person interested in the analysis should be the one to
decide on the interpretations of the constructors.

With this information given, though, it should be possible to automatically find interpretations
for the other functions.
The search for the best strategy requires dedicated research,
which we leave to future work; however, we expect Lemmas \ref{lem:wm} and \ref{lem:iterate} to play a
large role.  We also note that while the cost component may depend on the other components, the
other components (which represent a kind of size property) typically do not depend on the cost.

\section{On Related Work}\label{sec:related}

\subparagraph*{Rewriting.}
There are several first-order complexity techniques based on interpretations.
For example, in \cite{bon:cic:mar:tou:98}, the consequences of using additive, linear,
and polynomial interpretations to the natural numbers are investigated;
and in \cite{hof:01}, context-dependent interpretations are introduced, which map terms to real numbers to obtain tighter bounds.
Most closely related to our approach are \emph{matrix interpretations}
\cite{waldmann:zantema:06,mos:sch:wal:08}, and a technique by the first author
for complexity analysis of conditional term rewriting \cite{kop:mid:ste:17}.
In both cases, terms are mapped to tuples as they are in our approach, although neither considers sort information,
and matrix interpretations use linear interpretation functions.
Our technique is a generalisation of both.

\subparagraph*{Higher-order Rewriting.}
In \emph{higher-order} term rewriting (but a formalism without $\lambda$-abstraction),
{Baillot and Dal Lago}~\cite{baillot:lago:16} develop a version of higher-order polynomial interpretations which,
like the present work, is based on v.d.~Pol's higher-order interpretations~\cite{pol:96}.
In similar ways to our Section \ref{sec:bounds}, the authors enforce polynomial bounds on derivational complexity
by imposing restrictions on the shape of interpretations.
Their method differs from ours in various ways, most importantly by mapping terms to $\N$ rather than tuples.
In addition, the interpretations are limited to higher-order polynomials.
This yields an ordering with the subterm property (i.e., $\afun(\dots,s,\dots) \typegr{} s$), which means that
TRSs like Example \ref{ex:quot} cannot be handled.  Moreover, it is not possible to find a general
interpretation for functions like $\foldl$ or $\sAr{rec}$; the method can only handle instances of
$\foldl$ with a linear function.

Beyond this, it unfortunately seems that relatively little work has thus far been done on complexity analysis of higher-order term rewriting.
However, complexity of \emph{functional programs} is an active field of research with a close relation to higher-order term rewriting.

\subparagraph*{Functional Programming.}
There are various techniques to statically analyse resource 
use of functional programs.
The
se
may be fully automated~\cite{ava:lag:17,benzinger:04,sinn:zuleger:veith:14},
semi-automated designed to reason about programmer specified-bounds~\cite{wang:wang:chlipala:17,cicek:garg:deepak:acar:15,handley:vazou:hutton:19},
or even manual techniques, integrated with type system or program logic semantics~\cite{carbonneaux:hoffman:ramananandro:shao:14,lago:gaboardi:11}.
We discuss the most pertinent ones.

An approach using rewriting for full-program analysis is to translate functional programs to TRSs \cite{ava:lag:mos:15},
which can be analysed using first-order complexity techniques.
This takes advantage of the large body of work on first-order complexity, but loses
information; the transformation often yields a system that is harder to analyse than the original.

The research methodology in most studies in functional programming differs significantly from 
rewriting
techniques.
Nevertheless,
there are some studies with clear connections to our approach;
in particular our separation of cost and size (and other structural properties).
Most relevant, in \cite{danner:et-al:15} the authors use a similar approach by giving semantics to a complexity-aware
intermediate language allowing arbitrary user-defined notions for size---such as list length or maximum
element size; recurrence relations are then extracted to represent the 
complexity.

Additionally, most modern complexity analysis is done via enhancements at the type system level
\cite{alves:kesner:ventura:20,ava:lag:17,hoffman:aehlig:hofman:12,rajani:gaboardi:garg:hoffman:21,handley:vazou:hutton:19,das:balzer:hoffman:19}.
For example, types may be annotated with a counter, the heap size or a data type's size measure.
Notably, a line of work on Resource-Aware ML \cite{hoffman:aehlig:hofman:12,niu:hoffman,kahn:hoffman:20}
studies resource use of OCaml programs with methods based on Tarjan's amortized analysis~\cite{tarjan:85}.
Types are annotated with \emph{potentials} (a cost measure), and type inference generates a set of
linear constraints which is sent over to an external solver.
For Haskell, Liquid Haskell~\cite{rondon:kawaguci:jhala:08,vazou:rondon:jhala:13} provides a language to annotate types, which can be
used to prove properties of the program; this was recently extended to include complexity~\cite{handley:vazou:hutton:19}.
Unlike RAML, this approach is not fully automatic: type annotations are checked, not derived.

These works in functional programming have a different purpose from ours: they study the resource
use in a specific language, typically with a fixed evaluation strategy.
Our method, in contrast, allows for arbitrary evaluation, which could be specified to various
strategies in future work.
Moreover, most of these works limit interest to full-program analysis.  We do this for runtime
complexity, but our method offers more, by providing general interpretations for individual functions
like $\map$ or $\foldl$.  Similarly, most of these works impose additive type annotations for the
constructors; we do not restrict the constructor interpretations outside Lemma \ref{lem:boun-const-int-data}.
On the other hand, many do consider (shallow) polymorphism, which we do not.

While in functional programming one considers resource usage~\cite{rajani:gaboardi:garg:hoffman:21,hoffman:aehlig:hofman:12},
rewriting is concerned with the number of steps,
which can be translated to a form of resource measure if the true cost of each step is kept low.
This is achieved by imposing restrictions on reduction strategy and term representation~\cite{accattoli:dal-lago:16,dal-lago:martini:10}.
Our approach carries the blessing of being general and machine independent
and the curse of not necessarily being a reasonable cost model.

\section{Conclusion and Future Work}\label{sec:disc}

In this paper, we have introduced tuple interpretations for many-sorted and higher-order term rewriting.
This includes providing a new definition of strongly monotonic algebras,
a compatibility theorem, a function $\makesm{}$
that orients $\beta$- and $\eta$-reductions,
and several lemmas to prove monotonicity of interpretation functions.
We also show that for certain restrictions on interpretation functions, we find linear,
polynomial or exponential bounds on runtime complexity
(for a simple but natural definition of higher-order runtime complexity).

Our type-based, semantical approach allows us to relate various ``size'' notions (e.g., list length,
tree depth, term size
. etc.) to reduction cost,
and thus offers a more fine-grained analysis 
than traditional notions like runtime complexity.
Most importantly, we can express the complexity of a higher-order function in terms of the behaviour of its
(function) arguments.  In the future, we hope that this 
could be used towards a truly higher-order complexity notion.

\subparagraph{Some further examples and weaknesses.}
Aside from the three higher-order examples in this paper, we have successfully applied our method
to a variety of higher-order benchmarks in the Termination Problem Database \cite{tpdb}, all with
additive interpretations for the constructors.  Two additional examples
(\textsf{filter} and \textsf{deriv}) are included in Appendix \ref{app:examples}.

A clear weakness we discovered was that our method can only handle ``plain function-passing''
systems \cite{kus:sak:07:1}.  That is, we typically do not succeed on systems where a
variable of function type occurs inside a subterm of base type, and occurs outside this subterm
in the right-hand side.
Examples of such systems are \textsf{ordrec}, which has a rule $\sAr{ordrec}(\sAr{lim}(\aFuncVar),
\avar,\bFuncVar,\cFuncVar) \arrz \cFuncVar \cdot \aFuncVar \cdot (\abs{n}{\sAr{ordrec}(\aFuncVar
\cdot n,\avar,\bFuncVar,\cFuncVar)})$ with $\sAr{lim} :: [\nat \arrtype \sTp{ord}] \arrtype
\sTp{ord}$
, and \textsf{apply}, which has a rule $\sAr{lapply}(\avar,\sAr{fcons}(\aFuncVar,
\aListVar)) \arrz \aFuncVar \cdot \sAr{lapply}(\avar,\aListVar)$ with $\sAr{fcons} ::
[(\sTp{a} \arrtype \sTp{a}) \times \sTp{listf}] \arrtype \sTp{listf}$.

\subparagraph{Future work.}
We intend to consider the effect of different evaluation strategies, such
as innermost evaluation, weak-innermost evaluation (where rewriting below an abstraction is not allowed, as
is commonly the case in functional programming) or outermost evaluation.
This extension is likely to be an important step towards another goal: to more closely relate our complexity
notion to a reasonable measure of resource consumption in a rewriting engine.

In addition, we plan to extend first-order complexity techniques like dependency tuples \cite{nao:moser:08},
which may allow us to overcome the weakness described above.
Another goal is to enrich our type system 
to support a notion of polymorphism and
add polymorphic interpretations into the play.
We also aim to develop a tool to automatically find suitable tuple interpretations.

\bibliography{main}

\newpage
\appendix
\markboth{Tuple Interpretations for Higher-Order Rewriting}{C. Kop and D. Vale}

\section{Extended examples}\label{app:examples}

In this appendix, we provide the proof details for the examples that were discussed in the text.
We also include some higher-order examples that were only briefly mentioned in the paper.

In all higher-order examples, we use $\addarg_{\atype,\btype}$ for $\makesm{\atype,\btype}$.

\subsection{Rev/Append}\label{app:rev}
Recall the interpretations given in Example \ref{ex:revinterpret}.  We use $\listvar$ instead of
$\aListVar$ to make the proofs easier to read.
\[
    \begin{array}{rclcrcl}
        \interpret{\zero}                        & = & \pair{0, 0}                                                                                                              &  &
        \interpret{\avar \add \bvar}             & = & \pair{\avar_\cost + \bvar_\cost + \bvar_\size + 1, \avar_\size + \bvar_\size}                                                                                                                       \\
        \interpret{\suc(\avar)}                  & = & \pair{\avar_\cost, \avar_\size + 1}                                                                                      &  &
        \interpret{\sumList(\listvar)}           & = & \pair{\listvar_\cost + 2\listvar_\leng + \listvar_\leng \listvar_\mmax + 1, \listvar_\leng\listvar_\mmax}                                                                                           \\
        \interpret{\nil}                         & = & \pair{0,0,0}                                                                                                             &  &
        \interpret{\rev(\listvar)}               & = & \pair{\listvar_\cost + \listvar_\leng + \listvar_\leng \frac{\listvar_\leng + 1}{2} + 1, \listvar_\leng, \listvar_\mmax}                                                                            \\
        \interpret{\avar : \listvar}             & = & \pair{\avar_\cost + \listvar_\cost, 1 + \listvar_\leng,                                                                  &  &
        \interpret{\append(\listvar, \listvar')} & = & \pair{\listvar_\cost + \listvar_\cost' + \listvar_\leng + 1, \listvar_\leng + \listvar_\leng',                                                                                                      \\
                                                 &   & \phantom{AB}\max(\avar_\size, \listvar_\mmax)}                                                                           &       &   &  & \phantom{ABCDEFGi} \max(\listvar_\mmax, \listvar_\mmax')} \\
    \end{array}
\]
With this interpretation, all rules in Example \ref{ex:append-reverse} are oriented.  First, we show the simple cases:
\[
    \begin{array}{rcccccl}
        \interpret{\avar \add \zero}                        & = & \pair{\avar_\cost + 0 + 0 + 1, \avar_\size}                                 & \sortgr{\nat} & \pair{\avar_\cost,\avar_\size} & = & \interpret{\avar} \\
        \interpret{\sumList(\nil)}                          & = & \pair{0+0+0+1,0*0}                                                          & \sortgr{\nat} & \pair{0,0}                     & = & \interpret{\zero} \\
        \interpret{\rev(\nil)}                              & = & \pair{0 + 0 + 1,0,0}                                                        & \sortgr{\lst} & \pair{0,0,0}                   & = & \interpret{\nil}  \\
        \interpret{\append(\nil,\listvar)}                  & = & \pair{0 + \listvar_\cost + 0 + 1,0 + \listvar_\leng,\max(0,\listvar_\mmax)} & \sortgr{\lst} &
        \pair{\listvar_\cost,\listvar_\leng,\listvar_\mmax} & = & \interpret{\listvar}                                                                                                                                 \\
    \end{array}
\]
As for the cases that require a bit more explanation:
\[
    \begin{array}{rll}
        \interpret{\avar \add \suc(\bvar)}              & =             & \pair{\avar_\cost + \bvar_\cost + (\bvar_\size + 1) + 1,(\avar_\size + \bvar_\size) + 1}                                                                                          \\
                                                        & =             & \pair{\avar_\cost + \bvar_\cost + \bvar_\size + 2,\avar_\size + \bvar_\size + 1}                                                                                                  \\
                                                        & \sortgr{\nat} & \pair{\avar_\cost + \bvar_\cost + 1,\avar_\size + \bvar_\size + 1}                                                                                                                \\
                                                        & =             & \interpret{\suc(\avar \add \bvar)}                                                                                                                                                \\
        \interpret{\sumList(\avar : \listvar)}          & =             & \pair{(\avar_\cost + \listvar_\cost) + 2(\listvar_\leng+1) + (\listvar_\leng +1) * \max(\avar_\size,\listvar_\mmax) + 1,
        (\listvar_\leng + 1) * \max(\avar_\size,\listvar_\mmax)}                                                                                                                                                                                            \\
                                                        & =             & \pair{\avar_\cost + \listvar_\cost + 2\listvar_\leng + \listvar_\leng * \max(\avar_\size,\listvar_\mmax) + \max(\avar_\size,\listvar_\mmax) + 3,                                  \\
                                                        &               & \phantom{AB} \listvar_\leng * \max(\avar_\size,\listvar_\mmax) + \max(\avar_\size,\listvar_\mmax)}                                                                                \\
                                                        & \sortgr{\nat} & \pair{ \avar_\cost + \listvar_\cost + 2\listvar_\leng + \listvar_\leng \listvar_\mmax + \avar_\size + 2,\listvar_\leng \listvar_\mmax + \avar_\size}                              \\
                                                        &               & \text{because}\ \max(\avar_\size,\listvar_\mmax) \geq \listvar_\mmax\ \text{and}\ \max(\avar_\size,\listvar_\mmax) \geq \avar_\size\ \text{and}\ 3 > 2                            \\
                                                        & =             & \pair{ (\listvar_\cost + 2\listvar_\leng + \listvar_\leng \listvar_\mmax + 1) + \avar_\cost + \avar_\size + 1,\listvar_\leng \listvar_\mmax + \avar_\size}                        \\
                                                        & =             & \interpret{\sumList(\listvar) \add \avar}                                                                                                                                         \\
        \interpret{\append(\avar : \listvar,\listvar')} & =             & \pair{(\avar_\cost + \listvar_\cost) + \listvar'_\cost + (1 + \listvar_\leng) + 1,(1 + \listvar_\leng) + \listvar'_\leng,
        \max(\max(\avar_\size,\listvar_\mmax),\listvar'_\mmax)}                                                                                                                                                                                             \\
                                                        & =             & \pair{\avar_\cost + \listvar_\cost + \listvar'_\cost + \listvar_\leng + 2,\listvar_\leng + \listvar'_\leng + 1,
        \max(\avar_\size,\listvar_\mmax,\listvar'_\mmax)}                                                                                                                                                                                                   \\
                                                        & \sortgr{\lst} & \pair{\avar_\cost + (\listvar_\cost + \listvar'_\cost + \listvar_\leng + 1),1 + (\listvar_\leng + \listvar'_\leng),\max(x_\size,\max(\listvar_\mmax,\listvar'_\mmax))}            \\
                                                        & =             & \interpret{\avar : \append(\listvar, \listvar')}                                                                                                                                  \\
        \interpret{\rev(\avar : \listvar)}              & =             & \pair{(\avar_\cost + \listvar_\cost) + (1 + \listvar_\leng) + (1 + \listvar_\leng) * (2 + \listvar_\leng) / 2 + 1,
        1 + \listvar_\leng,\max(\avar_\size,\listvar_\mmax)}                                                                                                                                                                                                \\
                                                        & =             & \pair{\avar_\cost + \listvar_\cost + \listvar_\leng + (1 + \listvar_\leng) * (2 + \listvar_\leng) / 2 + 2,
        1 + \listvar_\leng,\max(\avar_\size,\listvar_\mmax)}                                                                                                                                                                                                \\
                                                        & =             & \pair{\avar_\cost + \listvar_\cost + \listvar_\leng + (1 + \listvar_\leng) + \listvar_\leng * (1 + \listvar_\leng) / 2 + 2,
        1 + \listvar_\leng,\max(\avar_\size,\listvar_\mmax)}                                                                                                                                                                                                \\
                                                        &               & \text{because for all}\ n\ \text{we have:}                                                                                                                                        \\
                                                        &               & \frac{(n + 1)(n+2)}{2} = \frac{2 + 3n + n^2}{2} = 1 + n + \frac{n + n^2}{2} = 1 + n + \frac{n * (1+n)}{2}                                                                         \\
                                                        & =             & \pair{\avar_\cost + \listvar_\cost + 2\listvar_\leng + \listvar_\leng * (1 + \listvar_\leng) / 2 + 3,
        1 + \listvar_\leng,\max(\avar_\size,\listvar_\mmax)}                                                                                                                                                                                                \\
                                                        & \sortgr{\lst} & \pair{\avar_\cost + \listvar_\cost + 2\listvar_\leng + \listvar_\leng (\listvar_\leng + 1)/ 2 + 2,\listvar_\leng + 1,\max(\listvar_\mmax,\avar_\size)}                            \\
                                                        & =             & \pair{(\listvar_\cost + \listvar_\leng + \listvar_\leng \frac{\listvar_\leng + 1}{2} + 1) + \avar_\cost + \listvar_\leng + 1,\listvar_\leng + 1,\max(\listvar_\mmax,\avar_\size)} \\
                                                        & =             & \pair{\interpret{\rev(\listvar)}_\cost + \interpret{\avar : \nil}_\cost + \interpret{\rev(\listvar)}_\leng + 1,                                                                   \\
                                                        &               & \phantom{AB} \interpret{\rev(\listvar)}_\leng+ \interpret{\avar : \nil}_\leng,\max(\interpret{\rev(\listvar)}_\mmax,\interpret{\avar : \nil}_\mmax)}                              \\
                                                        & =             & \interpret{\append(\rev(\listvar), \avar : \nil)}                                                                                                                                 \\
    \end{array}
\]

\subsection{Quot/minus}

The full TRS for division in Example \ref{ex:quot} is:
\[
  \begin{array}{rclcrcl}
  \minus(x, \zero) & \to & x & \quad & \minus(\suc(x), \suc(y)) & \to & \minus(x,y) \\
  \quot(\zero, \suc(y)) & \to & \zero & & \quot(\suc(x), \suc(y)) & \to & \suc(\quot(\minus(x,y), \suc(y))) \\
  \end{array}
\]

Recall the interpretations we used:
    \[
        \begin{array}{rclcrcl}
            \interpret{\zero}        & = & \pair{0,0}                                                                            & \quad &
            \interpret{\minus(x, y)} & = & \pair{x_\cost + y_\cost + y_\size + 1, x_\size}                                                 \\
            \interpret{\suc(x)}      & = & \pair{x_\cost,x_\size+1}                                                              & \quad &
            \interpret{\quot(x,y)}   & = & \pair{x_\cost + x_\size + y_\cost + x_\size * y_\cost + x_\size * y_\size + 1, x_\size}           \\
        \end{array}
    \]

Then:
\[
\begin{array}{rcccccl}
\interpret{\minus(x, \zero)} & = & \pair{x_\cost + 1,x_\size} & \sortgr{\nat} & \pair{x_\cost,x_\size} & = & \interpret{x} \\
\interpret{\minus(\suc(x), \suc(y))} & = & \pair{x_\cost + y_\cost + (y_\size + 1) + 1,x_\size} & \sortgr{\nat} & \pair{x_\cost + y_\cost + y_\size + 1, x_\size} \\
\end{array}
\]

And:
\[
\begin{array}{rll}
\interpret{\minus(\suc(x), \suc(y))}
  & = & \pair{x_\cost + y_\cost + (y_\size + 1) + 1,x_\size} \\
  & > & \pair{x_\cost + y_\cost + y_\size + 1, x_\size} \\
  & = & \interpret{\minus(x,y)} \\
\interpret{\quot(\suc(x), \suc(y))}
  & = & \pair{x_\cost + (x_\size+1) + y_\cost + (x_\size+1) * y_\cost + (x_\size+1) * (y_\size + 1) + 1, x_\size+1} \\
  & = & \pair{x_\cost + x_\size + 1 + y_\cost + x_\size * y_\cost + y_\cost + x_\size * (y_\size + 1) + y_\size + 1 + 1, x_\size+1} \\
  & > & \pair{ (x_\cost + y_\cost + y_\size + 1) + x_\size + y_\cost + x_\size * y_\cost + x_\size * (y_\size + 1) + 1,  x_\size + 1} \\
  & = & \pair{ \interpret{\quot(\minus(x,y), \suc(y))}_\cost, \interpret{\quot(\minus(x,y), \suc(y))}_\size + 1} \\
  & = & \interpret{\suc(\quot(\minus(x,y), \suc(y)))} \\
\end{array}
\]

\subsection{Extrec}

Recall the system in Example \ref{ex:extrec}:
\[
    \begin{array}{rclcrcl}
        \avar \add \zero                       & \arrz & \avar                                                        & \quad &
        \sAr{rec}(\zero,\bvar,\aFuncVar)       & \arrz & \bvar                                                                  \\
        \avar \add \suc(\bvar)                 & \arrz & \suc(\avar \add \bvar)                                       & \quad &
        \sAr{rec}(\suc(\avar),\bvar,\aFuncVar) & \arrz & \aFuncVar \cdot \avar \cdot \sAr{rec}(\avar,\bvar,\aFuncVar)           \\
                                               &       &                                                              &       &
        \avar \otimes \bvar                    & \arrz & \sAr{rec}(\bvar,\zero,\abs{n}{\abs{m}{\avar \oplus m}})                \\
    \end{array}
\]
With $\sAr{rec} :: [\nat \times \nat \times (\nat \arrtype \nat \arrtype \nat)] \arrtype \nat$.
We let $\typeinterpret{\nat} = \N^2$ as before, and let:
\[
    \begin{array}{rcl}
        \interpret{\zero}                            & = & \pair{0,0}                                                                                                            \\
        \interpret{\suc(\avar)}                      & = & \pair{\avar_\cost,\avar_\size+1}                                                                                      \\
        \interpret{\avar \add \bvar}                 & = & \pair{\avar_\cost + \bvar_\cost + \bvar_\size + 1,\avar_\size + \bvar_\size}                                          \\
        \interpret{\avar \otimes \bvar}              & = & \pair{1 + \bvar_\size * (\avar_\cost + \bvar_\cost + \avar_\size * (\bvar_\size+1)/2 + 3), \avar_\size * \bvar_\size} \\
        \interpret{\sAr{rec}(\avar,\bvar,\aFuncVar)} & = & \mathit{Helper}(\avar,\aFuncVar)^{\avar_\size}(\
        \pair{1 + \avar_\cost + \bvar_\cost + \avar_\size + \aFuncVar(\nul{\nat},\nul{\nat})_\cost, \bvar_\size}\ )                                                              \\
        \mathit{Helper}(\avar,\aFuncVar)             & = & \cvar \mapsto \pair{\aFuncVar(\avar,\cvar)_\cost, \max(\cvar_\size,\aFuncVar(\avar,\cvar)_\size)}                     \\
    \end{array}
\]
Then we always have (*A) $\mathit{Helper}(\avar,\aFuncVar)(\cvar) \typegeq{\nat} \cvar$ because
$\aFuncVar(\avar,\cvar)_\cost \geq \cvar_\cost$ which we will see in Lemma \ref{lem:argumentincrease},
and clearly $\max(\cvar_\size,\aFuncVar(\avar,\cvar)_\size) \geq \cvar_\size$.
Hence, the monotonicity requirements are satisfied.
We also clearly have (*B) $\mathit{Helper}(\avar,\aFuncVar)(\cvar) \typegeq{\nat} \aFuncVar(\avar,\cvar)$,
since clearly $\max(\cvar_\size,\aFuncVar(\avar,\cvar)_\size) \geq \aFuncVar(\avar,\cvar)_\size$.
We have:
\begin{itemize}
    \item $\interpret{\avar \add \zero} \typegr{\nat} \interpret{\avar}$: \\
          $\interpret{\avar \add \zero} = \pair{\avar_\cost + 0 + 0 + 1, \avar_\size + 1} = \pair{\avar_\cost + 1,\avar_\size} > \pair{\avar_\cost,\avar_\size} = \interpret{\avar}$.
    \item $\interpret{\avar \add \suc(\bvar)} \typegr{\nat} \interpret{\suc(\avar \add \bvar)}$: \\
          $\interpret{\avar \add \suc(\bvar)} = \pair{\avar_\cost + \bvar_\cost + (\bvar_\size + 1) + 1,\avar_\size + (\bvar_\size + 1)}
              > \pair{\avar_\cost + \bvar_\cost + \bvar_\size + 1,\avar_\size + \bvar_\size + 1} = \interpret{\suc(\avar \add \bvar)}$
    \item $\interpret{\sAr{rec}(\zero,\bvar,\aFuncVar)} \typegr{\nat} \interpret{\bvar}$: \\
          $\interpret{\sAr{rec}(\zero,\bvar,\aFuncVar)} =
              \mathit{Helper}(\pair{0,0},\aFuncVar)^0(\ \pair{1 + 0 + \bvar_\cost + 0 + \aFuncVar(\nul{\nat},\nul{\nat})_\cost, \bvar_\size}\ ) =
              \pair{1 + \bvar_\cost + \aFuncVar(\nul{\nat},\nul{\nat})_\cost, \bvar_\size} \typegr{\nat}
              \pair{\bvar_\cost,\bvar_\size} = \interpret{\bvar}$.
    \item $\interpret{\sAr{rec}(\suc(\avar),\bvar,\aFuncVar)} \typegr{\nat} \interpret{\aFuncVar \cdot \avar \cdot \sAr{rec}(\avar,\bvar,\aFuncVar)}$: \\
          $\interpret{\sAr{rec}(\suc(\avar),\bvar,\aFuncVar)} =
              \mathit{Helper}(\pair{\avar_\cost,\avar_\size+1},\aFuncVar)^{\avar_\size+1}(\
              \pair{1 + \avar_\cost + \bvar_\cost + (\avar_\size + 1) + \aFuncVar(\nul{\nat},\nul{\nat})_\cost, \bvar_\size}\ ) =
              \mathit{Helper}(\pair{\avar_\cost,\avar_\size+1},\aFuncVar)(\mathit{Helper}(\pair{\avar_\cost,\avar_\size+1},\aFuncVar)^{\avar_\size}(\
              \pair{2 + \avar_\cost + \bvar_\cost + \avar_\size + \aFuncVar(\nul{\nat},\nul{\nat})_\cost, \bvar_\size}\ )) \typegeq{\nat}
              \aFuncVar(\pair{\avar_\cost,\avar_\size+1},\mathit{Helper}(\pair{\avar_\cost,\avar_\size+1},\aFuncVar)^{\avar_\size}(\
              \pair{2 + \avar_\cost + \bvar_\cost + \avar_\size + \aFuncVar(\nul{\nat},\nul{\nat})_\cost, \bvar_\size}\ ))$ by (*B), $\typegr{\nat}
              \aFuncVar(\pair{\avar_\cost,\avar_\size},\mathit{Helper}(\pair{\avar_\cost,\avar_\size},\aFuncVar)^{\avar_\size}(\
              \pair{1 + \avar_\cost + \bvar_\cost + \avar_\size + \aFuncVar(\nul{\nat},\nul{\nat})_\cost, \bvar_\size}\ ))$ by monotonicity, $=
              \aFuncVar(\avar,\mathit{Helper}(\avar,\aFuncVar)^{\avar_\size}(\
              \pair{1 + \avar_\cost + \bvar_\cost + \avar_\size + \aFuncVar(\nul{\nat},\nul{\nat})_\cost, \bvar_\size}\ ) )
              = \interpret{\aFuncVar \cdot \avar \cdot \sAr{rec}(\avar,\bvar,\aFuncVar)}$.
    \item $\interpret{\avar \otimes \bvar} \typegr{\nat} \interpret{\sAr{rec}(\bvar,\zero,\abs{n}{\abs{m}{\avar \oplus m}})}$:
          \begin{itemize}
              \item $\interpret{\abs{n}{\abs{m}{\avar \oplus m}}} = n \mapsto m \mapsto \pair{\avar_\cost + n_\cost + m_\cost + m_\size + 3, \avar_\size + m_\size}$
              \item $\mathit{Helper}(\bvar,\interpret{\abs{n}{\abs{m}{\avar \oplus m}}}) = m \mapsto \pair{\avar_\cost + \bvar_\cost + m_\cost + m_\size + 3, \avar_\size + m_\size}$
              \item For given $i$,
                    $\mathit{Helper}(\bvar,\interpret{\abs{n}{\abs{m}{\avar \oplus m}}})^i(m)_\size = (\sum_{j=0}^i \avar_\size) + m_\size = \avar_\size * i + m_\size$
              \item $\mathit{Helper}(\bvar,\interpret{\abs{n}{\abs{m}{\avar \oplus m}}})^{\bvar_\size} = m \mapsto
                        \pair{\sum_{i=1}^{\bvar_\size}(\avar_\cost + \bvar_\cost + (\avar_\size * i + m_\size) + 3) + m_\cost, \bvar_\size * \avar_\size + m_\size} =
                        \pair{\bvar_\size * (\avar_\cost + \bvar_\cost + m_\size + 3) + \avar_\size * \sum_{i=1}^{\bvar_\size}(i) + m_\cost, \bvar_\size * \avar_\size + m_\size} =
                        \pair{\bvar_\size * (\avar_\cost + \bvar_\cost + m_\size + 3) + \avar_\size * (\bvar_\size * (\bvar_\size + 1) / 2) + m_\cost, \bvar_\size * \avar_\size + m_\size} =
                        \pair{\bvar_\size * (\avar_\cost + \bvar_\cost + m_\size + \avar_\size * (\bvar_\size+1)/2 + 3) + m_\cost, \bvar_\size * \avar_\size + m_\size}$
          \end{itemize}
          Hence, $\interpret{\avar \otimes \bvar} = \pair{1 + \bvar_\size * (\avar_\cost + \bvar_\cost + \avar_\size * (\bvar_\size+1)/2 + 3), \avar_\size * \bvar_\size} \typegr{\nat}
              \pair{\bvar_\size * (\avar_\cost + \bvar_\cost + \avar_\size * (\bvar_\size+1)/2 + 3) + 0, \avar_\size * \bvar_\size + 0} =
              \interpret{\sAr{rec}(\bvar,\zero,\abs{n}{\abs{m}{\avar \oplus m}})}$
\end{itemize}

\subsection{Filter}

The next example also comes from the Termination Problem Database, version 11.0 \cite{tpdb}.
This example was only briefly mentioned in the text, but included here to demonstrate that our
method can handle many typical examples of higher-order term rewriting systems.
\[
    \begin{array}{rclcrcl}
        \sAr{rand}(\avar)                         & \arrz & \avar                                                                       & \quad &
        \sAr{filter}(\aFuncVar,\nil)              & \arrz & \nil                                                                                  \\
        \sAr{rand}(\suc(\avar))                   & \arrz & \sAr{rand}(\avar)                                                           & \quad &
        \sAr{filter}(\aFuncVar,\avar : \aListVar) & \arrz & \sAr{consif}(\aFuncVar \cdot \avar,\avar,\sAr{filter}(\aFuncVar,\aListVar))           \\
        \sAr{bool}(\zero)                         & \arrz & \sAr{false}                                                                 &       &
        \sAr{consif}(\sAr{true},\avar,\aListVar)  & \arrz & \avar : \aListVar                                                                     \\
        \sAr{bool}(\suc(\zero))                   & \arrz & \sAr{true}                                                                  &       &
        \sAr{consif}(\sAr{false},\avar,\aListVar) & \arrz & \aListVar                                                                             \\
    \end{array}
\]
As we did in \seclong\ref{app:rev}, we will use the notation $\listvar$ instead of $\aListVar$ to
avoid clutter in the proof.
We let $\typeinterpret{\nat} = \N^2$ and $\typeinterpret{\lst} = \N^3$ as before, and additionally
let $\typeinterpret{\sTp{boolean}} = \N$ (so only a cost component and no size components).  We let:
\[
    \begin{array}{rclcrclcrclcrcl}
        \interpret{\sAr{true}}                                              & = & \pair{0}                                                                                                       &  &
        \interpret{\suc(\avar)}                                             & = & \pair{\avar_\cost,\avar_\size+1}                                                                               &  &
        \interpret{\sAr{bool}(\avar)}                                       & = & \pair{\avar_\cost + 1}                                                                                              \\
        \interpret{\sAr{false}}                                             & = & \pair{0}                                                                                                       &  &
        \interpret{\nil}                                                    & = & \pair{0,0,0}                                                                                                   &  &
        \interpret{\sAr{rand}(\avar)}                                       & = & \pair{1 + \avar_\cost + \avar_\size,\avar_\size}                                                                    \\
        \interpret{\zero}                                                   & = & \pair{0,0}                                                                                                     &  &
        \interpret{\avar : \listvar}                                       & = & \multicolumn{6}{l}{\pair{\avar_\cost + \listvar_\cost,\listvar_\leng + 1,\max(\avar_\size,\listvar_\mmax)}}      \\
        \multicolumn{3}{r}{\interpret{\sAr{consif}(\cvar,\avar,\listvar)}} & = &
        \multicolumn{8}{l}{\pair{\cvar_\cost + \avar_\cost + \listvar_\cost + 1,\listvar_\leng + 1,\max(\avar_\size,\listvar_\mmax)}}                                                              \\
        \multicolumn{3}{r}{\interpret{\sAr{filter}(\aFuncVar,\listvar)}}   & = &
        \multicolumn{8}{l}{\pair{1 + (\listvar_\leng + 1) * (2 + \listvar_\cost + \aFuncVar(\pair{\listvar_\cost,\listvar_\mmax})_\cost ), \listvar_\leng,\listvar_\mmax}}.                      \\
    \end{array}
\]

It is easy to see that monotonicity requirements are satisfied. We have:

\begin{itemize}
    \item $\interpret{\sAr{rand}(\avar)} \typegr{\nat} \interpret{\avar}$ \\
          $\interpret{\sAr{rand}(\avar)} = \pair{1 + \avar_\cost + \avar_\size,\avar_\size} \typegr{\nat} \pair{\avar_\cost,\avar_\size} = \interpret{\avar}$
    \item $\interpret{\sAr{rand}(\suc(\avar))} \typegr{\nat} \interpret{\sAr{rand}(\avar)}$ \\
          $\interpret{\sAr{rand}(\suc(\avar))} = \pair{1 + \avar_\cost + \avar_\size + 1,\avar_\size + 1} \typegr{\nat} \pair{1 + \avar_\cost + \avar_\size,\avar_\size} =
              \interpret{\sAr{rand}(\avar)}$
    \item $\interpret{\sAr{bool}(\zero)} \typegr{\sTp{boolean}} \interpret{\sAr{false}}$ \\
          $\interpret{\sAr{bool}(\zero)} = \pair{0 + 1} \typegr{\sTp{boolean}} \pair{0} = \interpret{\sAr{false}}$
    \item $\interpret{\sAr{bool}(\suc(\zero))} \typegr{\sTp{boolean}} \interpret{\sAr{true}}$ \\
          $\interpret{\sAr{bool}(\zero)} = \pair{0 + 1} \typegr{\sTp{boolean}} \pair{0} = \interpret{\sAr{true}}$
    \item $\interpret{\sAr{consif}(\sAr{true},\avar,\listvar)} \typegr{\lst} \interpret{\avar : \listvar}$ \\
          $\interpret{\sAr{consif}(\sAr{true},\avar,\listvar)} = \pair{0 + \avar_\cost + \listvar_\cost + 1,\listvar_\leng + 1,\max(\avar_\size,\listvar_\mmax)}
              \typegr{\lst} \pair{\avar_\cost + \listvar_\cost,\listvar_\leng + 1,\max(\avar_\size,\listvar_\mmax)} = \interpret{\avar : \listvar}$
    \item $\interpret{\sAr{consif}(\sAr{false},\avar,\listvar)} \typegr{\lst} \interpret{\listvar}$ \\
          $\interpret{\sAr{consif}(\sAr{false},\avar,\listvar)} = \pair{0 + \avar_\cost + \listvar_\cost + 1,\listvar_\leng + 1,\max(\avar_\size,\listvar_\mmax)}
              \typegr{\lst} \pair{\listvar_\cost,\listvar_\leng,\listvar_\mmax} = \interpret{\listvar}$
    \item $\interpret{\sAr{filter}(\aFuncVar,\nil)} \typegr{\lst} \interpret{\nil}$ \\
          $\interpret{\sAr{filter}(\aFuncVar,\nil)} = \pair{1 + \dots,0,0} \typegr{\lst} \pair{0,0,0} = \interpret{\nil}$
    \item $\interpret{\sAr{filter}(\aFuncVar,\avar : \listvar)} \typegr{\lst} \interpret{\sAr{consif}(\aFuncVar \cdot \avar,\avar,\sAr{filter}(\aFuncVar,\listvar))}$ \\
          $\interpret{\sAr{filter}(\aFuncVar,\avar : \listvar)} =
              \pair{1 + (\listvar_\leng + 2) * (2 + \avar_\cost + \listvar_\cost + \aFuncVar(\pair{\avar_\cost + \listvar_\cost,\max(\avar_\size,\listvar_\mmax)})_\cost),
                  \listvar_\leng+1,\max(\avar_\size,\listvar_\mmax)} =
              \pair{3 + \avar_\cost + \listvar_\cost + \aFuncVar(\pair{\avar_\cost + \listvar_\cost,\max(\avar_\size,\listvar_\mmax)})_\cost +
                  (\listvar_\leng + 1) * (2 + \avar_\cost + \listvar_\cost + \aFuncVar(\pair{\avar_\cost + \listvar_\cost,\max(\avar_\size,\listvar_\mmax)})_\cost),
                  \listvar_\leng+1,\max(\avar_\size,\listvar_\mmax)} \typegr{\lst} \\
              \pair{2 + \avar_\cost + \aFuncVar(\pair{\avar_\cost,\avar_\size})_\cost +
                  (\listvar_\leng + 1) * (2 + \listvar_\cost + \aFuncVar(\pair{\listvar_\cost,\listvar_\mmax})_\cost),
                  \listvar_\leng+1,\max(\avar_\size,\listvar_\mmax)} = \\
              \pair{\aFuncVar(\avar)_\cost + \avar_\cost + (1 + (\listvar_\leng + 1) * (2 + \listvar_\cost + \aFuncVar(\pair{\listvar_\cost,\listvar_\mmax})_\cost )) + 1,
                  \listvar_\leng + 1,\max(\avar_\size,\listvar_\mmax)} \\
              = \pair{\aFuncVar(\avar)_\cost + \avar_\cost + \interpret{\sAr{filter}(\aFuncVar,\listvar)}_\cost + 1,
                  \interpret{\sAr{filter}(\aFuncVar,\listvar)}_\leng + 1,\max(\avar_\size,\interpret{\sAr{filter}(\aFuncVar,\listvar)}_\mmax)} \\
              = \interpret{\sAr{consif}(\aFuncVar \cdot \avar,\avar,\sAr{filter}(\aFuncVar,\listvar))}$
\end{itemize}

\subsection{Deriv}

Our final example also comes from the termination problem database.
This example seems to be designed to calculate a function's derivative.
It is worth noting that all symbols other than $\sAr{der}$ are constructors.

\[
    \begin{array}{rclcrcl}
        \sAr{der}(\abs{\avar}{\bvar})                                                    & \arrz & \abs{\cvar}{\zero}                                                                                 & \quad\quad &
        \sAr{der}(\abs{\avar}{\sAr{sin}(\avar)})                                         & \arrz & \abs{\cvar}{\sAr{cos}(\cvar)}                                                                                     \\
        \sAr{der}(\abs{\avar}{\avar})                                                    & \arrz & \abs{\cvar}{\sAr{1}}                                                                               & \quad\quad &
        \sAr{der}(\abs{\avar}{\sAr{cos}(\avar)})                                         & \arrz & \abs{\cvar}{\sAr{min}(\sAr{cos}(\cvar))}                                                                          \\
        \sAr{der}(\abs{\avar}{\sAr{plus}(\aFuncVar \cdot \avar,\bFuncVar \cdot \avar)})  & \arrz &
        \multicolumn{5}{l}{\abs{\cvar}{\sAr{plus}(\sAr{der}(\aFuncVar) \cdot \cvar,\sAr{der}(\bFuncVar) \cdot \cvar)}}                                                                                               \\
        \sAr{der}(\abs{\avar}{\sAr{times}(\aFuncVar \cdot \avar,\bFuncVar \cdot \avar)}) & \arrz &
        \multicolumn{5}{l}{\abs{\cvar}{\sAr{plus}(\sAr{times}(\sAr{der}(\aFuncVar) \cdot \cvar,\bFuncVar \cdot \cvar),\sAr{times}(\aFuncVar \cdot \cvar,\sAr{der}(\bFuncVar) \cdot \cvar))}}                         \\
        \sAr{der}(\abs{\avar}{\sAr{ln}(\aFuncVar \cdot \avar)})                          & \arrz & \multicolumn{5}{l}{\abs{\cvar}{\sAr{div}(\sAr{der}(\aFuncVar) \cdot \cvar,\aFuncVar \cdot \cvar)}}                \\
    \end{array}
\]
With $\sAr{der} :: [\real \arrtype \real] \arrtype \real \arrtype \real$.
We let $\typeinterpret{\real} = \N^3$ where the first component indicates cost, and the
second and third component roughly indicate the number of plus/times/ln occurrences and
the number of times/ln occurrences respectively.
We will denote $\avar_\size$ for $\avar_2$, and $\avar_\star$ for $\avar_3$.
We use the following interpretation:
\[
    \begin{array}{rclcrcl}
        \interpret{\zero}                                  & = & \pair{0,0,0}                                                                                                          &  &
        \hspace{-20pt}\interpret{\sAr{plus}(\avar,\bvar)}  & = & \pair{\avar_\cost + \bvar_\cost,\avar_\size + \bvar_\size + 1,\avar_\star + \bvar_\star}                                                                                                                    \\
        \interpret{\sAr{1}}                                & = & \pair{0,0,0}                                                                                                          &  &
        \hspace{-20pt}\interpret{\sAr{times}(\avar,\bvar)} & = & \pair{\avar_\cost + \bvar_\cost,\avar_\size + \bvar_\size + 1,\avar_\star + \bvar_\star + 1}                                                                                                                \\
        \interpret{\sAr{cos}(\avar)}                       & = & \avar                                                                                                                 &  &
        \interpret{\sAr{ln}(\avar)}                        & = & \pair{\avar_\cost, \avar_\size + 1, \avar_\star + 1}                                                                                                                                                        \\
        \interpret{\sAr{sin}(\avar)}                       & = & \avar                                                                                                                 &  &
        \interpret{\sAr{der}(\aFuncVar)}                   & = & \cvar \mapsto \langle                                                                                                                                                                                       \\
        \interpret{\sAr{min}(\avar)}                       & = & \pair{\avar_\cost,0,0}                                                                                                &  &
                                                           &   & \phantom{A}1 + \aFuncVar(\cvar)_\cost + 2 * \aFuncVar(\cvar)_\size + \aFuncVar(\cvar)_\star * \aFuncVar(\cvar)_\cost,                                                                                       \\
        \interpret{\sAr{div}(\avar,\bvar)}                 & = & \pair{\avar_\cost+\bvar_\cost,0,0}                                                                                    &  &
                                                           &   & \phantom{A}\aFuncVar(\cvar)_\size * (\aFuncVar(\cvar)_\star + 1),                                                                                                                                           \\
                                                           &   &                                                                                                                       &  &   &  & \phantom{A}\aFuncVar(\cvar)_\star * (\aFuncVar(\cvar)_\star + 1)\ \rangle \\
    \end{array}
\]
It is easy to see that monotonicity requirements are satisfied.
In addition, all the rules are oriented by this interpretation:

\begin{itemize}
    \item $\interpret{\sAr{der}(\abs{\avar}{\bvar})} \typegr{\real} \interpret{\abs{\cvar}{\zero}}$
          \begin{itemize}
              \item Note that by choice of $\makesm{\real,\real}$, we have $\interpret{\abs{\avar}{\bvar}} = \avar \mapsto \pair{1 + \avar_\cost + \bvar_\cost,\bvar_\size,\bvar_\star}$.
          \end{itemize}
          $\interpret{\sAr{der}(\abs{\avar}{\bvar})} = \cvar \mapsto
              \pair{1 + (1 + \cvar_\cost + \bvar_\cost) + 2 * \bvar_\size + \bvar_\star * (1 + \cvar_\cost + \bvar_\cost), \bvar_\size * (\bvar_\star + 1), \bvar_\star * (\bvar_\star + 1)}
              \typegr{\real}
              \cvar \mapsto \pair{1 + \cvar_\cost,0,0} = \interpret{\abs{\cvar}{\zero}}$
    \item $\interpret{\sAr{der}(\abs{\avar}{\bvar})} \typegr{\real} \interpret{\abs{\cvar}{\zero}}$
          \begin{itemize}
              \item Note that by choice of $\makesm{\real,\real}$, we have $\interpret{\abs{\avar}{\avar}} = \avar \mapsto \pair{1 + \avar_\cost,\avar_\size,\avar_\star}$.
          \end{itemize}
          $\interpret{\sAr{der}(\abs{\avar}{\avar})} = \cvar \mapsto
              \pair{1 + (1 + \cvar_\cost) + \dots, \cvar_\size * (\cvar_\star + 1), \cvar_\star * (\cvar_\star + 1)}
              \typegr{\real}
              \cvar \mapsto \pair{1 + \cvar_\cost,0,0} = \interpret{\abs{\cvar}{\sAr{1}}}$
    \item $\interpret{\sAr{der}(\abs{\avar}{\sAr{sin}(\avar)})} \typegr{\real} \interpret{\abs{\cvar}{\sAr{cos}(\cvar)}}$
          \begin{itemize}
              \item Note that $\interpret{\abs{\avar}{\sAr{sin}(\avar)}} = \avar \mapsto \pair{1 + \avar_\cost,\avar_\size,\avar_\star}$
          \end{itemize}
          $\interpret{\sAr{der}(\abs{\avar}{\sAr{sin}(\avar)})} =
              \cvar \mapsto \pair{1 + (1 + \cvar_\cost) + 2 * \cvar_\size + \cvar_\star * (1 + \cvar_\cost), \cvar_\size * (\cvar_\star + 1), \cvar_\star * (\cvar_\star + 1)} \typegr{\real}
              \cvar \mapsto \pair{1 + \cvar_\cost,\cvar_\size,\cvar_\star} = \interpret{\abs{\cvar}{\sAr{cos}(\cvar)}}$
    \item $\interpret{\sAr{der}(\abs{\avar}{\sAr{cos}(\avar)})} \typegr{\real} \interpret{\abs{\cvar}{\sAr{min}(\sAr{cos}(\cvar))}}$ \\
          $\interpret{\sAr{der}(\abs{\avar}{\sAr{cos}(\avar)})} = \cvar \mapsto \pair{1 + (1 + \cvar_\cost) + \dots,\cvar_\size * (\cvar_\star + 1), \cvar_\star * (\cvar_\star + 1)}
              \typegr{\real} \cvar \mapsto \pair{1 + \cvar_\cost,0,0} = \interpret{\abs{\cvar}{\sAr{min}(\sAr{cos}(\cvar))}}$.
    \item $\interpret{\sAr{der}(\abs{\avar}{\sAr{plus}(\aFuncVar \cdot \avar,\bFuncVar \cdot \avar)})} \typegr{\real}
              \interpret{\abs{\cvar}{\sAr{plus}(\sAr{der}(\aFuncVar) \cdot \cvar,\sAr{der}(\bFuncVar) \cdot \cvar)}}$
          \begin{itemize}
              \item $\interpret{\abs{\avar}{\sAr{plus}(\aFuncVar \cdot \avar,\bFuncVar \cdot \avar)}} = \avar \mapsto \pair{1 +
                            \aFuncVar(\avar)_\cost + \bFuncVar(\avar)_\cost,\aFuncVar(\avar)_\size + \bFuncVar(\avar)_\size + 1,\aFuncVar(\avar)_\star + \bFuncVar(\avar)_\star}$
          \end{itemize}
          $\interpret{\sAr{der}(\abs{\avar}{\sAr{plus}(\aFuncVar \cdot \avar,\bFuncVar \cdot \avar)})} =
              \cvar \mapsto \pair{1 + (1 + \aFuncVar(\cvar)_\cost + \bFuncVar(\cvar)_\cost) + 2 * (\aFuncVar(\cvar)_\size + \bFuncVar(\cvar)_\size + 1) +
                  (\aFuncVar(\cvar)_\star + \bFuncVar(\cvar)_\star) * (1 + \aFuncVar(\cvar)_\cost + \bFuncVar(\cvar)_\cost),
                  (\aFuncVar(\cvar)_\size + \bFuncVar(\cvar)_\size + 1) * (\aFuncVar(\cvar)_\star + \bFuncVar(\cvar)_\star + 1),
                  (\aFuncVar(\cvar)_\star + \bFuncVar(\cvar)_\star) * (\aFuncVar(\cvar)_\star + \bFuncVar(\cvar)_\star + 1)} \typegr{\real} \\
              \cvar \mapsto \pair{1 + \aFuncVar(\cvar)_\cost + \bFuncVar(\cvar)_\cost + 2 * \aFuncVar(\cvar)_\size + 2 * \bFuncVar(\cvar)_\size + 2 +
                  \aFuncVar(\cvar)_\star * \aFuncVar(\cvar)_\cost + \bFuncVar(\cvar)_\star * \bFuncVar(\cvar)_\cost),
                  \aFuncVar(\cvar)_\size * (\aFuncVar(\cvar)_\star + 1) + \bFuncVar(\cvar)_\size * (\bFuncVar(\cvar)_\star + 1) + 1,
                  \aFuncVar(\cvar)_\star * (\aFuncVar(\cvar)_\star + 1) + \bFuncVar(\cvar)_\star * (\bFuncVar(\cvar)_\star + 1)} =\\
              \cvar \mapsto \pair{1 + (1 + \aFuncVar(\cvar)_\cost + 2 * \aFuncVar(\cvar)_\size + \aFuncVar(\cvar)_\star * \aFuncVar(\cvar)_\cost) +
                  (1 + \bFuncVar(\cvar)_\cost + 2 * \bFuncVar(\cvar)_\size + \bFuncVar(\cvar)_\star * \bFuncVar(\cvar)_\cost),
                  \interpret{\sAr{plus}(\sAr{der}(\aFuncVar) \cdot \cvar,\sAr{der}(\bFuncVar) \cdot \cvar)}_\size,
                  \interpret{\sAr{plus}(\sAr{der}(\aFuncVar) \cdot \cvar,\sAr{der}(\bFuncVar) \cdot \cvar)}_\star} =
              \interpret{\abs{\cvar}{\sAr{plus}(\sAr{der}(\aFuncVar) \cdot \cvar,\sAr{der}(\bFuncVar) \cdot \cvar)}}$
    \item $\interpret{\sAr{der}(\abs{\avar}{\sAr{times}(\aFuncVar \cdot \avar,\bFuncVar \cdot \avar)})} \typegr{\real}
              \interpret{\abs{\cvar}{\sAr{plus}(\sAr{times}(\sAr{der}(\aFuncVar) \cdot \cvar,\bFuncVar \cdot \cvar),\sAr{times}(\aFuncVar \cdot \cvar,\sAr{der}(\bFuncVar) \cdot \cvar))}}$
          \begin{itemize}
              \item $\interpret{\abs{\avar}{\sAr{times}(\aFuncVar \cdot \avar,\bFuncVar \cdot \avar)})} = \avar \mapsto \pair{
                            1 + \aFuncVar(\avar)_\cost + \bFuncVar(\avar)_\cost,\aFuncVar(\avar)_\size + \bFuncVar(\avar)_\size + 1,\aFuncVar(\avar)_\star + \bFuncVar(\avar)_\star + 1}$
              \item $\interpret{\sAr{times}(\sAr{der}(\aFuncVar) \cdot \cvar,\bFuncVar \cdot \cvar)} = \pair{
                            1 + \aFuncVar(\cvar)_\cost + 2 * \aFuncVar(\cvar)_\size + \aFuncVar(\cvar)_\star * \aFuncVar(\cvar)_\cost + \bFuncVar(\cvar)_\cost,
                            \aFuncVar(\cvar)_\size * (\aFuncVar(\cvar)_\star + 1) + \bFuncVar(\cvar)_\size + 1,
                            \aFuncVar(\cvar)_\star * (\aFuncVar(\cvar)_\star + 1) + \bFuncVar(\cvar)_\star + 1}$
              \item $\interpret{\sAr{times}(\aFuncVar \cdot \cvar,\sAr{der}(\bFuncVar) \cdot \cvar))} = \pair{
                            1 + \bFuncVar(\cvar)_\cost + 2 * \bFuncVar(\cvar)_\size + \bFuncVar(\cvar)_\star * \bFuncVar(\cvar)_\cost + \aFuncVar(\cvar)_\cost,
                            \bFuncVar(\cvar)_\size * (\bFuncVar(\cvar)_\star + 1) + \aFuncVar(\cvar)_\size + 1,
                            \bFuncVar(\cvar)_\star * (\bFuncVar(\cvar)_\star + 1) + \aFuncVar(\cvar)_\star + 1}$
          \end{itemize}
          $\interpret{\sAr{der}(\abs{\avar}{\sAr{times}(\aFuncVar \cdot \avar,\bFuncVar \cdot \avar)})} = \cvar \mapsto \pair{1 + \text{cost}, \text{size}, \text{star}}$, where:
          \begin{itemize}
              \item $\text{cost} = (1 + \aFuncVar(\cvar)_\cost + \bFuncVar(\cvar)_\cost) + 2 * (\aFuncVar(\cvar)_\size + \bFuncVar(\cvar)_\size + 1) +
                        (\aFuncVar(\cvar)_\star + \bFuncVar(\cvar)_\star + 1) * (1 + \aFuncVar(\cvar)_\cost + \bFuncVar(\cvar)_\cost)$;
              \item $\text{size} = (\aFuncVar(\cvar)_\size + \bFuncVar(\cvar)_\size + 1) * (\aFuncVar(\cvar)_\star + \bFuncVar(\cvar)_\star + 2)$;
              \item $\text{star} = (\aFuncVar(\cvar)_\star + \bFuncVar(\cvar)_\star + 1) * (\aFuncVar(\cvar)_\star + \bFuncVar(\cvar)_\star + 2)$.
          \end{itemize}
          We have $\text{size} =
              \aFuncVar(\cvar)_\size + \bFuncVar(\cvar)_\size + 1 + (\aFuncVar(\cvar)_\size + \bFuncVar(\cvar)_\size + 1) * (\aFuncVar(\cvar)_\star + \bFuncVar(\cvar)_\star + 1) \geq
              \aFuncVar(\cvar)_\size + \bFuncVar(\cvar)_\size + 1 + \aFuncVar(\cvar)_\size * (\aFuncVar(\cvar)_\star + 1) + \bFuncVar(\cvar)_\size * (\bFuncVar(\cvar)_\star + 1) + 1 * 1 =
              (\aFuncVar(\cvar)_\size * (\aFuncVar(\cvar)_\star + 1) + \bFuncVar(\cvar)_\size + 1) + (\bFuncVar(\cvar)_\size * (\bFuncVar(\cvar)_\star + 1) + \aFuncVar(\cvar)_\size + 1) =
              \interpret{\sAr{plus}(\sAr{times}(\sAr{der}(\aFuncVar) \cdot \cvar,\bFuncVar \cdot \cvar),\sAr{times}(\aFuncVar \cdot \cvar,\sAr{der}(\bFuncVar) \cdot \cvar))}_\size$.

          The proof that $\text{star} \geq \interpret{\sAr{plus}(\sAr{times}(\sAr{der}(\aFuncVar) \cdot \cvar,\bFuncVar \cdot \cvar),\sAr{times}(\aFuncVar \cdot \cvar,
                  \sAr{der}(\bFuncVar) \cdot \cvar))}_\star$ is the same, just with $\cdot_\size$ replaced by $\cdot_\star$.

          Finally, we have $\text{cost} >
              \aFuncVar(\cvar)_\cost + \bFuncVar(\cvar)_\cost + 2 * \aFuncVar(\cvar)_\size + 2 * \bFuncVar(\cvar)_\size + 2 +
              1 + \aFuncVar(\cvar)_\cost + \bFuncVar(\cvar)_\cost + (\aFuncVar(\cvar)_\star + \bFuncVar(\cvar)_\star) * (\aFuncVar(\cvar)_\cost + \bFuncVar(\cvar)_\cost) = \\
              1 + 1 + \aFuncVar(\cvar)_\cost + 2 * \aFuncVar(\cvar)_\size + \aFuncVar(\cvar)_\star * \aFuncVar(\cvar)_\cost + \bFuncVar(\cvar)_\cost +
              1 + \bFuncVar(\cvar)_\cost + 2 * \bFuncVar(\cvar)_\size + \bFuncVar(\cvar)_\star * \bFuncVar(\cvar)_\cost + \aFuncVar(\cvar)_\cost =
              1 + \interpret{\sAr{plus}(\sAr{times}(\sAr{der}(\aFuncVar) \cdot \cvar,\bFuncVar \cdot \cvar),\sAr{times}(\aFuncVar \cdot \cvar,\sAr{der}(\bFuncVar) \cdot \cvar))}_\cost$
    \item $\interpret{\sAr{der}(\abs{\avar}{\sAr{ln}(\aFuncVar \cdot \avar)})} \typegr{\real} \interpret{\abs{\cvar}{\sAr{div}(\sAr{der}(\aFuncVar) \cdot \cvar,\aFuncVar \cdot \cvar)}}$
          \begin{itemize}
              \item $\interpret{\abs{\avar}{\sAr{ln}(\aFuncVar \cdot \avar)}} = \avar \mapsto \pair{\aFuncVar(\avar)_\cost,\aFuncVar(\avar)_\size + 1,\aFuncVar(\avar)_\star + 1}$
          \end{itemize}
          $\interpret{\sAr{der}(\abs{\avar}{\sAr{ln}(\aFuncVar \cdot \avar)})} = \cvar \mapsto \pair{1 + \aFuncVar(\cvar)_\cost + 2 * (\aFuncVar(\cvar)_\size + 1) +
                  (\aFuncVar(\cvar)_\star + 1) *  \aFuncVar(\cvar)_\cost,(\aFuncVar(\cvar)_\size + 1) * (\aFuncVar(\cvar)_\star + 2),(\aFuncVar(\cvar)_\star + 1) * (\aFuncVar(\cvar)_\star + 2)}
              \typegr{\real}
              \cvar \mapsto \pair{\aFuncVar(\cvar)_\cost + 2 * (\aFuncVar(\cvar)_\size + 1) + (\aFuncVar(\cvar)_\star + 1) *  \aFuncVar(\cvar)_\cost,0,0} =
              \cvar \mapsto \pair{\aFuncVar(\cvar)_\cost + 2 * \aFuncVar(\cvar)_\size + 2 + \aFuncVar(\cvar)_\star * \aFuncVar(\cvar)_\cost + \aFuncVar(\cvar)_\cost,0,0} =
              \cvar \mapsto \pair{1 + (1 + \aFuncVar(\cvar)_\cost + 2 * \aFuncVar(\cvar)_\size + \aFuncVar(\cvar)_\star * \aFuncVar(\cvar)_\cost) + \aFuncVar(\cvar)_\cost,0,0} =
              \interpret{\abs{\cvar}{\sAr{div}(\sAr{der}(\aFuncVar) \cdot \cvar,\aFuncVar \cdot \cvar)}}$
\end{itemize}

\newpage

\newpage
\section{Extended Proofs}

In this section, we give extended proofs for some results stated in the paper.
It is worth noting that we make heavy use of \textit{function extensionality} for functions in $\SM$; that is, if $F$ and $G$ are both functions in some $\typeinterpret{\atype \arrtype \btype}$, and $F(x) = G(x)$ for all $x \in \typeinterpret{\atype}$, then $F = G$.

We do not prove Theorem \ref{thm:sortinterpretworks} or Lemma \ref{lem:maxpol} here since they are essentially
simpler cases of Theorem \ref{thm:typeinterpretworks} and Lemma \ref{lem:wm} respectively.  Hence, we start
with Section \ref{sec:ho-tp-int}.

\subsection{Proofs for Section \ref{subsec:SM}}

We start by proving the claim in the text that $(\typeinterpret{\atype},\typegr{\atype},\typegeq{\atype})$ is an extended well-founded set for all types $\atype$.

\typeOrd*
\begin{proof}
    We prove the result by induction on all types $\atype$.
    For a base type $\asort$, all items are satisfied by the conditions we impose on extended well-founded sets $(A_\asort, >_\asort, \geq_\asort)$.
    For $\atype = \btype \arrtype \ctype$ we reason as follows.
    \begin{itemize}
        \item $\typegr{\btype \arrtype \ctype}$ is well-founded and $\typegeq{\btype \arrtype \ctype}$ is reflexive.

              Suppose, by contradiction, that there is an infinite chain $F_1 \typegr{\btype \arrtype \ctype} F_2 \typegr{\btype \arrtype \ctype} \dots$ in $\typeinterpret{\btype \arrtype \ctype}$. Then by definition of $\typegr{\btype \arrtype \ctype}$: $\typeinterpret{\btype}$ is non-empty, and for all $\avar \in \typeinterpret{\btype}$, $F_1(\avar) \typegr{\ctype} F_2(\avar) \typegr{\ctype} \ldots$. This induces an infinite $\typegr{\ctype}$-chain in $\typeinterpret{\ctype}$, contradicting the IH.

              For reflexivity, notice that $F \typegeq{\btype \arrtype \ctype} F$ iff for all $x \in \typeinterpret{\btype}$, $F(x) \typegeq{\ctype} F(x)$, which follows directly by reflexivity of $\typegeq{\ctype}$ (IH).

        \item Both relations are transitive.

              For $\typegr{\btype \arrtype \ctype}$. Suppose $F \typegr{\btype \arrtype \ctype} G \typegr{\btype \arrtype \ctype} H$,
              then for all $\avar \in \typeinterpret{\btype}$,
              $F(\avar) \typegr{\ctype} G(\avar) \typegr{\ctype} H(\avar)$ holds by definition of $\typegr{\btype \arrtype \ctype}$.
              The IH give us $F(\avar) \typegr{\ctype} H(\avar)$, for all $x \in \typeinterpret{\btype}$,
              which is exactly $F \typegr{\btype \arrtype \ctype} H$. Non-emptiness of $\typeinterpret{\btype}$ holds by assumption.
              The case for $\typegeq{\btype \arrtype \ctype}$ is analogous.

        \item For all $F,G,H$ in $\typeinterpret{\btype \arrtype \ctype}$, $F \typegr{\btype \arrtype \ctype} G$ implies $F \typegeq{\btype \arrtype \ctype} G$,
              and $F \typegr{\btype \arrtype \ctype} G \typegeq{\btype \arrtype \ctype} H$ implies $F \typegr{\btype \arrtype \ctype} H$.

              Suppose $F \typegr{\btype \arrtype \ctype} G$.
              By definition, $F(\avar) \typegr{\ctype} G(\avar)$ for all $\avar \in \typeinterpret{\btype}$. By IH, $F(x) \typegeq{\ctype} G(x)$, for all $x \in \typeinterpret{\ctype}$, which means $F \typegeq{\ctype} G$.

              If, moreover, $G \typegeq{\btype \arrtype \ctype} H$, the reasoning is similar: expand the definitions and apply the induction hypothesis.
              \qedhere
    \end{itemize}
\end{proof}

In the text, we quietly asserted that Definition \ref{def:sminterpret} is well-defined.  Let us now prove this.

\begin{lemma}
For all terms $\aterm :: \atype$ and suitable $\alpha$ as described in Definition \ref{def:sminterpret} we have: $\ainterpret{\aterm} \in \typeinterpret{\aterm}$.
\end{lemma}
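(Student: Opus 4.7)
The proof will proceed by structural induction on $\aterm$, but for the abstraction case to work we need a strengthened induction hypothesis, since $\makesm{\sigma,\tau}$ is only applicable to functions lying in $C_{\sigma,\tau}$—i.e., functions that are either strongly monotonic or constant. My plan is therefore to prove simultaneously, for every term $\aterm :: \atype$, the following three properties:
\begin{enumerate}
\item $\ainterpret{\aterm} \in \typeinterpret{\atype}$ for every valuation $\alpha$;
\item if $\alpha(y) \typegeq{} \beta(y)$ for every free variable $y$ of $\aterm$, then $\ainterpret{\aterm}_\alpha \typegeq{\atype} \ainterpret{\aterm}_\beta$;
\item if additionally $\alpha(z) \typegr{} \beta(z)$ for some free variable $z$ of $\aterm$, then $\ainterpret{\aterm}_\alpha \typegr{\atype} \ainterpret{\aterm}_\beta$.
\end{enumerate}
Property (1) is what the lemma asks for; (2) and (3) are auxiliary statements needed to handle abstractions.

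The four cases of the induction are variables, algebraic applications $\afun(\aterm_1,\dots,\aterm_k)$, applications $\app{\aterm}{\bterm}$, and abstractions $\abs{\avar}{\aterm}$. For a variable, (1) is by assumption on $\alpha$, and (2)--(3) are immediate. For $\afun(\aterm_1,\dots,\aterm_k)$, the IH gives each $\ainterpret{\aterm_i} \in \typeinterpret{\atype_i}$, so repeatedly applying the strongly monotonic functional $\funcinterpret{\afun} \in \typeinterpret{\atype_1 \arrtype \dots \arrtype \atype_k \arrtype \btype}$ lands in $\typeinterpret{\btype}$, and (2)--(3) follow by combining weak/strong monotonicity of $\funcinterpret{\afun}$ with the IH on each argument. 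For $\app{\aterm}{\bterm}$ with $\aterm :: \sigma \arrtype \tau$, the IH yields $\ainterpret{\aterm} \in \typeinterpret{\sigma \arrtype \tau}$ and $\ainterpret{\bterm} \in \typeinterpret{\sigma}$, so $\ainterpret{\aterm}(\ainterpret{\bterm}) \in \typeinterpret{\tau}$; the monotonicity statements follow by the definition of $\typegr{\sigma \arrtype \tau}$ and $\typegeq{\sigma \arrtype \tau}$ together with the IH on $\aterm$ and $\bterm$.

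The main obstacle is the abstraction case. Write $H := d \mapsto \ainterpret{\aterm}_{\alpha[\avar := d]}$. By IH (1), $H$ maps $\typeinterpret{\sigma}$ into $\typeinterpret{\tau}$. If $\avar$ does not occur free in $\aterm$, then a straightforward sub-induction shows $H$ is constant, hence $H \in C_{\sigma,\tau}$. If $\avar$ does occur free in $\aterm$, then IH (3) applied to $\aterm$ (with $\alpha[\avar := d]$ and $\alpha[\avar := d']$, which agree on all other free variables) gives that $d \typegr{\sigma} d'$ implies $H(d) \typegr{\tau} H(d')$, so $H$ is strongly monotonic and lies in $\typeinterpret{\sigma \arrtype \tau} \subseteq C_{\sigma,\tau}$. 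Since $\makesm{\sigma,\tau}$ has codomain $\typeinterpret{\sigma \arrtype \tau}$, property (1) follows. Properties (2) and (3) for the abstraction then come from strong monotonicity of $\makesm{\sigma,\tau}$ applied pointwise: once we establish that $H_\alpha \typegeq{} H_\beta$ (respectively $\typegr{}$) whenever $\alpha,\beta$ satisfy the hypothesis on free variables of $\abs{\avar}{\aterm}$—which reduces to applying IH (2) or (3) to $\aterm$ at each $d$—the result follows.
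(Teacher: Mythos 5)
Your proof is correct and uses the same central idea as the paper's: a strengthened induction hypothesis that tracks how $\ainterpret{\aterm}_\alpha$ varies with $\alpha$, so that in the abstraction case one can show $d \mapsto \interpret{\aterm}_{\alpha[x:=d]}$ lands in $C_{\sigma,\tau}$ and $\makesm{\sigma,\tau}$ is applicable. The only difference is in how that auxiliary information is packaged. The paper proves, alongside membership, a \emph{per-variable dichotomy}: for each variable $\avar$ in the domain of $\alpha$, the map $d \mapsto \interpret{\aterm}_{\alpha[\avar:=d]}$ is either constant or strongly monotonic. This is exactly the shape needed to conclude $H \in C_{\sigma,\tau}$, and it is carried through the application case by a four-way case split (constant/monotone on each side). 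You instead prove \emph{joint} weak and strict monotonicity of $\ainterpret{\aterm}_\alpha$ over all free variables simultaneously (your (2) and (3)), and then recover the dichotomy for the abstraction case by specialising to a single variable $\avar$: if $\avar$ is free, (3) gives strict monotonicity of $H$; if not, you invoke a separate (unproved but routine) sub-lemma that the interpretation is unchanged by the value of a non-free variable. Both routes buy the same thing; yours avoids the explicit four-way case analysis in the application step at the price of needing the irrelevance-of-non-free-variables fact as a side observation, whereas the paper folds the constant case directly into the invariant. Minor note: be sure to also record that $\makesm{\sigma,\tau}$, being strongly monotonic, is in particular weakly monotonic, since your (2) for abstractions uses this.
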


\begin{proof}
We will prove by induction on the form of $\aterm$: $\ainterpret{\aterm} \in \typeinterpret{\aterm}$,
and for all variables $\avar$ occurring in the domain of $\alpha$:
either $d \mapsto \interpret{\aterm}_{[\avar:=d]}$ is a strongly monotonic function, or it is a constant function.
Recall the definition of $\ainterpret{\aterm}$.
    \[
        \begin{array}{lcl}
            \ainterpret{\avar} = \alpha(\avar)\ \text{for variables}\ \avar              & \quad &
            \ainterpret{\afun(\aterm_1,\dots,\aterm_k)} =\funcinterpret{\afun}(\ainterpret{\aterm_1},\dots,\ainterpret{\aterm_k})                                             \\
            \ainterpret{\app{\aterm}{\bterm}} = \ainterpret{\aterm}(\ainterpret{\bterm}) &       &
            \ainterpret{\abs{x}{\aterm}} = \makesm{\atype,\btype}(d \mapsto \interpret{\aterm}_{\alpha[x:=d]})\ \text{if}\ \avar :: \atype\ \text{and}\ \aterm :: \btype \\
        \end{array}
    \]
Consider the form of $\aterm$.
\begin{itemize}
\item If $\aterm = \avar$ then $\ainterpret{\avar} = \alpha(\avar) \in \typeinterpret{\atype}$ by assumption.
  Moreover, $d \mapsto \interpret{\aterm}_{\alpha[\avar:=d]}$ is the function $d \mapsto d$, which is strongly monotonic:
  if $a \typegr{\atype} b$ then $(d \mapsto d)(a) = a \typegr{\atype} b = (d \mapsto d)(b)$.
  For all other variables $y$, the function $d \mapsto \interpret{\aterm}_{\alpha[\bvar:=d]}$ is the constant
  function $d \mapsto \alpha(\avar)$.
\item If $\aterm = \app{\bterm}{\cterm}$ then $\bterm :: \btype \arrtype \atype$ and $\cterm :: \btype$.
  By the induction hypothesis, $\ainterpret{\bterm} \in \typeinterpret{\btype \arrtype \atype} \subseteq
  \typeinterpret{\btype} \arrfunc \typeinterpret{\atype}$, and $\ainterpret{\cterm} \in \typeinterpret{\btype}$.
  Hence, $\ainterpret{\bterm}(\ainterpret{\cterm}) \in \typeinterpret{\atype}$.
  Also by the induction hypothesis, $d \mapsto \interpret{\bterm}_{\alpha[\avar:=d]}$ is either strongly monotonic
  or constant, and the same holds for $d \mapsto \interpret{\cterm}_{\alpha[\avar:=d]}$.
  We have four cases:
  \begin{itemize}
  \item Both are constant: then $d \mapsto \interpret{\bterm}_{\alpha[\avar:=d]}(\interpret{\cterm}_{\alpha[\avar:=d]})$
    is constant as well.
  \item $d \mapsto \interpret{\bterm}_{\alpha[\avar:=d]}$ is constant and $d \mapsto \interpret{\cterm}_{\alpha[\avar:=d]}$
    is strongly monotonic: then for $a \typegr{} b$ we have: $\interpret{\bterm}_{\alpha[\avar:=a]} = \interpret{\bterm}_{
    \alpha[\avar:=b]} = \ainterpret{\bterm}$, and $\interpret{\cterm}_{\alpha[\avar:=a]} \typegr{\btype}
    \interpret{\cterm}_{\alpha[\avar:=b]}$.  Hence, by monotonicity of $\ainterpret{\bterm}$ we have:
    $\interpret{\aterm}_{[\avar:=a]} = \ainterpret{\bterm}(\interpret{\cterm}_{\alpha[\avar:=a]}) \typegr{\atype}
    \ainterpret{\bterm}(\interpret{\cterm}_{\alpha[\avar:=b]}) = \interpret{\aterm}_{[\avar:=b]}$.
  \item $d \mapsto \interpret{\bterm}_{\alpha[\avar:=d]}$ is strongly monotonic and $d \mapsto
    \interpret{\cterm}_{\alpha[\avar:=d]}$ is constant: then for $a \typegr{} b$ we have: $\interpret{\bterm}_{
    \alpha[\avar:=a]} \typegr{\btype \arrtype \atype} \interpret{\bterm}_{\alpha[\avar:=b]}$ and
    $\interpret{\cterm}_{\alpha[\avar:=a]} = \interpret{\cterm}_{\alpha[\avar:=b]} = \ainterpret{\cterm}$.
    By definition of $\typegr{\btype \arrtype \atype}$, we thus have
    $\interpret{\aterm}_{[\avar:=a]} = \interpret{\bterm}_{\alpha[\avar:=a]}(\ainterpret{\cterm}) \typegr{\atype}
    \interpret{\bterm}_{\alpha[\avar:=b]}(\ainterpret{\cterm}) = \interpret{\aterm}_{[\avar:=b]}$.
  \item Both are strongly monotonic: then by monotonicity of $\interpret{\bterm}_{\alpha[\avar:=a]}$ we have
    that $\interpret{\aterm}_{\alpha[\avar:=a]} =
    \interpret{\bterm}_{\alpha[\avar:=a]}(\interpret{\cterm}_{\alpha[\avar:=a]}) \typegr{\atype}
    \interpret{\bterm}_{\alpha[\avar:=a]}(\interpret{\cterm}_{\alpha[\avar:=b]})$, and this $\typegr{\atype}
    \interpret{\bterm}_{\alpha[\avar:=b]}(\interpret{\cterm}_{\alpha[\avar:=b]}) = \interpret{\aterm}_{\alpha
    [\avar:=b]}$ because $\interpret{\bterm}_{\alpha[\avar:=a]} \typegr{\btype \arrtype \atype}
    \interpret{\bterm}_{\alpha[\avar:=b]}$.
  \end{itemize}
\item If $\aterm = \afun(\aterm_1,\dots,\aterm_k)$ with $\afun :: [\btype_1 \times \dots \times \btype_k] \arrtype \atype$
  then note that $\ainterpret{\aterm}$ is exactly $\interpret{\cvar \cdot \aterm_1 \cdots \aterm_k}_{\alpha[\cvar:=
  \funcinterpret{\afun}]}$ for a fresh variable $\cvar$.  Hence, the two statements we need to prove follow by
  using first the case for a variable, and then $k$ times the case for an application.
\item Finally, if $\aterm = \abs{\avar}{\bterm}$ with $\atype = \btype \arrtype \ctype$, then
  $\ainterpret{\aterm} = \makesm{\btype,\ctype}(d \mapsto \interpret{\bterm}_{\alpha[\avar:=d]})$.  Since, by the
  induction hypothesis, $d \mapsto \interpret{\bterm}_{\alpha[\avar:=d]}$ is either a constant or a strongly monotonic
  function from $\typeinterpret{\btype}$ to $\typeinterpret{\ctype}$, this is well-defined, and $\makesm{\btype,
  \ctype}(d \mapsto \interpret{\bterm}_{\alpha[\avar:=d]})$ yields an element of $\typeinterpret{\btype \arrtype
  \ctype}$.

  Now, let $\bvar$ be a variable.  If $\bvar = \avar$, then $e \mapsto \interpret{\aterm}_{\alpha[\bvar:=e]}$ is
  clearly a constant function: $\interpret{\aterm}_{\alpha[\bvar:=e} = \makesm{\btype,\ctype}(d \mapsto \interpret{
  \bterm}_{\alpha[\avar:=e][\avar:=d]}) = \makesm{\btype,\ctype}(d \mapsto \interpret{\bterm}_{\alpha[\avar:=d]}$.
  Otherwise, note that by the induction hypothesis either $e \mapsto \interpret{\bterm}_{\alpha[\bvar:=e][\avar:=d]}$
  is constant, or it is strongly monotonic.
  If it is constant, then for all $a,b$:
  $d \mapsto \interpret{\bterm}_{\alpha[\bvar:=a][\avar:=d]} = d \mapsto \interpret{\bterm}_{\alpha[\bvar:=b][\avar:=
  d]}$, and hence $\makesm{\btype,\ctype}(d \mapsto \interpret{\bterm}_{\alpha[\bvar:=a][\avar:=d]})$ is the same as
  $\makesm{\btype,\ctype}(d \mapsto \interpret{\bterm}_{\alpha[\bvar:=b][\avar:=d]})$; hence, $e \mapsto
  \interpret{\aterm}_{\alpha[\bvar:=e]}$ is constant too.
  Otherwise, if this function is strongly monotonic, then the function
  $d \mapsto \interpret{\bterm}_{\alpha[\bvar:=a][\avar:=d]}$ is pointwise greater than
  $d \mapsto \interpret{\bterm}_{\alpha[\bvar:=b][\avar:=d]}$.  Hence,
  $\interpret{\aterm}_{\alpha[\bvar:=a]} \typegr{\atype} \interpret{\aterm}_{\alpha[\bvar:=b]}$ as well.
  \qedhere
\end{itemize}
\end{proof}

To prove Theorem \ref{thm:typeinterpretworks} we need an AFS version of the so-called \textit{Substitution Lemma}.
We begin by giving a systematic way of extending a substitution (seen as a morphism between terms) to a valuation, seen as morphism from terms to elements of $\SMA$.
\begin{definition}\label{def:subst-lift}
    Given a substitution $\gamma = [\avar_1 := \aterm_1, \dots, \avar_n := \aterm_n]$
    and a valuation $\alpha$, we define $\alpha^\gamma$ as the valuation such that $\alpha^\gamma(\avar) = \alpha(\avar)$, if $\avar \notin \dom{\gamma}$; and $\alpha^\gamma(\avar) = \interpret{\avar\gamma}_\alpha$, otherwise.
\end{definition}
\begin{lemma}[Substitution Lemma]\label{lemma:sbst:lemma}
    For any substitution $\gamma$ and valuation $\alpha$,
    $\interpret{\aterm\gamma}_\alpha = \interpret{\aterm}_{\alpha^\gamma}$.
    Additionally, if $\interpret{\aterm} \typegr{\atype} \interpret{\bterm}$ $(\interpret{\aterm} \typegeq{\sigma} \interpret{\bterm})$, then $\interpret{\aterm\gamma} \typegr{\atype} \interpret{\bterm\gamma}$ $(\interpret{\aterm\gamma} \typegeq{\atype} \interpret{\bterm\gamma})$.
\end{lemma}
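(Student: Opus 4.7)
The plan is to prove the first equation $\interpret{\aterm\gamma}_\alpha = \interpret{\aterm}_{\alpha^\gamma}$ by induction on the structure of $\aterm$, and then derive the second part by specializing the hypothesis to the valuation $\alpha^\gamma$ and applying the first part to both sides of the inequality.

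For the base case $\aterm = \avar$, I would split on whether $\avar \in \dom{\gamma}$: if so, $\interpret{\avar\gamma}_\alpha = \alpha^\gamma(\avar) = \interpret{\avar}_{\alpha^\gamma}$ by definition of $\alpha^\gamma$; otherwise $\avar\gamma = \avar$ and both sides equal $\alpha(\avar) = \alpha^\gamma(\avar)$. The inductive cases for $\aterm = \app{\bterm}{\cterm}$ and $\aterm = \afun(\aterm_1,\dots,\aterm_k)$ are immediate from the clauses in Definition \ref{def:sminterpret} and the induction hypothesis applied to each immediate subterm.

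The main obstacle is the abstraction case $\aterm = \abs{\avar}{\bterm}$, because the definition of $(\abs{\avar}{\bterm})\gamma$ involves $\alpha$-renaming of the bound variable. Using that terms are identified up to $\alpha$-conversion, I would first rename $\avar$ to a fresh variable so that $\avar \notin \dom{\gamma}$ and $\avar$ does not occur free in $\gamma(\bvar)$ for any $\bvar \in \dom{\gamma}$; under this assumption, $(\abs{\avar}{\bterm})\gamma = \abs{\avar}{\bterm\gamma}$. Then the left-hand side unfolds to $\makesm{\atype,\btype}(d \mapsto \interpret{\bterm\gamma}_{\alpha[\avar:=d]})$ and the right-hand side to $\makesm{\atype,\btype}(d \mapsto \interpret{\bterm}_{\alpha^\gamma[\avar:=d]})$. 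By function extensionality it suffices to show, for every $d \in \typeinterpret{\atype}$, that $\interpret{\bterm\gamma}_{\alpha[\avar:=d]} = \interpret{\bterm}_{\alpha^\gamma[\avar:=d]}$. The induction hypothesis applied with the valuation $\alpha[\avar:=d]$ gives $\interpret{\bterm\gamma}_{\alpha[\avar:=d]} = \interpret{\bterm}_{(\alpha[\avar:=d])^\gamma}$, so it remains to verify the pointwise equality of valuations $(\alpha[\avar:=d])^\gamma = \alpha^\gamma[\avar:=d]$. For $\avar$ itself this holds since $\avar \notin \dom{\gamma}$, and for any other variable $\bvar \in \dom{\gamma}$ the equality $\interpret{\bvar\gamma}_{\alpha[\avar:=d]} = \interpret{\bvar\gamma}_\alpha$ follows because $\avar$ does not occur free in $\bvar\gamma$ (so a straightforward auxiliary observation—that $\interpret{s}_\alpha$ depends only on the values of $\alpha$ on free variables of $s$, itself an easy structural induction—applies); for $\bvar \notin \dom{\gamma} \cup \{\avar\}$ both sides equal $\alpha(\bvar)$.

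Finally, for the monotonicity part, suppose $\interpret{\aterm}_\beta \typegr{\atype} \interpret{\bterm}_\beta$ holds for every valuation $\beta$ (and similarly for $\typegeq{\atype}$). Given any $\alpha$, instantiate this with $\beta = \alpha^\gamma$ and apply the substitution equality just proved to both $\aterm$ and $\bterm$ to obtain $\interpret{\aterm\gamma}_\alpha \typegr{\atype} \interpret{\bterm\gamma}_\alpha$; the $\typegeq{\atype}$ case is identical. I expect the only genuinely delicate step to be the careful handling of the $\alpha$-renaming assumption on bound variables in the abstraction case, together with spelling out the auxiliary lemma that $\interpret{s}$ depends only on the free variables of $s$.
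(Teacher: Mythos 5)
Your proof is correct and follows essentially the same route as the paper: a structural induction on $\aterm$ establishing $\interpret{\aterm\gamma}_\alpha = \interpret{\aterm}_{\alpha^\gamma}$, followed by specializing the compatibility hypothesis to $\alpha^\gamma$. The paper merely asserts that the induction "can easily be shown" via a commuting diagram, whereas you correctly identify and handle the one genuinely delicate point (the $\alpha$-renaming and the valuation identity $(\alpha[\avar:=d])^\gamma = \alpha^\gamma[\avar:=d]$ in the abstraction case), so your write-up is a faithful elaboration of the paper's argument.
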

\begin{proof} By inspection of Definition \ref{def:subst-lift} it can be easily shown by induction on $\aterm$ that the following diagram commutes:
    \[
        \begin{tikzcd}
            && \terms \\
            \\
            \terms && \SMA
            \arrow["{\interpret{\cdot}_{\alpha^\gamma}}"', from=3-1, to=3-3]
            \arrow["\gamma", from=3-1, to=1-3]
            \arrow["{\interpret{\cdot}_\alpha}", from=1-3, to=3-3]
        \end{tikzcd}
    \]
    As a consequence, if $\interpret{\aterm}_\alpha \typegr{\atype} \interpret{\bterm}$ for any valuation $\alpha$,
    then $\interpret{\aterm}_{\alpha^\gamma} \typegr{\atype} \interpret{\bterm}_{\alpha^\gamma}$ in particular.
    So $\interpret{\aterm\gamma}_\alpha \typegr{\atype} \interpret{\bterm}_\alpha$.
    The case for $\typegeq{\atype}$ is analogous.
\end{proof}

Theorem \ref{thm:typeinterpretworks} is proved by induction on the rewrite relation.
\typeIntWorks*
\begin{proof}
    We reason by induction on $\aterm \arrz \bterm$. We have six cases to consider.
    \begin{itemize}
        \item Suppose $\aterm \arrz \bterm$ by $\ell \gamma \arrz r\gamma$.
              Compatibility gives $\interpret{\ell} \typegr{} \interpret{r}$, and by Lemma \ref{lemma:sbst:lemma} we have $\interpret{\ell\gamma} \typegr{} \interpret{r\gamma}$.

        \item The case $(\abs{\avar}{\aterm}) t \arrz \aterm[x := \bterm]$ follows by Compatibility.

        \item Suppose $\aterm \arrz \bterm$ by $\afun(\dots, \aterm, \dots) \arrz \afun(\dots, \bterm, \dots)$.
              By induction hypothesis, $\interpret{\aterm} \typegr{} \interpret{\bterm}$.
              If the reduction occurs in the first argument of $\afun$,
              then $\interpret{\afun(\aterm, \dots)} \typegr{} \interpret{\afun(\bterm, \dots)}$ by the fact that $\funcinterpret{\afun}$ is in $\SM$.
              For the other cases, observe that $\funcinterpret{\afun}(\interpret{\aterm_1},\dots,\interpret{\aterm_{i}}) \in \SM$ for all $i$,
              so $\funcinterpret{\afun}(\interpret{\aterm_1},\dots,\interpret{\aterm_{i}},\aterm) \typegr{}
              \funcinterpret{\afun}(\interpret{\aterm_1},\dots,\interpret{\aterm_{i}},\bterm)$.

        \item Suppose $\abs{\avar}{\aterm} \arrz \abs{\avar}{\bterm}$, with $\aterm \arrz \bterm$.
              If $x \notin \fvars{\aterm}$ then $d \mapsto \interpret{\aterm}_{\alpha[x := d]} \typegr{} d \mapsto \interpret{\bterm}_{\alpha[x := d]}$ are constant functions not in $\SM$.
              By Definition \ref{def:SM}, $\makesm{\atype, \btype}(d \mapsto \interpret{\aterm}_{\alpha[x := d]}) \typegr{\atype \arrtype \btype} \makesm{\atype, \btype}(d \mapsto \interpret{\bterm}_{\alpha[x := d]})$. On the other hand, if $x \in \fvars{\aterm}$, then $d \mapsto \interpret{\aterm}_{\alpha[x := d]} \typegr{\atype \arrtype \btype} d \mapsto \interpret{\bterm}_{\alpha[x := d]}$ are strongly monotonic functions, and the result follows by Definition \ref{def:monomaker}.

        \item The cases for application follow directly from Definition \ref{def:SM} and the induction hypothesis.
    \end{itemize}
\end{proof}

\subsection{Proofs for Section \ref{subsec:makesm}}

We prove some results regarding the functions $\nul{\atype}$, $\addcost{\atype}$ and $\costof{\atype}$.

First, as stated in the text:

\begin{lemma}\label{lem:builders} For all types $\atype$:
    \begin{enumerate}
        \item\label{lem:builders:nul}
              $\nul{\atype} \in \typeinterpret{\atype}$;
        \item\label{lem:builders:addcost}
              for all $n \in \N$ and $x \in \typeinterpret{\atype}$: $\addcost{\atype}(n,x) \in \typeinterpret{\atype}$;
        \item\label{lem:builders:costof}
              $\costof{\atype}$ is weakly monotonic and strict in its first argument;
        \item\label{lem:builders:addcost2}
              $\addcost{\atype}$ is weakly monotonic and strict in both its arguments.
    \end{enumerate}
\end{lemma}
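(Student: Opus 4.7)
The plan is to prove the four claims simultaneously by induction on the structure of the type $\atype$. The mutual recursion in the definitions of $\nul{\atype}$, $\addcost{\atype}$, and $\costof{\atype}$ forces us to establish the four statements together: showing $\nul{\btype \arrtype \ctype}$ is strongly monotonic needs the strictness of $\costof{\btype}$ and $\addcost{\ctype}$, while showing $\costof{\btype \arrtype \ctype}$ is strict requires $\nul{\btype} \in \typeinterpret{\btype}$ (so that $\typeinterpret{\btype}$ is non-empty, as required by the definition of $\typegr{\btype \arrtype \ctype}$).

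For the base case $\atype = \asort \in \sortset$, all four claims follow by direct inspection. We have $\nul{\asort} = \pair{0,\dots,0} \in \N^{\typecount{\asort}} = \typeinterpret{\asort}$, and $\addcost{\asort}(n,x)$ again yields a tuple in $\N^{\typecount{\asort}}$. Since $\costof{\asort}$ returns the first component, and $\sortgr{\asort}$ requires strict inequality in the first component while $\sortgeq{\asort}$ requires weak inequality in every component, monotonicity and strictness of $\costof{\asort}$ follow immediately from Definition~\ref{def:tuple_algebra}; similarly for $\addcost{\asort}$, which only modifies the first component by addition.

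For the inductive step, suppose $\atype = \btype \arrtype \ctype$ with the induction hypothesis holding for $\btype$ and $\ctype$. To prove claim (\ref{lem:builders:nul}), I would take $d \typegr{\btype} d'$; by the IH, $\costof{\btype}(d) > \costof{\btype}(d')$, and by strictness of $\addcost{\ctype}$ in its first argument (IH), $\addcost{\ctype}(\costof{\btype}(d),\nul{\ctype}) \typegr{\ctype} \addcost{\ctype}(\costof{\btype}(d'),\nul{\ctype})$, giving strong monotonicity of $\nul{\btype \arrtype \ctype}$. Claim (\ref{lem:builders:addcost}) is analogous, using strong monotonicity of $F$ together with strictness of $\addcost{\ctype}$ in its second argument. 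For claim (\ref{lem:builders:costof}), given $F \typegr{\btype \arrtype \ctype} G$, the definition yields $F(\nul{\btype}) \typegr{\ctype} G(\nul{\btype})$ (crucially, $\nul{\btype} \in \typeinterpret{\btype}$ by claim (\ref{lem:builders:nul}) of the IH, so $\typeinterpret{\btype}$ is nonempty and the use of $\typegr{\btype \arrtype \ctype}$ is justified), then strictness of $\costof{\ctype}$ closes the argument; weak monotonicity is symmetric. Claim (\ref{lem:builders:addcost2}) is proved pointwise: strictness and weak monotonicity of $\addcost{\btype \arrtype \ctype}$ in either argument reduce immediately, via the definition $\addcost{\btype \arrtype \ctype}(c,F)(d) = \addcost{\ctype}(c,F(d))$, to the corresponding property of $\addcost{\ctype}$ from the IH.

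The main subtlety will be respecting the ordering in which the four claims are applied within each inductive step, so that every invocation of $\typegr{\btype \arrtype \ctype}$ is backed by the non-emptiness witness $\nul{\btype}$. Once that dependency is tracked, the rest is a routine unfolding of the definitions and an appeal to the induction hypothesis.
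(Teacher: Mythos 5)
Your proof is correct and follows essentially the same mutual-induction-on-types approach as the paper's. For a fully written-out version you would also want to check, in the arrow-type case of claims (\ref{lem:builders:nul}) and (\ref{lem:builders:addcost}), that the resulting function actually takes values in $\typeinterpret{\ctype}$ (using the induction hypothesis for (\ref{lem:builders:nul}) and (\ref{lem:builders:addcost}) at type $\ctype$), not only that it is strongly monotonic, and treat the $\typegeq{}$ case alongside the $\typegr{}$ case for the monotonicity arguments --- but these are routine, and the key structural ideas, including the observation that $\nul{\btype}$ serves as the non-emptiness witness making $\costof{\btype\arrtype\ctype}$ well-defined, match the paper's proof.
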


\begin{proof}
    By a mutual induction on $\atype$.

    (\ref{lem:builders:nul})
    If $\atype = \asort \in \sortset$, then $\nul{\atype} = \pair{0,\dots,0}$ is clearly in $\atype{\asort}$.
    If $\atype = \btype \arrtype \ctype$ then $\nul{\btype \arrtype \ctype} = d \mapsto \addcost{\ctype}(
        \costof{\btype}(d),\nul{\ctype})$.  Clearly $\costof{\btype}(d) \in \N$ and by induction hypothesis
    (\ref{lem:builders:nul}) $\nul{\ctype} \in \typeinterpret{\ctype}$, so by induction hypothesis
    (\ref{lem:builders:addcost}) $\addcost{\ctype}(\costof{\btype}(d),\nul{\ctype}) \in \typeinterpret{
            \ctype}$.  We still need to see that this function is weakly monotonic and strict in its argument.
    So suppose $x \typegr{\asort} y$; the case for $x \typegeq{\asort} y$ is similar.
    Then $\costof{\btype}(x) > \costof{\btype}(y)$ by induction hypothesis (\ref{lem:builders:costof}).
    Hence, $\addcost{\ctype}(\costof{\btype}(x),\nul{\ctype}) \typegr{\ctype}
        \addcost{\ctype}(\costof{\btype}(y,\nul{\ctype})$ by induction hypothesis
    (\ref{lem:builders:addcost2}); that is $\nul{\atype}(x) \typegr{\ctype} \nul{\atype}(y)$.

    (\ref{lem:builders:addcost})
    If $\atype = \asort \in \sortset$, then $\addcost{\atype}(n,x) = \pair{n + x_1,x_2,\dots,x_{\typecount{\asort}}}
        \in \typeinterpret{\asort}$.  Otherwise, let $\atype = \btype \arrtype \ctype$, and let $n \in \N$ and
    $F \in \typeinterpret{\btype \arrtype \ctype}$.  Then $\addcost{\btype \arrtype \ctype}(n,F) = d \mapsto
        \addcost{\ctype}(n,F(d))$.  By induction hypothesis (\ref{lem:builders:addcost}),
    $\addcost{\ctype}(n,F(d)) \in \typeinterpret{\ctype}$, so $\addcost{\btype \arrtype \ctype}(n,F) \in
        \typeinterpret{\btype} \arrfunc \typeinterpret{\ctype}$; we only need to see that it is strongly monotonic.
    So let $u \typegr{\btype} w$; we will see that $\addcost{\btype \arrtype \ctype}(n,F,u) \typegr{\ctype}
        \addcost{\btype \arrtype \ctype}(n,F,w)$; the case for $u \typegeq{\btype} w$ is similar.
    We have $\addcost{\btype \arrtype \ctype}(n,F,u) = \addcost{\ctype}(n,F(u))$.  Since $F$ is strongly
    monotonic, $F(u) \typegr{\ctype} F(w)$.  By induction hypothesis (\ref{lem:builders:addcost2}),
    $\addcost{\ctype}(n,F(u)) \typegr{\ctype} \addcost{\ctype}(n,F(w)) = \addcost{\btype \arrtype \ctype}(n,F,w)$.

    (\ref{lem:builders:costof})
    Suppose $x \typegr{\atype} y$; the case for $x \typegeq{\atype} y$ is similar.
    If $\atype = \asort \in \sortset$, then $\costof{\atype}(x) = x_1 > y_1 = \costof{\atype}(y)$ y
    definition of $x \typegr{\asort} y$.
    If $\atype = \btype \arrtype \ctype$ then $\costof{\atype}(x) = \costof{\ctype}(x(\nul{\btype}))$
    Since $x \typegr{\btype \arrtype \ctype} y$ we have $x(\nul{\btype}) \typegr{\ctype} y(\nul{\btype})$.
    By induction hypothesis (\ref{lem:builders:costof}), $\costof{\ctype}(x(\nul{\btype})) >
        \costof{\ctype}(y(\nul{\btype}))$ follows as required.

    (\ref{lem:builders:addcost2})
    Suppose $n \geq m$ and $x \typegeq{\atype} y$.
    We will see that (a) $\addcost{\atype}(n,x) \typegeq{\atype} \addcost{\atype}(m,y)$.and (b) if $n > m$
    or $x \typegr{\atype} y$ then $\addcost{\atype}(n,x) \typegr{\atype} \addcost{\atype}(m,y)$.
    If $\atype = \asort \in \sortset$, then $\addcost{\atype}(n,x) = \pair{n + x_1,x_2,\dots,x_{\typecount{\asort}}}
        \typegeq{\asort} \pair{m + y_1,y_2,\dots,y_{\typecount{\asort}}}$ because each $x_i \geq y_i$ and $n \geq m$;
    in case (b) we have $n > m$ or $x_1 > y_1$ so certainly $n + x_1 > m + y_1$.
    If $\atype = \btype \arrtype \ctype$ then $\addcost{\atype}(n,x) = d \mapsto \addcost{\ctype}(n,x(d))$.
    By definition, $x(d) \typegeq{\ctype} y(d)$ and if $x \typegr{\atype} y$ even $x(d) \typegr{\ctype} y(d)$.
    Hence, by induction hypothesis (\ref{lem:builders:addcost2}), $\addcost{\ctype}(n,x(d)) \typegeq{\ctype}
        \addcost{\ctype}(m,y(d))$ and if $n > m$ or $x \typegr{\atype} y$ even $\addcost{\ctype}(n,x(d))
        \typegr{\ctype} \addcost{\ctype}(m,y(d))$.  This suffices, since $\typegeq{\btype \arrtype \ctype}$ and
    $\typegr{\btype \arrtype \ctype}$ just do a place-wise comparison.
\end{proof}

Next, the following lemmas provide basic properties of these functions (and how they interact with each other).

\begin{lemma}\label{lem:addcostprops}
    For all types $\atype$, for all $x \in \typeinterpret{\atype}$:
    \begin{enumerate}
        \item\label{lem:addcostprops:addcost0}
              $\addcost{\atype}(0,x) = x$;
        \item\label{lem:addcostprops:addcosttwice}
              for all $n,m \in \N$: $\addcost{\atype}(n,\addcost{\atype}(m,x)) = \addcost{\atype}(n+m,x)$;
        \item\label{lem:addcostprops:addcostpositive}
              if $n > 0$ then $\addcost{\atype}(n,x) \typegr{\atype} x$;
        \item\label{lem:addcostprops:addone}
              if $y \in \typeinterpret{\atype}$ is such that $x \typegr{\atype} y$ then
              $x \typegeq{\atype} \addcost(1,y)$;
        \item\label{lem:addcostprops:costof}
              for all $n \in \N$: $\costof{\atype}(\addcost{\atype}(n,x)) = n + \costof{\atype}(x)$.
    \end{enumerate}
\end{lemma}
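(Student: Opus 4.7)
The plan is to prove each of the five items by straightforward induction on the type $\atype$, relying on function extensionality in the inductive step (as noted at the start of the proofs section) and on basic arithmetic in the base case. Items (\ref{lem:addcostprops:addcost0}), (\ref{lem:addcostprops:addcosttwice}) and (\ref{lem:addcostprops:costof}) are purely equational, so the inductions are mechanical: at a sort $\asort$ both sides reduce to the same tuple by the definitions of $\addcost{\asort}$ and $\costof{\asort}$, and for $\atype = \btype \arrtype \ctype$ one unfolds $\addcost{\btype \arrtype \ctype}(c,F) = d \mapsto \addcost{\ctype}(c,F(d))$ (and similarly for $\costof{\btype \arrtype \ctype}$), applies the induction hypothesis pointwise in $d$, and invokes extensionality.

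For item (\ref{lem:addcostprops:addcostpositive}), I would avoid a fresh induction and instead deduce it from earlier results: by item (\ref{lem:addcostprops:addcost0}) we have $x = \addcost{\atype}(0,x)$, and Lemma \ref{lem:builders}(\ref{lem:builders:addcost2}) tells us that $\addcost{\atype}$ is strict in its first argument, so $n > 0$ gives $\addcost{\atype}(n,x) \typegr{\atype} \addcost{\atype}(0,x) = x$.

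Item (\ref{lem:addcostprops:addone}) is the one requiring a little extra care, and is where the structure of $\N$ enters. At the base $\atype = \asort$, the hypothesis $x \typegr{\asort} y$ unfolds to $x_1 > y_1$ together with $x_i \geq y_i$ for $i \geq 2$; since we are in $\N$, $x_1 > y_1$ implies $x_1 \geq 1 + y_1$, and therefore $x \typegeq{\asort} \pair{1 + y_1, y_2, \dots, y_{\typecount{\asort}}} = \addcost{\asort}(1,y)$. For $\atype = \btype \arrtype \ctype$, one has $x(d) \typegr{\ctype} y(d)$ for every $d \in \typeinterpret{\btype}$ by definition of $\typegr{\btype \arrtype \ctype}$, so the induction hypothesis yields $x(d) \typegeq{\ctype} \addcost{\ctype}(1,y(d)) = \addcost{\btype \arrtype \ctype}(1,y)(d)$ for each $d$, which is exactly the pointwise definition of $\typegeq{\btype \arrtype \ctype}$.

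The main obstacle is purely conceptual rather than computational: one must be careful that the inductive hypotheses for the different items can be invoked independently (they do not mutually depend on one another, unlike Lemma \ref{lem:builders}), and that the use of extensionality is justified because both sides of each equation in (\ref{lem:addcostprops:addcost0}), (\ref{lem:addcostprops:addcosttwice}) and (\ref{lem:addcostprops:costof}) lie in $\typeinterpret{\atype}$ (or in $\N$), which is guaranteed by Lemma \ref{lem:builders}.
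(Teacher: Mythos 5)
Your proof sketch is correct and follows essentially the same route as the paper's: a straightforward induction on $\atype$ for each item, with function extensionality justifying the inductive steps and simple arithmetic in $\N$ handling the base sorts, exactly as in the paper's proof. The one deviation worth noting is item (\ref{lem:addcostprops:addcostpositive}), which you derive from item (\ref{lem:addcostprops:addcost0}) together with strictness of $\addcost{\atype}$ in its first argument (Lemma \ref{lem:builders}(\ref{lem:builders:addcost2})) rather than by a fresh induction as the paper does; this shortcut is valid, since Lemma \ref{lem:builders} is established independently of the present lemma and reflexivity of $\typegeq{\atype}$ is available from Lemma \ref{lemma:typeOrd}, and it is slightly more economical. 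Your observation that, unlike Lemma \ref{lem:builders}, the five items here do not require a mutual induction is also accurate and matches the paper's structure.
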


\begin{proof}
    All items hold by induction on $\atype$.

    (\ref{lem:addcostprops:addcost0})
    If $\atype = \asort \in \sortset$ then $\addcost{\atype}(0,x) = \pair{0 + x_1,x_2,
        \dots,x_{\typecount{\asort}}} = \pair{x_1,\dots,x_{\typecount{\asort}}} = x$.
    If $\atype = \btype \arrtype \ctype$ then $\addcost{\atype}(0,x) = d \mapsto
        \addcost{\ctype}(0,x(d)) =$ (IH) $d \mapsto x(d) = d$ (extensionally).

    (\ref{lem:addcostprops:addcosttwice})
    If $\atype = \asort \in \sortset$, then $\addcost{\atype}(n,\addcost{\atype}(m,x)) =
        \pair{n + m + x_1,x_2,\dots,x_{\typecount{\asort}}} = \addcost{\atype}(n+m,x)$.
    If $\atype = \btype \arrtype \ctype$, then $\addcost{\atype}(n,\addcost{\atype}(m,x))
        = d \mapsto \addcost{\atype}(n,\addcost{\atype}(m,x(d))) =$ (IH)
    $d \mapsto \addcost{\atype}(n + m,x(d)) = \addcost{\atype}(n + m,x)$.

    (\ref{lem:addcostprops:addcostpositive})
    Let $n > 0$.
    If $\atype = \asort \in \sortset$ then $\addcost{\atype}(n,x) =
        \pair{n + x_1,x_2,\dots,x_{\typecount{\asort}}} \typegr{\asort}
        \pair{x_1,\dots,x_{\typecount{\asort}}} = x$.
    If $\atype = \btype \arrtype \ctype$ then $\addcost{\atype}(n,x) = d \mapsto
        \addcost{\ctype}(n,x(d))$ and by the induction hypothesis,
    $\addcost{\ctype}(n,x(d)) \typegr{\ctype} x(d)$.  Hence, since $\typegr{\atype}$
    does a pointwise comparison, $d \mapsto \addcost{\ctype}(n,x(d)) \typegr{\atype}
        d \mapsto x(d) = x$ (extensionally).

    (\ref{lem:addcostprops:addone})
    Let $x \typegr{\atype} y$.
    If $\atype = \asort \in \sortset$, then $x = \pair{x_1,\dots,x_{\typecount{\asort}}}$
    and $y = \pair{y_1,\dots,y_{\typecount{\asort]}}}$, and $x \typegr{\asort} y$ implies
    that $x_1 > y_1$ and each $x_i \geq y_i$.  But then also $x_1 \geq 1 + y_1$, so
    $x = \pair{x_1,\dots,x_{\typecount{\asort}}} \typegeq{\asort} \pair{1 + y_1,y_2,\dots,
            y_{\typecount{\asort}}} = \addcost{\asort}(1,y)$.
    If $\atype = \btype \arrtype \ctype$, then $x = d \mapsto x(d)$ and $y = d \mapsto
        y(d)$ and $x \typegr{\atype} y$ implies that $x(d) \typegr{\ctype} y(d)$ for all
    $d \in \typeinterpret{\btype}$.  By the induction hypothesis, $x(d) \typegeq{\ctype}
        \addcost{\ctype}(1,y(d))$ for all $d$, and therefore $x = d \mapsto x(d)
        \typegeq{\btype \arrtype \ctype} d \mapsto \addcost{\ctype}(1,y(d)) = \addcost{
            \btype \arrtype \ctype}(y)$.

    (\ref{lem:addcostprops:costof})
    If $\atype = \asort \in \sortset$, then $\costof{\atype}(\addcost{\atype}(n,x)) =
        \costof{\atype}(\pair{n + x_1,x_2,\dots,x_{\typecount{\asort}}}) = n + x_1 = n +
        \costof{\atype}(x)$.
    If $\atype = \btype \arrtype \ctype$, then $\costof{\atype}(\addcost{\atype}(n,x))
        = \costof{\atype}(d \mapsto \addcost{\ctype}(n, x(d)))
        = \costof{\ctype}(\addcost{\ctype}(n,x(\nul{\btype})))$, which by the induction hypothesis equals
    $n + \costof{\ctype}(x(\nul{\btype})) = n + \costof{\btype \arrtype \ctype}(x)$.
\end{proof}

\begin{lemma}\label{lem:addcostincrease}
    For all types $\atype,\btype$, $F \in \typeinterpret{\atype \arrtype \btype}$,
    $x \in \typeinterpret{\atype}$ and $n \in \N$: \\
    $F(\addcost{\atype}(n,x)) \typegeq{\atype} \addcost{\btype}(n,F(x))$.
\end{lemma}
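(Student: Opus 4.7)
The plan is to prove the inequality by induction on $n$, using the arithmetic lemmas about $\addcost{}$ already established in Lemma \ref{lem:addcostprops}. The key insight is that strong monotonicity of $F$ turns any strict increase in the input into a strict increase in the output, and Lemma \ref{lem:addcostprops}(\ref{lem:addcostprops:addone}) lets us convert such a strict step into a $\addcost{}$ of size $1$, which can then be iterated.

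For the base case $n = 0$, Lemma \ref{lem:addcostprops}(\ref{lem:addcostprops:addcost0}) gives both $\addcost{\atype}(0,x) = x$ and $\addcost{\btype}(0,F(x)) = F(x)$, so $F(\addcost{\atype}(0,x)) = F(x) \typegeq{\btype} F(x) = \addcost{\btype}(0,F(x))$ by reflexivity. For the inductive step $n \to n+1$, I would chain together four facts. First, by Lemma \ref{lem:addcostprops}(\ref{lem:addcostprops:addcosttwice}) and (\ref{lem:addcostprops:addcostpositive}) we have $\addcost{\atype}(n+1,x) = \addcost{\atype}(1,\addcost{\atype}(n,x)) \typegr{\atype} \addcost{\atype}(n,x)$. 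Since $F$ is strongly monotonic, this gives $F(\addcost{\atype}(n+1,x)) \typegr{\btype} F(\addcost{\atype}(n,x))$. Now Lemma \ref{lem:addcostprops}(\ref{lem:addcostprops:addone}) converts this strict step into $F(\addcost{\atype}(n+1,x)) \typegeq{\btype} \addcost{\btype}(1, F(\addcost{\atype}(n,x)))$. Applying the induction hypothesis together with monotonicity of $\addcost{\btype}$ (Lemma \ref{lem:builders}(\ref{lem:builders:addcost2})), this is $\typegeq{\btype} \addcost{\btype}(1, \addcost{\btype}(n,F(x)))$, which by Lemma \ref{lem:addcostprops}(\ref{lem:addcostprops:addcosttwice}) equals $\addcost{\btype}(n+1,F(x))$, closing the induction by transitivity of $\typegeq{\btype}$ (Lemma \ref{lemma:typeOrd}).

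The main obstacle is, in fact, only a conceptual one: it might be tempting to try induction on the type structure of $\atype$ or $\btype$, which would force one to unfold the definition of $\addcost{}$ at each type and reason about pointwise behaviour of $F$ directly. This gets messy because at arrow type the $\addcost{}$ on the codomain is itself defined pointwise, and one would have to manipulate nested $\lambda$-expressions. Inducting on $n$ instead entirely side-steps any case analysis on types by treating $\addcost{}(n+1,-)$ as one $\addcost{}(1,-)$ step on top of $\addcost{}(n,-)$, which is exactly the shape that Lemma \ref{lem:addcostprops}(\ref{lem:addcostprops:addone}) is designed to consume. So the real work of the lemma has effectively been bundled into the previous lemma, leaving only a routine three-line induction.
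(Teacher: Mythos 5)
Your proof is correct and follows essentially the same route as the paper: induction on $n$, with the inductive step decomposing $\addcost{\atype}(n+1,x)$ as $\addcost{\atype}(1,\addcost{\atype}(n,x))$, using strong monotonicity of $F$ to obtain a strict inequality and Lemma \ref{lem:addcostprops}(\ref{lem:addcostprops:addone}) to convert it into an $\addcost{\btype}(1,\cdot)$ bound. The only cosmetic difference is that you apply (\ref{lem:addcostprops:addone}) before the induction hypothesis (so you need weak monotonicity of $\addcost{\btype}$ in its second argument), whereas the paper applies the induction hypothesis first and then (\ref{lem:addcostprops:addone}); both orderings are sound.
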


\begin{proof}
    By induction on $n$.
    If $n = 0$, then $F(\addcost{\atype}(n,x)) = F(x) = \addcost{\btype}(n,F(x))$ by
    Lemma \ref{lem:addcostprops}(\ref{lem:addcostprops:addcost0}).
    If $n = i + 1$, then $\addcost{\atype}(n,x) = \addcost{\atype}(1,\addcost{\atype}(i,x))$
    by Lemma \ref{lem:addcostprops}(\ref{lem:addcostprops:addcosttwice}), which
    $\typegr{\atype} \addcost{\atype}(i,x)$
    by Lemma \ref{lem:addcostprops}(\ref{lem:addcostprops:addcostpositive}).
    Hence, by monotonicity, $F(\addcost{\atype}(n,x)) \typegr{\btype} F(\addcost{\atype}(i,x))$.
    By the induction hypothesis, $F(\addcost{\atype}(i,x)) \typegeq{\btype} \addcost{\btype}(i,
        F(x))$, so $F(\addcost{\atype}(n,x)) \typegr{\btype} \addcost{\btype}(i,F(x))$.  By
    Lemma \ref{lem:addcostprops}(\ref{lem:addcostprops:addone}) therefore
    $F(\addcost{\atype}(n,x) \typegeq{\btype} \addcost{\btype}(1,\addcost{\btype}(i,F(x))$.
    By Lemma \ref{lem:addcostprops}(\ref{lem:addcostprops:addcosttwice}) we
    thus have $F(\addcost{\atype}(n,x)) \typegeq{\btype}(i+1,F(x))$.
\end{proof}

\begin{lemma}\label{lem:estimatebycost}
    For all types $\atype$ and all $x \in \typeinterpret{\atype}$:
    $x \typegeq{\atype} \addcost{\atype}(\costof{\atype}(x),\nul{\atype})$.
\end{lemma}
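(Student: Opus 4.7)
The plan is to prove $x \typegeq{\atype} \addcost{\atype}(\costof{\atype}(x), \nul{\atype})$ by induction on the type $\atype$.

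For the base case $\atype = \asort \in \sortset$, write $x = \pair{x_1, \ldots, x_{\typecount{\asort}}}$. Unfolding the definitions gives $\costof{\asort}(x) = x_1$ and $\nul{\asort} = \pair{0, \ldots, 0}$, so $\addcost{\asort}(\costof{\asort}(x), \nul{\asort}) = \pair{x_1, 0, \ldots, 0}$. Since each component $x_i \geq 0$, this is pointwise below $x$, establishing the inequality.

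For the inductive case $\atype = \btype \arrtype \ctype$ with $F \in \typeinterpret{\btype \arrtype \ctype}$, I would argue pointwise: fix an arbitrary $d \in \typeinterpret{\btype}$ and compute the right-hand side applied to $d$. Unfolding, it equals $\addcost{\ctype}(\costof{\ctype}(F(\nul{\btype})), \addcost{\ctype}(\costof{\btype}(d), \nul{\ctype}))$, which by Lemma \ref{lem:addcostprops}(\ref{lem:addcostprops:addcosttwice}) collapses to $\addcost{\ctype}(\costof{\ctype}(F(\nul{\btype})) + \costof{\btype}(d), \nul{\ctype})$. So it suffices to show that $F(d)$ dominates this expression under $\typegeq{\ctype}$.

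I then chain three inequalities. First, applying the induction hypothesis at type $\btype$ to $d$ gives $d \typegeq{\btype} \addcost{\btype}(\costof{\btype}(d), \nul{\btype})$; by (weak) monotonicity of $F$ followed by Lemma \ref{lem:addcostincrease}, we obtain $F(d) \typegeq{\ctype} F(\addcost{\btype}(\costof{\btype}(d), \nul{\btype})) \typegeq{\ctype} \addcost{\ctype}(\costof{\btype}(d), F(\nul{\btype}))$. Second, applying the induction hypothesis at type $\ctype$ to $F(\nul{\btype})$, together with weak monotonicity of $\addcost{\ctype}$ in its second argument (Lemma \ref{lem:builders}(\ref{lem:builders:addcost2})), bounds the latter from below by $\addcost{\ctype}(\costof{\btype}(d), \addcost{\ctype}(\costof{\ctype}(F(\nul{\btype})), \nul{\ctype}))$. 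Finally, Lemma \ref{lem:addcostprops}(\ref{lem:addcostprops:addcosttwice}) rewrites this as exactly the target expression.

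The main obstacle is purely organisational: choosing the right instances of the induction hypothesis (once at $\btype$ applied to $d$, and once at the \emph{smaller} type $\ctype$ applied to $F(\nul{\btype})$, rather than at $\atype$ itself), and threading transitivity of $\typegeq{\ctype}$ through the two separate monotonicity steps (of $F$ and of $\addcost{\ctype}$). No new calculation beyond the already-established properties of $\addcost{}$, $\costof{}$ and $\nul{}$ is required.
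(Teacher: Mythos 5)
Your proof is correct and uses the same toolbox (both induction hypotheses, monotonicity of $x$, Lemma~\ref{lem:addcostincrease}, and the algebraic properties of $\addcost{}$), but it is organised somewhat differently from the paper's argument. The paper first applies the induction hypothesis at $\ctype$ to $x(d)$, rewrites both sides via the $\addcost{}$-properties, and then reduces the goal to the \emph{numeric} inequality $\costof{\ctype}(x(d)) \geq \costof{\atype}(x) + \costof{\btype}(d)$, which it proves by a separate chain through $\costof{\ctype}$ using Lemma~\ref{lem:addcostprops}(\ref{lem:addcostprops:costof}). You instead stay entirely at the level of $\typegeq{\ctype}$: after obtaining $F(d) \typegeq{\ctype} \addcost{\ctype}(\costof{\btype}(d), F(\nul{\btype}))$ via the IH at $\btype$ and Lemma~\ref{lem:addcostincrease}, you apply the IH at $\ctype$ to $F(\nul{\btype})$ (rather than to $x(d)$) and push the resulting estimate through $\addcost{\ctype}$'s monotonicity in its second argument, then merge nested $\addcost{}$'s with Lemma~\ref{lem:addcostprops}(\ref{lem:addcostprops:addcosttwice}). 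Your version is arguably a bit more direct, since it avoids the detour through a component-wise cost inequality and never needs Lemma~\ref{lem:addcostprops}(\ref{lem:addcostprops:costof}); the paper's version cleanly isolates the quantitative content in a single scalar inequality, which some readers may find easier to verify. Both instantiations of the $\ctype$-induction hypothesis and both transitivity chains are sound; there is no gap.
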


\begin{proof}
    By induction on $\atype$.

    If $\atype = \asort \in \sortset$ then $x = \pair{x_1,x_2,\dots,
            x_{\typecount{\asort}}} \typegeq{\asort} \pair{x_1,0,\dots,0} =
        \addcost{\asort}(x_1,\pair{0,\dots,0}) = \addcost{\asort}(\costof{\asort}(
        x),\nul{\asort})$.
    In the remainder, we consider the case $\atype = \btype \arrtype \ctype$.

    In this case, $x = d \mapsto x(d)$ (extensionally), which by the induction
    hypothesis $\typegeq{\btype \arrtype \ctype} d \mapsto \addcost{\ctype}(\costof{
        \ctype}(x(d)),\nul{\ctype})$.
    On the other hand, $\addcost{\atype}(\costof{\atype}(x),\nul{\atype}) = d \mapsto\linebreak
        \addcost{\ctype}(\costof{\atype}(x),\nul{\atype}(d)) = d \mapsto \addcost{\ctype}(
        \costof{\atype}(x),\addcost{\ctype}(\costof{\btype}(d),\nul{\ctype}))$.  By
    Lemma \ref{lem:addcostprops}(\ref{lem:addcostprops:addcosttwice}) this is
    exactly $d \mapsto \addcost{\ctype}(\costof{\atype}(x) + \costof{\btype}(d),
        \nul{\ctype})$.

    Hence, by monotonicity of $\addcost{\ctype}$ (Lemma
    \ref{lem:builders}(\ref{lem:builders:addcost2})), it suffices if we can see that, for all $d$,
    $\costof{\ctype}(x(d)) \geq \costof{\atype}(x) + \costof{\btype}(d)$.
    To see this, note that by the induction hypothesis,
    $d \typegeq{\btype} \addcost{\btype}(\costof{\btype}(d),\nul{\btype})$.
    Hence, $x(d) \typegeq{\ctype} x(\addcost{\btype}(\costof{\btype}(d),\nul{\btype})$ by monotonicity
    of $x$.
    By Lemma \ref{lem:addcostincrease} we have
    $x(d) \typegeq{\ctype} \addcost{\ctype}(\costof{\btype}(d),x(\nul{\btype}))$.
    Hence, by monotonicity of $\costof{\ctype}$ (Lemma
    \ref{lem:builders}(\ref{lem:builders:costof})),
    $\costof{\ctype}(x(d)) \geq \costof{\ctype}(\addcost{\ctype}(\costof{\btype}(d),x(\nul{\btype})))$.
    By Lemma \ref{lem:addcostprops}(\ref{lem:addcostprops:costof}), this
    $= \costof{\btype}(d) + \costof{\ctype}(x(\nul{\btype}))
        = \costof{\btype}(d) + \costof{\atype}(x)$.
    Hence, we have obtained the required inequality
    $\costof{\ctype}(x(d)) \geq \costof{\atype}(x) + \costof{\btype}(d)$.
\end{proof}

\begin{lemma}\label{lem:argumentincrease}
For $F \in \typeinterpret{\atype \arrtype \btype}$ and $\avar \in \typeinterpret{\atype}$ we have:
$\costof{\btype}(F(\avar)) \geq \costof{\atype}(\avar)$.
\end{lemma}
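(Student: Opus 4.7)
The plan is to derive the inequality by combining three of the results already established for $\nul{}$, $\addcost{}$, and $\costof{}$: namely Lemma~\ref{lem:estimatebycost} (that $x$ dominates $\addcost{\atype}(\costof{\atype}(x),\nul{\atype})$), Lemma~\ref{lem:addcostincrease} (that $F$ commutes with $\addcost{}$ up to $\typegeq{}$), and Lemma~\ref{lem:addcostprops}(\ref{lem:addcostprops:costof}) (that $\costof{}$ pulls a $\addcost{}$-increment out as addition on $\N$). The key insight is that we can use these to ``push'' the cost of $x$ through $F$ and extract it again on the outside as an ordinary natural-number summand.

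First I would apply Lemma~\ref{lem:estimatebycost} to obtain $x \typegeq{\atype} \addcost{\atype}(\costof{\atype}(x),\nul{\atype})$. Since $F$ is weakly monotonic (being an element of $\typeinterpret{\atype \arrtype \btype}$), this gives
\[
F(x) \;\typegeq{\btype}\; F(\addcost{\atype}(\costof{\atype}(x),\nul{\atype})).
\]
Next I would apply Lemma~\ref{lem:addcostincrease} with $n = \costof{\atype}(x)$ to push the $\addcost{}$ through $F$:
\[
F(\addcost{\atype}(\costof{\atype}(x),\nul{\atype})) \;\typegeq{\btype}\; \addcost{\btype}(\costof{\atype}(x),F(\nul{\atype})).
\]
Chaining these via transitivity of $\typegeq{\btype}$ (Lemma~\ref{lemma:typeOrd}) yields $F(x) \typegeq{\btype} \addcost{\btype}(\costof{\atype}(x),F(\nul{\atype}))$.

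Finally, I would apply weak monotonicity of $\costof{\btype}$ (Lemma~\ref{lem:builders}(\ref{lem:builders:costof})) to obtain $\costof{\btype}(F(x)) \geq \costof{\btype}(\addcost{\btype}(\costof{\atype}(x),F(\nul{\atype})))$, and then Lemma~\ref{lem:addcostprops}(\ref{lem:addcostprops:costof}) to simplify the right-hand side to $\costof{\atype}(x) + \costof{\btype}(F(\nul{\atype}))$. Since $\costof{\btype}(F(\nul{\atype})) \in \N$ is non-negative (indeed it equals $\costof{\atype \arrtype \btype}(F)$ by definition), the conclusion $\costof{\btype}(F(x)) \geq \costof{\atype}(x)$ follows immediately.

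There is no real obstacle here; the proof is a short, linear chain of the lemmas already developed in the appendix, and no induction on $\atype$ is needed since all the work is absorbed into those lemmas. The only thing to verify carefully is that each inequality lives in the correct ordering ($\typegeq{\atype}$, $\typegeq{\btype}$, or the usual $\geq$ on $\N$), and that the monotonicity properties being invoked match their stated preconditions.
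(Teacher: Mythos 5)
Your proposal is correct and follows exactly the same chain of lemmas as the paper's own proof: Lemma~\ref{lem:estimatebycost}, monotonicity of $F$, Lemma~\ref{lem:addcostincrease}, monotonicity of $\costof{\btype}$, and Lemma~\ref{lem:addcostprops}(\ref{lem:addcostprops:costof}). If anything, your version is slightly cleaner in explicitly invoking \emph{weak} monotonicity of $\costof{\btype}$ (which is what is actually needed, since the preceding comparison is $\typegeq{\btype}$, not $\typegr{\btype}$).
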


\begin{proof}
Let $n := \costof{\atype}(\avar)$.
    By Lemma \ref{lem:estimatebycost}, $\avar \typegeq{\atype} \addcost{\atype}(\costof{\atype}(\avar),\nul{\atype}) =
    \addcost{\atype}(n,\nul{\atype})$.
    Hence, by monotonicity of $F$, $F(\avar) \typegeq{\btype} F(\addcost{\atype}(n,\nul{\atype}))$.
    By Lemma \ref{lem:addcostincrease}, this implies that $F(\avar) \typegeq{\btype} \addcost{\btype}(n,F(\nul{\atype}))$.
    Since $\costof{\btype}$ is strict in its first argument by Lemma \ref{lem:builders}(\ref{lem:builders:costof}),
    we thus have $\costof{\btype}(F(\avar)) \geq \costof{\atype}(\addcost{\btype}(n,F(\nul{\atype})))$, which
    $\geq n$ by Lemma \ref{lem:addcostprops}(\ref{lem:addcostprops:costof}).
\end{proof}

With these lemmas, we can prove the lemma stated in the text: that the function in Definition
\ref{def:makesmchoices} is indeed a $(\atype,\btype)$-monotonicity function.

\begin{lemma}\label{lem:makesmchoices}
    Let $\atype,\btype$ be simple types.
    Then $\addarg_{\atype,\btype}$ is a $(\atype,\btype)$-monotonicity function.
\end{lemma}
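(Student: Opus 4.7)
The plan is to verify two things: (a) that $\addarg_{\atype,\btype}$ maps every $F \in C_{\atype,\btype}$ to an element of $\typeinterpret{\atype \arrtype \btype}$, i.e.\ to a strongly monotonic function; and (b) that $\addarg_{\atype,\btype}$ is itself strongly monotonic with respect to pointwise comparison on $C_{\atype,\btype}$ and $\typegr{\atype \arrtype \btype}$ on the codomain.

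For (a), I split on which clause of Definition \ref{def:makesmchoices} applies. If $F \in \typeinterpret{\atype \arrtype \btype}$ then $\addarg_{\atype,\btype}(F)(d) = \addcost{\btype}(1,F(d))$, and for $d \typegr{\atype} d'$ strong monotonicity of $F$ gives $F(d) \typegr{\btype} F(d')$, so Lemma \ref{lem:builders}(\ref{lem:builders:addcost2}) yields the desired strict inequality. If instead $F$ is constant, then $\addarg_{\atype,\btype}(F)(d) = \addcost{\btype}(\costof{\atype}(d)+1,F(d))$, and strictness of $\costof{\atype}$ (Lemma \ref{lem:builders}(\ref{lem:builders:costof})) combined with strictness of $\addcost{\btype}$ in its first argument handles the case $d \typegr{\atype} d'$; the analogous $\typegeq{}$ case is equally routine.

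For (b), suppose $F \typegr{} G$ in $C_{\atype,\btype}$ (pointwise), and reason by cases on whether each of $F,G$ lies in $\typeinterpret{\atype \arrtype \btype}$ or in the constant part. The three easy cases (both strongly monotonic; both constant; $F$ constant and $G$ strongly monotonic) each reduce directly to strictness of $\addcost{\btype}$ applied pointwise, using that $\costof{\atype}(d)+1 \geq 1$ where needed. The main obstacle, as noted in the text, is the remaining case where $F \in \typeinterpret{\atype \arrtype \btype}$ and $G$ is constant: there we must compare $\addcost{\btype}(1,F(d))$ with $\addcost{\btype}(\costof{\atype}(d)+1,G(d))$, and the extra $\costof{\atype}(d)$ on the right hand side has to be absorbed somewhere.

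My strategy for this hard case is to exploit strong monotonicity of $F$ together with the machinery built up in Lemmas \ref{lem:addcostprops}--\ref{lem:estimatebycost}. Concretely, Lemma \ref{lem:estimatebycost} gives $d \typegeq{\atype} \addcost{\atype}(\costof{\atype}(d),\nul{\atype})$, so by monotonicity of $F$ and Lemma \ref{lem:addcostincrease} we obtain
\[ F(d) \typegeq{\btype} \addcost{\btype}(\costof{\atype}(d),F(\nul{\atype})). \]
Since $G$ is constant and $F \typegr{} G$ pointwise, $F(\nul{\atype}) \typegr{\btype} G(\nul{\atype}) = G(d)$, whence Lemma \ref{lem:addcostprops}(\ref{lem:addcostprops:addone}) gives $F(\nul{\atype}) \typegeq{\btype} \addcost{\btype}(1,G(d))$. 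Combining these via Lemma \ref{lem:addcostprops}(\ref{lem:addcostprops:addcosttwice}) and monotonicity of $\addcost{\btype}$ yields $F(d) \typegeq{\btype} \addcost{\btype}(\costof{\atype}(d)+1,G(d))$. One final application of Lemma \ref{lem:addcostprops}(\ref{lem:addcostprops:addcostpositive}) gives $\addcost{\btype}(1,F(d)) \typegr{\btype} F(d)$, and transitivity of the extended well-founded structure (Lemma \ref{lemma:typeOrd}) closes the argument pointwise, which is precisely what $\typegr{\atype \arrtype \btype}$ requires.
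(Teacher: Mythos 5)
Your proof is correct and follows essentially the same structure as the paper's: the same split into (a) showing $\addarg_{\atype,\btype}$ maps $C_{\atype,\btype}$ into $\typeinterpret{\atype\arrtype\btype}$ and (b) showing $\addarg_{\atype,\btype}$ is strongly monotonic, with (b) decomposed into the same four subcases, and the hard subcase ($F$ strongly monotonic, $G$ constant) discharged via the same chain of lemmas (Lemma \ref{lem:estimatebycost}, Lemma \ref{lem:addcostincrease}, and the $\addcost$ arithmetic of Lemma \ref{lem:addcostprops}). Your bookkeeping in that hard subcase is slightly rearranged---you invoke Lemma \ref{lem:addcostprops}(\ref{lem:addcostprops:addone}) to turn $F(\nul{\atype}) \typegr{\btype} G(d)$ into a weak inequality with an extra cost unit and then chain, while the paper carries the strict inequality through $\addcost{\btype}$---but these are the same ideas in a different order. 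The one genuine divergence is the subcase with $F$ constant and $G$ strongly monotonic: you prove the required pointwise inequality directly (strictness of $\addcost{\btype}$ in both arguments plus $\costof{\atype}(d)+1 \geq 1$), whereas the paper instead shows this subcase can never arise, by deriving a contradiction: if a constant $F$ pointwise dominated a strongly monotonic $G$, then $\costof{\btype}(F(\nul{\atype}))$ would have to exceed itself, since $\costof{\btype}(G(\addcost{\atype}(m,\nul{\atype})))$ grows with $m$ while $F$'s cost is fixed. Both are sound; your direct proof is a bit more uniform across the four subcases, while the paper's argument is structurally informative in showing the case vacuous.
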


\begin{proof}
First, we must see that $\addarg_{\atype,\btype}$ maps each element of $C_{\atype,\btype}$ to an element of
$\typeinterpret{\atype,\btype}$.  Thus, let $F \in C_{\atype,\btype}$.  There are two cases:
\begin{itemize}
\item $F$ is a constant function in $\typeinterpret{\atype} \arrfunc \typeinterpret{\btype}$.

  Then $\addarg_{\atype,\btype}(F)$ is the function $d \mapsto \addcost{\btype}(\costof{\atype}(d) + 1,F(d))$.
  Since $F \in \typeinterpret{\atype} \arrfunc \typeinterpret{\btype}$ we have $F(d) \in
  \typeinterpret{\btype}$ so $\addcost{\btype}(\costof{\atype}(d) +1,F(d)) \in \typeinterpret{\btype}$ by
  Lemma \ref{lem:builders}(\ref{lem:builders:addcost}); hence, $d \mapsto \addcost{\btype}(\costof{\atype}(d)
  + 1,F(d)) \in \typeinterpret{\atype} \arrfunc \typeinterpret{\btype}$.

  It remains to be seen that this function is (a) weakly monotonic, and (b) strict in its only
  argument.  We show only the latter; the former is very similar.

  Let $x,y \in \typeinterpret{\atype}$ with $x \typegr{\atype} y$.
  Then by Lemma \ref{lem:builders}(\ref{lem:builders:costof}),
  $\costof{\atype}(x) > \costof{\atype}(y)$, which implies $\costof{\atype}(x) + 1 >
      \costof{\atype}(y) + 1$ as well.  Moreover, since $F$ is constant, we have
  $F(x) = F(y)$, so certainly $F(x) \typegeq{\btype} F(y)$.  Thus, by Lemma
  \ref{lem:builders}(\ref{lem:builders:addcost2}), we have $\addcost{\btype}(\costof{\atype}(x) + 1,
  F(x)) \typegr{\btype} \addcost{\btype}(\costof{\atype}(y) + 1,F(y))$.

\item $F$ is a function in $\typeinterpret{\atype \arrtype \btype}$; that is, a strongly monotonic
  function in $\typeinterpret{\atype} \arrfunc \typeinterpret{\btype}$.
  Then $\addarg_{\atype,\btype}(F)$ is the function $d \mapsto \addcost{\btype}(1,F(d))$.  By Lemma
  \ref{lem:builders}(\ref{lem:builders:addcost}) this function is indeed in $\typeinterpret{\atype}
  \arrfunc \typeinterpret{\btype}$.
  To see that it is monotonic, suppose that $x \typegr{\atype} y$; the case for $x \typegeq{\atype}
  y$ is similar.
  Then $F(x) \typegr{\btype} F(y)$ by strong monotonicity of $F$.
  By Lemma \ref{lem:builders}(\ref{lem:builders:addcost2}), $\addcost{\btype}(1,F(x))
  \typegr{\btype} \addcost{\btype}(1,F(y))$ as required.
\end{itemize}

Second, we will see that $\addarg_{\atype,\btype}$ is strongly monotonic.  That is, for $F, G \in
C_{\atype,\btype}$: (a) if $F(x) \typegeq{\btype} G(x)$ for all $x \in \typeinterpret{\atype}$ then
$\addarg_{\atype,\btype}(F) \typegeq{\atype \arrtype \btype} \addarg_{\atype,\btype}(G)$; (b) if
$F(x) \typegr{\btype} G(x)$ for all $x \in \typeinterpret{\atype}$ then $\addarg_{\atype,\btype}(F)
\typegr{\atype \arrtype \btype} \addarg_{\atype,\btype}(G)$.
We will only show (b); the proof of (a) is parallel.
There are four cases to consider:
\begin{itemize}
\item $F,G$ are both constant functions. Then $\addarg_{\atype,\btype}(F) = d \mapsto
  \addcost{\btype}(\costof{\atype}(d)+1,F(d)) \typegr{\atype \arrtype \btype} d \mapsto
  \addcost{\btype}(\costof{\atype}(d)+1,G(d)) = \addarg_{\atype,\btype}(G)$ by
  Lemma \ref{lem:builders}(\ref{lem:builders:addcost2}) and because $F(d) \typegr{\btype} G(d)$.
\item $F,G$ are both in $\typeinterpret{\atype \arrtype \btype}$.
  Then we must see that $d \mapsto \addcost{\btype}(1,F(d)) \typegr{\btype} d
  \mapsto \addcost{\btype}(1,G(d))$, so that $\addcost{\btype}(1,F(d)) \typegr{\btype}
  \addcost{\btype}(1,G(d))$ for all $d$.
  This holds by Lemma \ref{lem:builders}(\ref{lem:builders:addcost2}) because $F(d) \typegr{\btype}
  G(d)$ (by definition of $F \typegr{} G$).
\item $F$ is in $\typeinterpret{\atype \arrtype \btype}$ and $G$ is constant.  Then we must see that
  for all $d \in \typeinterpret{\atype}$ we have:
  $\addcost{\btype}(1,F(d)) \typegr{\btype} \addcost{\btype}(\costof{\atype}(d) + 1,G(d))$.
  By monotonicity of $\addcost{\btype}$ (Lemma \ref{lem:builders}(\ref{lem:builders:addcost2}))
  and by Lemma \ref{lem:addcostprops}(\ref{lem:addcostprops:addcosttwice}) it suffices if
  $F(d) \typegr{\btype} \addcost{\btype}(\costof{\atype}(d),G(d))$.

  So consider a fixed $d$.
  By Lemma \ref{lem:estimatebycost}, $d \typegeq{\atype} \addcost{\atype}(\costof{\atype}(d),
      \nul{\atype})$.
  Hence, by monotonicity of $F$ we have
  $F(d)) \typegeq{\btype} F(\addcost{\atype}(\costof{\atype}(d),\nul{\atype})))$.
  By Lemma \ref{lem:addcostincrease}, then
  $F(\addcost{\atype}(\costof{\atype}(d),\nul{\atype}))
      \typegeq{\btype} \addcost{\btype}(\costof{\atype}(d),F(\nul{\atype}))$.
  by Lemma \ref{lem:builders}(\ref{lem:builders:addcost2}).
  By assumption, $F(\nul{\atype}) \typegr{\btype} G(\nul{\atype})$, and since
  $G$ is a constant function, $G(\nul{\atype}) = G(d)$.
  Hence, $F(d) \typegr{\btype} \addcost{\btype}(\costof{\atype}(d),G(d))$.
\item $F$ is a constant function and $G$ is strongly monotonic.  This actually cannot happen!
              To see this, let $m := \costof{\btype}(F(\nul{\atype}))$.
              Note that $F(\nul{\atype}) = F(\addcost{\atype}(m,\nul{\atype}))$ since $F$ is constant,
              $\typegr{\btype} G(\addcost{\atype}(m,\nul{\atype}))$ since $F \typegr{} G$, which
              $\typegeq{\btype} \addcost{\btype}(m,G(\nul{\atype}))$ by Lemma
              \ref{lem:addcostincrease}.  Hence,
              $F(\nul{\atype}) \typegr{\btype} \addcost{\btype}(m,G(\nul{\atype}))$, so by
              Lemma \ref{lem:builders}(\ref{lem:builders:costof}) we have
              $m = \costof{\btype}(F(\nul{\atype})) > \costof{\btype}(\addcost{\btype}(m,
                  G(\nul{\atype}))) = m + \costof{\btype}(G(\nul{\atype})) \geq m$ by
              Lemma \ref{lem:addcostprops}(\ref{lem:addcostprops:costof}).
              This gives the required contradiction.
    \qedhere
\end{itemize}
\end{proof}

In addition, we can formally prove that both $\beta$- and $\eta$-reduction are oriented.

\makeSMbeta*

\begin{proof}
We have either $\ainterpret{\app{(\abs{\avar}{\aterm})}{\bterm}} =
  \addcost{\btype}(\costof{\atype}(\interpret{\bterm}_\alpha) + 1, \interpret{\aterm}_{\alpha[x := \interpret{\bterm}]})$
or $\ainterpret{\app{(\abs{\avar}{\aterm})}{\bterm}} =
  \addcost{\btype}(1, \interpret{\aterm}_{\alpha[x := \interpret{\bterm}]})$.
By Lemma \ref{lem:addcostprops}(\ref{lem:addcostprops:addcostpositive}) we have
  $\ainterpret{\app{(\abs{\avar}{\aterm})}{\bterm}} \typegr{\btype}
  \interpret{\aterm}_{\alpha[x := \interpret{\bterm}]}$ in both cases.
By Lemma \ref{lemma:sbst:lemma}, $\interpret{\aterm}_{\alpha[x := \interpret{\bterm}]} =
  \ainterpret{\aterm[x:=b]}$.
This completes the proof.
\end{proof}

\makeSMeta*

\begin{proof}[Proof]
Since $F \neq \avar$, we have that $d \mapsto \interpret{\app{F}{\avar}}_{\varinterpret[\avar:=d]} =
d \mapsto \varinterpret(F)(d)$, which by extensionality is exactly $\varinterpret(F)$.  Since
$\varinterpret(F)$ is monotonic by assumption on $\varinterpret$, we have
$\ainterpret{\abs{\avar}{F \, \avar}} = \addarg_{\atype,\btype}(d \mapsto
\interpret{\app{F}{\avar}}_{\varinterpret{\avar:=d}}) = \addarg_{\atype,\btype}(\varinterpret(F)) =
\addcost{\atype,\btype}(1,\varinterpret(F))$.
By Lemma \ref{lem:addcostprops}(\ref{lem:addcostprops:addcostpositive}) this
$\typegr{\atype \arrtype \btype} \varinterpret(F) = \interpret{F}$.
\end{proof}

\subsection{Proofs for Section \ref{sec:createmono}}

\iteration*

\begin{proof}
Let $Q$ indicate the function $(x_1,\dots,x_k,y) \mapsto F(x_1,\dots,x_k)^{G(x_1,\dots,x_k)}(y)$.

First, we note that $Q$ indeed maps to $\typeinterpret{\btype \arrtype \btype}$.  So let
$u_1 \in \typeinterpret{\atype_1},\dots,u_k \in \typeinterpret{\atype_k}$.
Since $F(u_1,\dots,u_k) \in \typeinterpret{\btype \arrtype \btype} \subseteq \typeinterpret{\btype}
\arrfunc \typeinterpret{\btype}$, by definition of repeated function application
$F(u_1,\dots,u_k)^{G(u_1,\dots,u_k)} \in \typeinterpret{\btype} \arrfunc \typeinterpret{\btype}$ as
well.  We must show that for all $v,v' \in \typeinterpret{\btype}$, if $v \typegr{\btype} v'$ then
$Q(u_1,\dots,u_k,v_1) = F(u_1,\dots,u_k)^{G(u_1,\dots,u_k)}(v) \typegr{\btype}
F(u_1,\dots,u_k)^{G(u_1,\dots,u_k)}(v') = Q(u_1,\dots,u_k,v')$.
We will show this by induction on the natural number $G(u_1,\dots,u_k)$.
\begin{itemize}
\item If $G(u_1,\dots,u_k) = 0$ then $F(u_1,\dots,u_k)^{G(u_1,\dots,u_k)}(v) = v \typegr{\btype}
  v'$ by assumption, which $= F(u_1,\dots,u_k)^{G(u_1,\dots,u_k)}(v')$.
\item If $G(u_1,\dots,u_k) = n + 1$ then note that, because $F(u_1,\dots,u_k) \in \typeinterpret{
  \btype \arrtype \btype}$ (so this defines a strongly monotonic function), we have
  $F(u_1,\dots,u_k,v) \typegr{\btype} F(u_1,\dots,u_k,v')$.
  Hence,$F(u_1,\dots,u_k)^{G(u_1,\dots,u_k)}(v) = F(u_1,\dots,u_k)^n(F(u_1,\dots,u_k,v))$
  (by definition), $\typegr{\btype} F(u_1,\dots,u_k)^n(F(u_1,\dots,u_k,v'))$ by the induction
  hypothesis.
  This suffices, as this equals $F(u_1,\dots,u_k)^{G(u_1,\dots,u_k)}(v')$.
\end{itemize}

It remains to be shown that $Q$ is weakly monotonic in its first $k$ arguments.  So suppose $u_1'
\in \typeinterpret{\atype_1},\dots,u_k' \in \typeinterpret{\atype_k}$.  We must show that
$Q(u_1,\dots,u_k) \typegr{\btype \arrtype \btype} Q(u_1',\dots,u_k')$.  We will do this by showing
that (**), for all $n,m$ with $n \geq m$ we have $F(u_1,\dots,u_k)^n \typegeq{\btype \arrtype \btype}
F(u_1',\dots,u_k')^m$.
Then $Q(u_1,\dots,u_k) \typegeq{\btype \arrtype \btype} Q(u_1',\dots,u_k')$ follows because
$G(u_1,\dots,u_k) \geq G(u_1',\dots,u_k')$ (by weak monotonicity of $G$).

To prove (**), we use induction on $n$.
\begin{itemize}
\item If $n = 0$, then also $m = 0$.  For all $v \in \typeinterpret{\btype}$ we have
  $F(u_1,\dots,u_k)^n(v) = v = F(u_1',\dots,u_k')^m$.
\item If $n = i + 1 = m$, then let $v \in \typeinterpret{\btype}$; we must show that
  $F(u_1,\dots,u_k)^i(F(u_1,\dots,u_k,v)) \typegeq{\btype} F(u_1',\dots,u_k')^i(
  F(u_1',\dots,u_k',v)$.
  But we know that $F(u_1,\dots,u_k,v) \typegeq{\btype} F(u_1',\dots,u_k',v)$: this holds because
  $F$ is $\weakset(\wmatypes;\typeinterpret{\btype \arrtype \btype})$.  Since we have already seen
  that, for all $i$, $F(u_1,\dots,u_k)^i \in \typeinterpret{\btype \arrtype \btype}$ and is
  therefore also a weakly monotonic function, \\
  $F(u_1,\dots,u_k)^i(F(u_1,\dots,u_k,v))
  \typegeq{\btype} F(u_1,\dots,u_k)^i(F(u_1',\dots,u_k',v))$.  By the induction hypothesis,
  $F(u_1,\dots,u_k)^i \typegeq{\btype \arrtype \btype} F(u_1',\dots,u_k')^i$.  By definition, this
  means that we have
  $F(u_1,\dots,u_k)^i(F(u_1',\dots,u_k',v)) \typegeq{\btype} F(u_1',\dots,u_k')^i(F(u_1',\dots,u_k',
  v)) = F(u_1',\dots,u_k')^m(v)$.
  We complete by transitivity of $\typegeq{\btype}$.
\item If $n = i + 1$ and $i \geq m$, then let $v \in \typeinterpret{\btype}$; we must show
  $F(u_1,\dots,u_k)^i(F(u_1,\dots,u_k,v)) \typegeq{\btype} F(u_1',\dots,u_k')^m(v)$.
  By assumption on $F$ we have $F(u_1,\dots,u_k,v) \typegeq{\btype} v$.
  As we saw before, $F(u_1,\dots,u_k)^i$ is monotonic, so also
  $F(u_1,\dots,u_k)^i(F(u_1,\dots,u_k,v)) \typegeq{\btype} F(u_1,\dots,u_k)^i(v)$.
  By the induction hypothesis, $F(u_1,\dots,u_k)^i(v) \typegeq{\btype} F(u_1',\dots,u_k')^m(v)$.
  \qedhere
\end{itemize}
\end{proof}

\subsection{Proofs for section \ref{sec:bounds}}
We use the following observation to prove the results from this section.
\begin{claim}\label{claim:bounding-finite-pols}
    If $2 \leq x^1,\dots,x^m$, then $\sum\limits_{i=1}^m x^i \leq \prod\limits_{i=1}^m x_i$.
\end{claim}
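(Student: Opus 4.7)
The plan is to proceed by induction on $m$. For the base case $m = 1$, both sides equal $x^1$, so the inequality holds with equality. (If the convention allows $m = 0$, then the empty sum is $0$ and the empty product is $1$, and $0 \leq 1$.)

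For the inductive step, assume the inequality holds for $m$ factors and consider $x^1, \dots, x^{m+1}$, each at least $2$. Let $P := \prod_{i=1}^m x^i$ and $y := x^{m+1}$; note $P \geq 2^m \geq 2$ and $y \geq 2$. By the induction hypothesis applied to $x^1, \dots, x^m$, we have $\sum_{i=1}^m x^i \leq P$, so
\[
  \sum_{i=1}^{m+1} x^i \;=\; \Bigl(\sum_{i=1}^m x^i\Bigr) + y \;\leq\; P + y.
\]
It therefore suffices to prove $P + y \leq P \cdot y$, which rearranges to $(P-1)(y-1) \geq 1$. Since $P \geq 2$ and $y \geq 2$, both factors are at least $1$, and the inequality holds.

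Chaining these gives $\sum_{i=1}^{m+1} x^i \leq P \cdot y = \prod_{i=1}^{m+1} x^i$, completing the induction.

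\textbf{Main obstacle.} There is no real obstacle here; the only subtlety is making sure the base case is chosen so that the hypothesis $x^i \geq 2$ is genuinely used in the inductive step (namely, to conclude $P \geq 2$ and $y \geq 2$, which is what makes $(P-1)(y-1) \geq 1$). Starting the induction at $m = 1$ with equality, and then invoking the pointwise lower bound in the step, is the cleanest route.
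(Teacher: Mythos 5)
Your proof is correct and follows essentially the same route as the paper's: induction on $m$, with the key step being the two-variable inequality $x+y \le xy$ for $x,y \ge 2$ (which you establish via $(P-1)(y-1)\ge 1$ and the paper establishes by expanding $(2+a)(2+b)$). The only minor addition you make is the explicit observation that the partial product $P$ stays $\ge 2$, which is what lets the induction step go through; the paper leaves that implicit.
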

\begin{claimproof}
    This holds because for $x,y \geq 2$ we have $x + y \leq x * y$ (since $(2+a) + (2+b) = 4 + a + b \leq 4 + 2a + 2b + ab = (2 + a) * (2 + b)$), and by induction on $m$.
\end{claimproof}

\boundingIntData*

\begin{proof}
    \begin{enumerate}
        \item Since the interpretation $\funcinterpret{\sAr{c}} = \pair{P_1, \dots, P_{\typecount{\bsort}}}$ for each constructor $\sAr{c}$ is additive:
        by Definition \ref{def:linear-bol-int},
        for each $\sAr{c} \in \signature$, there exists a constant $a_\sAr{c}$ such that for all $(x^1, \dots, x^m)$,
        $\sum_{l=1}^{\typecount{\bsort}} P_l(x^1, \dots, x^m) \leq a_{\sAr{c}} + \sum_{i=1}^m \sum_{j=1}^{\typecount{\bsort}} x_j^i$.
        Let us set $a$ to be the maximum of such $a_{\sAr{c}}$, so for the sum of components $P_l$ of $\funcinterpret{\sAr{c}}$ we have:
        \begin{equation}\label{eq:bound-components-constructor-int}
            \sum_{l = 1}^{\typecount{\bsort}} P_l(x^1, \dots, x^m) \leq a + \sum_{i=1}^m \sum_{j=1}^{\typecount{\asort_i}} x_j^i.
        \end{equation}
        We prove by induction on the size of $s :: \bsort$ that $\sum_{l=1}^{\typecount{\bsort}} \interpret{s}_l \leq a * |s|$.
        Then certainly $\interpret{s}_l \leq a * |s|$ holds for any component $\interpret{s}_l$, and the first part of the lemma holds.

        For the base case, $|s| = 1$, $s$ is a constant $\sAr{c}$ and $\sum_{l=1}^{\typecount{\asort}} \interpret{\sAr{c}}_l \leq a_\sAr{c} \leq a$,
        by assumption (\ref{eq:bound-components-constructor-int})

        Let $|s| > 1$; then $s = \sAr{c}(d_1, \dots, d_m)$ and using (\ref{eq:bound-components-constructor-int}) above,
        we can expand the sum, as follows:
        \begin{align*}
           \sum_{l=1}^{\typecount{\bsort}} \interpret{\sAr{c}(d_1, \dots, d_m)}_l &= \sum_{l=1}^{\typecount{\bsort}} P_l(\interpret{d_1}, \dots, \interpret{d_m})\\
           &\stackrel{(\ref{eq:bound-components-constructor-int})}{\leq}
              a + \sum_{i=1}^m \sum_{j=1}^{\typecount{\asort_i}} \interpret{d_i}_j \\
           &\stackrel{(IH)}{\leq} a + \sum_{i=1}^m a * |d_i|\\
           &= a * \left(1 + \sum_{i=1}^m |d_i| \right)\\
           &= a * |s|.
        \end{align*}

        Hence, we are done choosing $b := a$.

        \item The proof follows the same structure as before:
        by Definition \ref{def:linear-bol-int},
        each $\funcinterpret{\sAr{c}} = \pair{P_1, \dots, P_{\typecount{\bsort}}}$ is now linearly bounded;
        that is, for each $\sAr{c} \in \signature$, there exists a constant $a_{\sAr{c}}$ such that for all $(x^1, \dots, x^m)$
        we have
        $P_l(x^1,\dots,x^m) \leq a_{\sAr{c}} * (1 + \sum_{i=1}^m\sum_{j=1}^{\typecount{\asort_i}} x^i_j)$.
        Let us set, as before, $a$ to be the maximum of such $a_{\sAr{c}}$ and define $k = \max(2,\max_i \typecount{\asort_i})$.
        Notice that $k$ is determined when we define the interpretation's domain, so it does not depend on the size of $s$.

        We prove by induction on the size of $s$ that $\interpret{s}_l \leq 2^{(a * k) * |s| }$,
        for each component $P_l$ of $\interpret{s}$.
        In the base case, where $s$ is a constant constructor,
        we have that
        $\interpret{\sAr{c}}_l \leq a_{\sAr{c}} \leq a * k < 2^{a * k}$ follows trivially.
        For the inductive step, we have $s = \sAr{c}(d_1, \dots, d_m)$.
        Then:
        \begin{align*}
            P_l(\interpret{d_1}, \dots, \interpret{d_m}) &
            \leq a_{\sAr{c}} * (1 + \sum\limits_{i = 1}^m\sum\limits_{j = 1}^{\typecount{\asort_i}} \interpret{d_i}_j) \\
            & \leq a * (2 * \sum\limits_{i = 1}^m\sum\limits_{j = 1}^{\typecount{\asort_i}} \interpret{d_i}_j)\ \text{because}\ 1 + z \leq 2z\ \text{for}\ z \geq 1 \\
            & \stackrel{(IH)}{\leq} 2 * a * \sum\limits_{i = 1}^m \left( \sum\limits_{j = 1}^{\typecount{\asort_i}} 2^{a*k*|d_i|} \right) \\
            & \leq 2 * a * k * \sum\limits_{i = 1}^m 2^{a*k*|d_i|} \\
            & \leq (2 * a * k) * \prod\limits_{i = 1}^m 2^{a*k*|d_i|}\ \text{by claim (\ref{claim:bounding-finite-pols})} \\
            & \leq 2^{a * k} * \prod\limits_{i = 1}^m \left(2^{a * k * |d_i| } \right)\ \text{because}\ 2z \leq 2^z\ \text{if}\ z \geq 2\\
            & = 2^{(a * k)} * 2^{(a * k) * \sum\limits_{i = 1}^m |d_i| }                                                              \\
            & = 2^{a * k \left( 1 + \sum\limits_{i = 1}^m |d_i|\right) }                                                      \\
            & = 2^{(a * k) * |s| }.
        \end{align*}

        Hence, we are done choosing $b := a * k$.
    \end{enumerate}
\end{proof}

The proofs for items in Corollary \ref{col:bound-basic-func-pol} follow from the same strategy used above and utilize the bounds established for data terms.

\end{document}